\titleformat{\subsubsection}[runin]{\normalfont\bfseries}{\thesubsubsection.}{3pt}{}[.]
\def\R{{\mathbb R}}
\def\N{{\mathbb N}}
\def\p{{\mathbb P}}
\def\E{{\mathbb E}}
\def\F{{\mathbb F}}
\newcounter{thm}
\newcounter{ex}
\newcounter{re}
\newtheorem{Theorem}[thm]{Theorem}
\newtheorem{Corollary}[thm]{Corollary}
\theoremstyle{definition}
\newtheorem{Assumption}[thm]{Assumption}
\newtheorem*{Remark*}{Remark}
\newtheorem*{FurtherRemarks*}{Further remarks}
\newtheorem*{KeyObservation}{Key observation}
\newcommand{\footremember}[2]{%
   \footnote{#2}
    \newcounter{#1}
    \setcounter{#1}{\value{footnote}}%
}
\newcommand{\footrecall}[1]{%
    \footnotemark[\value{#1}]%
} 
\author{Philipp Harms\thanks{Institute of Mathematics, Albert Ludwigs University of Freiburg, 79104 Freiburg, Germany.} \and David Stefanovits\footremember{ds}{Department of Mathematics, ETH Zurich, 8092 Zurich, Switzerland} \and Josef Teichmann\footrecall{ds} \and Mario V.~W\"uthrich\footrecall{ds}${}\;{}$\thanks{Swiss Finance Institute SFI, Walchestrasse 9, 8006 Zurich, Switzerland}}
\date{August 31, 2016}
\title{Consistent Re-Calibration of the Discrete-Time Multifactor Vasi\v{c}ek Model\thanks{We gratefully acknowledge support by ETH Foundation and SNF grant 149879. We thank Dr.\ Hansj\"org Furrer for supporting this SNF project. Correspondence: \href{philipp.harms@stochastik.uni-freiburg.de}{philipp.harms@stochastik.uni-freiburg.de}.}}
\begin{document}
\maketitle
\begin{abstract}
\noindent The discrete-time multifactor Vasi\v{c}ek model is a tractable Gaussian spot rate model. Typically, two- or three-factor versions allow one to capture the dependence structure between yields with different times to maturity in an appropriate way. In practice, re-calibration of the model to the prevailing market conditions leads to model parameters that change over time. Therefore, the model parameters should be understood as being time-dependent or even stochastic. Following the consistent re-calibration (CRC) approach, we construct models as concatenations of yield curve increments of Hull--White extended multifactor Vasi\v{c}ek models with different parameters. The CRC approach provides attractive tractable models that preserve the no-arbitrage premise. As~a numerical example, we fit Swiss interest rates using CRC multifactor Vasi\v{c}ek models.
\end{abstract}

\section{Introduction}
The tractability of affine models, such as the Vasi\v{c}ek \cite{Vasicek} and the Cox--Ingersoll--Ross \cite{CIR} models, has made them appealing for term structure modeling. Affine term structure models are based on a (multidimensional) factor process, which in turn describes the evolution of the spot rate and the bank account processes. No-arbitrage arguments then provide the corresponding zero-coupon bond prices, yield curves and forward rates. Prices in these models are calculated under an equivalent martingale measure for known static model parameters. However, model parameters typically vary over time as financial market conditions change. They may, for instance, be of a regime switching nature and need to be permanently re-calibrated to the actual financial market conditions. In practice, this re-calibration is done on a regular basis (as new information becomes available). This implies that model parameters are not static and, henceforth, may also be understood as stochastic processes. The re-calibration should preserve the no-arbitrage condition, which provides side constraints in the re-calibration. The aim of this work is to discuss these side constraints with the help of the discrete-time multifactor Vasi\v cek interest rate model, which is a tractable, but also flexible model. We show that re-calibration under the side constraints naturally leads to Heath--Jarrow--Morton \cite{HJM} models with stochastic parameters, which we call consistent re-calibration (CRC) models \cite{Harms}.

These models are attractive in financial applications for several reasons. In risk management and in the current regulatory framework \cite{BIS}, one needs realistic and tractable models of portfolio returns. Our approach provides tractable non-Gaussian models for multi-period returns on bond portfolios. Moreover, stress tests for risk management purposes can be implemented efficiently in our framework by selecting suitable models for the parameter process. While an in-depth market study of the performance of CRC models remains to be done, we provide in this paper some evidence of improved fits.

The paper is organized as follows. In Section~\ref{sec: hwe}, we introduce Hull--White extended discrete-time multifactor Vasi\v cek models, which are the building blocks for CRC in this work. We define CRC of the Hull--White extended multifactor Vasi\v cek model in Section~\ref{sec: crc}. Section~\ref{sec: real world dynamics} specifies the market price of risk assumptions used to model the factor process under the real-world probability measure and the equivalent martingale measure, respectively. In Section~\ref{sec: parameters}, we deal with parameter estimation from market data. In Section~\ref{sec: numerical example}, we fit the model to Swiss interest rate data, and in Section~\ref{sec: conclusion}, we conclude. All proofs are presented in Appendix~\ref{sec: proofs}.

\section[Discrete-Time Multifactor Vasicek Model and Hull--White Extension]{Discrete-Time Multifactor Vasi\v{c}ek Model and Hull--White Extension} \label{sec: hwe}
\subsection{Setup and Notation}

Choose a fixed grid size $\Delta>0$ and consider the discrete-time grid $\{0,\Delta, 2\Delta, 3\Delta, \ldots \} = \N_0\Delta$. For~example, a daily grid corresponds to $\Delta=1/252$ if there are 252 business days per year. Choose a (sufficiently rich) filtered probability space
$(\Omega,{\cal F},\F,\p^\ast)$ with discrete-time filtration $\F=({\cal F}(t))_{t\in\N_0}$, where $t\in \N_0$ refers to time point $t\Delta$. Assume that $\p^\ast$ denotes an equivalent martingale measure for a (strictly positive) bank account numeraire $(B(t))_{t\in\N_0}$. $B(t)$ denotes the value at time $t\Delta$ of an investment of one unit of currency at Time $0$ into the bank account (i.e., the risk-free rollover relative to $\Delta$).

We use the following notation. Subscript indices refer to elements of vectors and matrices. Argument indices refer to time points. We denote the $n\times n$ identity matrix by $\mathds{1}\in\R^{n\times n}$. We also introduce the vectors $\boldsymbol 1=\left(1,\ldots,1\right)^\top\in\R^n$ and $\boldsymbol e_1=(1,0,\ldots,0)^\top\in\R^n$.

\subsection[Discrete-Time Multifactor Vasicek Model]{Discrete-Time Multifactor Vasi\v{c}ek Model}
\label{discrete-time one-factor Vasicek model}
We choose $n\in\N$ fixed and introduce the $n$-dimensional $\F$-adapted factor process:
\[
\boldsymbol X=\left(\boldsymbol X(t)\right)_{t\in\mathbb N_0}=\left(X_1(t),\ldots,X_n(t)\right)^\top_{t\in\mathbb N_0},
\] 
which generates the spot rate and bank account processes as follows: 
\begin{equation}\label{eq: spot rate}
r(t)=\boldsymbol 1^\top\boldsymbol X(t)\quad\text{and}\quad B(t)=\exp\left\{\Delta\sum_{s=0}^{t-1}r(s)\right\},
\end{equation}
where $t\in\mathbb N_0$; empty sums are set equal to zero. The factor process $\boldsymbol X$ is assumed to evolve under $\p^\ast$ according to: 
\begin{equation}\label{eq: ARn}
\boldsymbol X(t)=\boldsymbol b+\beta\boldsymbol X(t-1)+\Sigma^{\frac{1}{2}}\boldsymbol\varepsilon^\ast(t),\quad t>0,
\end{equation}
with initial factor $\boldsymbol X(0)\in\R^n$, $\boldsymbol b\in\mathbb R^n$, $\beta\in\mathbb R^{n\times n}$, $\Sigma^{\frac{1}{2}}\in\mathbb R^{n\times n}$ and $(\boldsymbol \varepsilon^\ast(t))_{t\in\mathbb N}=(\varepsilon_1^\ast(t),\ldots,\varepsilon_n^\ast(t))^\top_{t\in\mathbb N}$ being $\mathbb F$-adapted. The following assumptions are in place throughout the paper. 
\begin{Assumption}\label{assumption}
We assume that the spectrum of matrix $\beta$ is a subset of $(-1,1)^n$ and that matrix $\Sigma^{\frac{1}{2}}$ is non-singular. Moreover, for each $t\in\N$, we assume that $\boldsymbol\varepsilon^\ast(t)$ is independent of $\mathcal F(t-1)$ under $\p^\ast$ and has standard normal distribution $\boldsymbol \varepsilon^\ast(t)\stackrel{\mathbb P^\ast}{\sim}\mathcal N(\boldsymbol 0, \mathds{1})$.
\end{Assumption}

\begin{Remark*}
In Assumption~\ref{assumption}, the condition on matrix $\beta$ ensures that $\mathds{1}-\beta$ is invertible and that the geometric series generated by $\beta$ converges. The condition on $\Sigma^{\frac{1}{2}}$ ensures that $\Sigma=\Sigma^{\frac{1}{2}}(\Sigma^{\frac{1}{2}})^\top$ is symmetric positive definite. Under Assumption~\ref{assumption}, Equation \eqref{eq: ARn} defines a stationary process; see \cite{TS}, Section 11.3. 	
\end{Remark*}

The model defined by Equations \eqref{eq: spot rate} and \eqref{eq: ARn} is called the discrete-time multifactor Vasi\v cek model. Under the above model assumptions, we have for $m>t$:
\begin{equation}\label{eq: AR conditional distribution}
\boldsymbol X(m)|\mathcal F(t)\stackrel{\mathbb P^\ast}{\sim}\mathcal N\left(\left(\mathds{1}-\beta\right)^{-1}\left(\mathds{1}-\beta^{m-t}\right)\boldsymbol b+\beta^{m-t}\boldsymbol X(t),\sum_{s=0}^{m-t-1}\beta^s\Sigma(\beta^\top)^s\right).
\end{equation}

\begin{Remark*} 
For $m>t$, the conditional distribution of $\boldsymbol X(m)$, given ${\cal F}(t)$, depends only on the value $\boldsymbol X(t)$ at time $t\Delta$ and on lag $m-t$. In other words, the factor process \eqref{eq: ARn} is a time-homogeneous Markov~process.
\end{Remark*}
 
At time $t\Delta$, the price of the zero-coupon bond (ZCB) with maturity date $m \Delta >t\Delta$ with respect to filtration $\F$ and equivalent martingale measure $\p^\ast$ is given by:
\begin{equation*}
P(t,m)=\E^\ast \left[\left.\frac{B(t)}{B(m)}\right|{\cal F}(t)\right]
=\E^\ast \left[\left. \exp \left\{-\Delta\sum_{s=t}^{m-1}\boldsymbol1^\top\boldsymbol X(s) \right\}\right|{\cal F}(t)\right].
\end{equation*}

For the proof of the following result, see Appendix \ref{sec: proofs}. 
\begin{Theorem}\label{theo: ARn prices}
The ZCB prices in the discrete-time multifactor Vasi\v cek Models \eqref{eq: spot rate} and \eqref{eq: ARn} with respect to filtration $\F$ and equivalent martingale measure $\p^\ast$ have an affine term structure:
\[
P(t,m)=e^{A(t,m)-\boldsymbol B(t,m)^\top\boldsymbol X(t)},\quad m>t,
\]
with $A(m-1,m)=0$, $\boldsymbol B(m-1,m)=\boldsymbol 1\Delta$ and for $m-1>t\geq0$:
\[
\begin{aligned}
A(t,m)&=A(t+1,m)-\boldsymbol B(t+1,m)^\top\boldsymbol b+\frac{1}{2}\boldsymbol B(t+1,m)^\top\Sigma\boldsymbol B(t+1,m),\\
\boldsymbol B(t,m)&=\left(\mathds{1}-\beta^\top\right)^{-1}\left(\mathds{1}-(\beta^\top)^{m-t}\right)\boldsymbol 1\Delta.
\end{aligned}
\]
\end{Theorem}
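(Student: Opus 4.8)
The plan is to proceed by backward induction on $t$, running from the maturity $m$ downwards, and to repeatedly invoke the tower property of conditional expectation together with the Gaussian structure of the driving noise. Throughout I keep the affine form $P(t,m)=e^{A(t,m)-\boldsymbol B(t,m)^\top\boldsymbol X(t)}$ as the induction hypothesis and read off the coefficient functions $A(\cdot,m)$ and $\boldsymbol B(\cdot,m)$ along the way.

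For the base case $t=m-1$, note that $\boldsymbol X(m-1)$ is $\mathcal F(m-1)$-measurable, so $P(m-1,m)=\E^\ast[\exp\{-\Delta\boldsymbol 1^\top\boldsymbol X(m-1)\}\,|\,\mathcal F(m-1)]=\exp\{-\Delta\boldsymbol 1^\top\boldsymbol X(m-1)\}$, which is of the claimed form with $A(m-1,m)=0$ and $\boldsymbol B(m-1,m)=\boldsymbol 1\Delta$. For the inductive step I would split off the $s=t$ summand, which is $\mathcal F(t)$-measurable, and condition on $\mathcal F(t+1)$ inside the remaining expectation. The tower property then gives $P(t,m)=e^{-\Delta\boldsymbol 1^\top\boldsymbol X(t)}\,\E^\ast[\,P(t+1,m)\,|\,\mathcal F(t)]$, and substituting the induction hypothesis for $P(t+1,m)$ reduces the problem to evaluating $\E^\ast[\exp\{-\boldsymbol B(t+1,m)^\top\boldsymbol X(t+1)\}\,|\,\mathcal F(t)]$.

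Here I would insert the dynamics \eqref{eq: ARn}, writing $\boldsymbol B(t+1,m)^\top\boldsymbol X(t+1)=\boldsymbol B(t+1,m)^\top\boldsymbol b+\boldsymbol B(t+1,m)^\top\beta\boldsymbol X(t)+\boldsymbol B(t+1,m)^\top\Sigma^{1/2}\boldsymbol\varepsilon^\ast(t+1)$. Conditionally on $\mathcal F(t)$ the only random term is $\boldsymbol\varepsilon^\ast(t+1)\sim\mathcal N(\boldsymbol 0,\mathds{1})$, which is independent of $\mathcal F(t)$ by Assumption~\ref{assumption}; applying the Gaussian Laplace transform $\E^\ast[e^{\boldsymbol a^\top\boldsymbol\varepsilon^\ast(t+1)}]=e^{\tfrac12\boldsymbol a^\top\boldsymbol a}$ with $\boldsymbol a=-(\Sigma^{1/2})^\top\boldsymbol B(t+1,m)$ produces the quadratic term $\tfrac12\boldsymbol B(t+1,m)^\top\Sigma\boldsymbol B(t+1,m)$, using $\Sigma=\Sigma^{1/2}(\Sigma^{1/2})^\top$. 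Collecting the terms that are constant and those linear in $\boldsymbol X(t)$, and matching with the affine form of $P(t,m)$, yields exactly the stated recursion for $A(t,m)$ together with $\boldsymbol B(t,m)=\beta^\top\boldsymbol B(t+1,m)+\boldsymbol 1\Delta$.

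It remains to solve this linear recursion for $\boldsymbol B$. Since it is time-homogeneous, $\boldsymbol B(t,m)$ depends only on the lag $m-t$, and unrolling from the base value $\boldsymbol B(m-1,m)=\boldsymbol 1\Delta$ gives the geometric sum $\boldsymbol B(t,m)=\sum_{j=0}^{m-t-1}(\beta^\top)^j\boldsymbol 1\Delta$. The spectral condition on $\beta$ in Assumption~\ref{assumption} guarantees that $\mathds{1}-\beta^\top$ is invertible, so the finite geometric series collapses to $(\mathds{1}-\beta^\top)^{-1}(\mathds{1}-(\beta^\top)^{m-t})\boldsymbol 1\Delta$, as claimed. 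I expect no deep obstacle here: the argument is essentially bookkeeping, and the one place demanding care is the clean separation of the $\mathcal F(t)$-measurable factors from the independent Gaussian increment so that the Laplace transform applies, together with tracking the transposes correctly when moving $\beta$ across the inner product.
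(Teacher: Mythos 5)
Your proposal is correct and follows essentially the same route as the paper's own proof: backward induction with the base case $P(m-1,m)=\exp\{-\boldsymbol 1^\top\boldsymbol X(m-1)\Delta\}$, the tower property to reduce to $\E^\ast[P(t+1,m)\,|\,\mathcal F(t)]$, the Gaussian Laplace transform applied to the independent innovation $\boldsymbol\varepsilon^\ast(t+1)$, coefficient matching to obtain the recursion $\boldsymbol B(t,m)=\beta^\top\boldsymbol B(t+1,m)+\boldsymbol 1\Delta$, and the geometric-series closed form via the spectral condition on $\beta$. All steps, including the handling of $\Sigma=\Sigma^{\frac12}(\Sigma^{\frac12})^\top$ and the transposes, match the paper's argument.
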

In the discrete-time multifactor Vasi\v{c}ek Models \eqref{eq: spot rate} and \eqref{eq: ARn}, the term structure of interest rates (yield curve) takes the following form at time $t\Delta$ for maturity dates $m\Delta>t\Delta$:
\begin{equation}\label{eq: ARn yields}
Y(t,m)=-\frac{1}{(m-t)\Delta}\log P(t,m)
=-\frac{A(t,m)}{(m-t)\Delta}+\frac{\boldsymbol B(t,m)^\top\boldsymbol X(t)}{(m-t)\Delta},
\end{equation}
with the spot rate at time $t\Delta$ given by $Y(t,t+1)=\boldsymbol 1^\top\boldsymbol X(t)=r(t)$.

\subsection[Hull--White Extended Discrete-Time Multifactor Vasicek Model]{Hull--White Extended Discrete-Time Multifactor Vasi\v{c}ek Model}
The possible shapes of the Vasi\v{c}ek yield curve \eqref{eq: ARn yields} are restricted by the choice of the parameters $\boldsymbol b\in\R^n$, $\beta\in\R^{n\times n}$ and $\Sigma\in\R^{n\times n}$. These parameters are not sufficiently flexible to exactly calibrate the model to an arbitrary observed initial yield curve. Therefore, we consider the Hull--White extended version (see \cite{HW1994}) of the discrete-time multifactor Vasi\v{c}ek model. We replace the factor process defined in \eqref{eq: ARn} as follows. For fixed $k\in\N_0$, let $\boldsymbol X^{(k)}$ satisfy: 
\begin{equation}\label{eq: ARn+}
\boldsymbol X^{(k)}(t)=\boldsymbol b+\theta(t-k)\boldsymbol e_1+\beta\boldsymbol X^{(k)}(t-1)+\Sigma^{\frac{1}{2}}\boldsymbol\varepsilon^\ast(t),\quad t>k,
\end{equation}
with starting factor $\boldsymbol X^{(k)}(k)\in\R^n$, $\boldsymbol e_1=(1,0,\ldots,0)^\top\in\R^n$ and function $\theta:\N\rightarrow\R$. Model assumption~\eqref{eq: ARn+} corresponds to \eqref{eq: ARn}, where the first component of $\boldsymbol b$ is replaced by the time-dependent coefficient $(b_1+\theta(i))_{i\in\N}$ and all other terms ceteris paribus. Without loss of generality, we choose the first component for this replacement. Note that parameter $b_1$ is redundant in this model specification, but for didactical reasons, it is used below. The time-dependent coefficient $\theta$ is called the \emph{Hull--White extension}, and it is used to calibrate the model to a given yield curve at a given time point $k\Delta$. The upper index $^{(k)}$ denotes that time point and corresponds to the time shift we apply to the Hull--White extension $\theta$ in Model~\eqref{eq: ARn+}. The factor process $\boldsymbol X^{(k)}$ generates the spot rate process and the bank account process as in \eqref{eq: spot rate}.

The model defined by (\ref{eq: spot rate}, \ref{eq: ARn+}) is called the Hull--White extended discrete-time multifactor Vasi\v cek model. Under these model assumptions, we have for $m>t\geq k$:
\[
\boldsymbol X^{(k)}(m)|\mathcal F(t)\stackrel{\mathbb P^\ast}{\sim}\mathcal N\left(\sum_{s=0}^{m-t-1}\beta^s\left(\boldsymbol b+\theta(m-s-k)\boldsymbol e_1\right)+\beta^{m-t}\boldsymbol X^{(k)}(t),\sum_{s=0}^{m-t-1}\beta^s\Sigma(\beta^\top)^s\right).
\]

\begin{Remark*}
For $m>t\geq k$, the conditional distribution of $\boldsymbol X^{(k)}(m)$, given ${\cal F}(t)$, depends only on the factor $\boldsymbol X^{(k)}(t)$ at time $t\Delta$. In this case, factor process \eqref{eq: ARn+} is a time-inhomogeneous Markov process. Note that the upper index $^{(k)}$ in the notation is important since the conditional distribution depends explicitly on the lag $m-k$.
\end{Remark*}

\begin{Theorem}\label{theo: ARn+ prices}
The ZCB prices in the Hull--White extended discrete-time multifactor Vasi\v cek model (\ref{eq: spot rate}, \ref{eq: ARn+}) with respect to filtration $\mathbb F$ and equivalent martingale measure $\p^\ast$ have affine term structure:
\[
P^{(k)}(t,m)=e^{A^{(k)}(t,m)-\boldsymbol B(t,m)^\top\boldsymbol X^{(k)}(t)},\quad m>t\geq k,
\]
with $\boldsymbol B(t,m)$ as in Theorem \ref{theo: ARn prices}, $A^{(k)}(m-1,m)=0$ and for $m-1>t\geq k$:
\[
\begin{aligned}
A^{(k)}(t,m)&=A^{(k)}(t+1,m)-\boldsymbol B(t+1,m)^\top\left(\boldsymbol b+\theta(t+1-k)\boldsymbol e_1\right)\\&\quad+\frac{1}{2}\boldsymbol B(t+1,m)^\top\Sigma\boldsymbol B(t+1,m).
\end{aligned}
\]
\end{Theorem}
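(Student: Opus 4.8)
The plan is to prove the statement by backward induction on $t$, descending from $t=m-1$ to $t=k$, paralleling the argument behind Theorem~\ref{theo: ARn prices}. The key structural observation is that the Hull--White extension enters the dynamics~\eqref{eq: ARn+} only through the \emph{additive} drift term $\theta(t-k)\boldsymbol e_1$; since this term is deterministic, it will contribute to the recursion for $A^{(k)}$ but leave the recursion for $\boldsymbol B$ untouched. This is precisely why $\boldsymbol B(t,m)$ can be taken verbatim from Theorem~\ref{theo: ARn prices}.

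For the base case $t=m-1$ I would compute $P^{(k)}(m-1,m)=\E^\ast[B(m-1)/B(m)\,|\,\mathcal F(m-1)]$ directly. By~\eqref{eq: spot rate}, the ratio $B(m)/B(m-1)=\exp\{\Delta\boldsymbol 1^\top\boldsymbol X^{(k)}(m-1)\}$ is $\mathcal F(m-1)$-measurable, so $P^{(k)}(m-1,m)=\exp\{-\Delta\boldsymbol 1^\top\boldsymbol X^{(k)}(m-1)\}$, which reads off as $A^{(k)}(m-1,m)=0$ and $\boldsymbol B(m-1,m)=\boldsymbol 1\Delta$.

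For the inductive step I would use the tower property together with the $\mathcal F(t)$-measurability of $B(t)/B(t+1)=\exp\{-\Delta\boldsymbol 1^\top\boldsymbol X^{(k)}(t)\}$ to write
\[
P^{(k)}(t,m)=e^{-\Delta\boldsymbol 1^\top\boldsymbol X^{(k)}(t)}\,\E^\ast\!\left[\left.P^{(k)}(t+1,m)\,\right|\,\mathcal F(t)\right].
\]
Substituting the induction hypothesis $P^{(k)}(t+1,m)=\exp\{A^{(k)}(t+1,m)-\boldsymbol B(t+1,m)^\top\boldsymbol X^{(k)}(t+1)\}$ and then the dynamics~\eqref{eq: ARn+} for $\boldsymbol X^{(k)}(t+1)$, the exponent splits into an $\mathcal F(t)$-measurable part and the linear noise term $-\boldsymbol B(t+1,m)^\top\Sigma^{\frac{1}{2}}\boldsymbol\varepsilon^\ast(t+1)$. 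The remaining conditional expectation is a Gaussian moment generating function: since $\boldsymbol\varepsilon^\ast(t+1)\stackrel{\p^\ast}{\sim}\mathcal N(\boldsymbol 0,\mathds 1)$ is independent of $\mathcal F(t)$ by Assumption~\ref{assumption}, it evaluates to $\exp\{\tfrac12\boldsymbol B(t+1,m)^\top\Sigma\boldsymbol B(t+1,m)\}$, where I use $\Sigma=\Sigma^{\frac{1}{2}}(\Sigma^{\frac{1}{2}})^\top$.

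Collecting the terms linear in $\boldsymbol X^{(k)}(t)$ and the constant terms separately and matching coefficients then delivers both recursions at once. The linear part gives $\boldsymbol B(t,m)^\top=\Delta\boldsymbol 1^\top+\boldsymbol B(t+1,m)^\top\beta$, which is identical to the recursion in Theorem~\ref{theo: ARn prices} and unfolds via a geometric sum into the stated closed form using that $\mathds 1-\beta^\top$ is invertible (Assumption~\ref{assumption}); the constant part yields the claimed recursion for $A^{(k)}$, now carrying the extra summand $-\boldsymbol B(t+1,m)^\top\theta(t+1-k)\boldsymbol e_1$. There is no genuine obstacle here: the only points requiring care are to keep the $\theta$-contribution attached to the constant ($A^{(k)}$) recursion rather than to the linear ($\boldsymbol B$) recursion, and to confirm that the Gaussian integration is justified by the independence and nonsingularity conditions of Assumption~\ref{assumption}.
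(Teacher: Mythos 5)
Your proposal is correct and follows essentially the same route as the paper, which proves this theorem by the same backward induction used for Theorem~\ref{theo: ARn prices}: tower property, substitution of the dynamics~\eqref{eq: ARn+}, evaluation of the Gaussian moment generating function via Assumption~\ref{assumption}, and matching of the constant and linear coefficients, with the deterministic drift $\theta(t+1-k)\boldsymbol e_1$ absorbed into the $A^{(k)}$ recursion while the $\boldsymbol B$ recursion is unchanged. The only cosmetic remark is that the nonsingularity of $\Sigma^{\frac{1}{2}}$ is not actually needed for the Gaussian integration step (independence and normality suffice), but this does not affect the validity of your argument.
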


In the Hull--White extended discrete-time multifactor Vasi\v{c}ek model (\ref{eq: spot rate}, \ref{eq: ARn+}), the yield curve takes the following form at time $t\Delta$ for maturity dates $ m\Delta>t\Delta\geq k\Delta$:
\begin{equation}\label{eq: ARn+ yields}
Y^{(k)}(t,m)=-\frac{1}{(m-t)\Delta}\log P^{(k)}(t,m)
=-\frac{A^{(k)}(t,m)}{(m-t)\Delta}+\frac{\boldsymbol B(t,m)^\top\boldsymbol X^{(k)}(t)}{(m-t)\Delta},
\end{equation}
with spot rate at time $t\Delta$ given by $Y^{(k)}(t,t+1)=\boldsymbol 1^\top\boldsymbol X^{(k)}(t)$.

\begin{Remark*}
Note that the coefficient $\boldsymbol B(t,m)$ in Theorem \ref{theo: ARn+ prices} is not affected by the Hull--White extension $\theta$ and depends solely on $m-t$, whereas the coefficient $A^{(k)}(t,m)$ depends explicitly on the Hull--White extension $\theta$.	
\end{Remark*}

\subsection{Calibration of the Hull--White Extended Model}
We consider the term structure model defined by the Hull--White extended factor process $\boldsymbol X^{(k)}$ and calibrate the Hull--White extension $\theta\in\R^\N$ to a given yield curve at time point $k\Delta$. We explicitly introduce the time index $k$ in Model~\eqref{eq: ARn+} because the CRC algorithm is a concatenation of multiple Hull--White extended models, which are calibrated at different time points $k\Delta$, see Section~\ref{sec: crc} below. 

Assume that there is a fixed final time to maturity date $M\Delta$ and that we observe at time $k\Delta$ the yield curve $\widehat{\boldsymbol{y}}(k)\in\R^M$ for maturity dates $(k+1)\Delta,\ldots,(k+M)\Delta$. For these maturity dates, the Hull--White extended discrete-time multifactor Vasi\v{c}ek yield curve at time $k\Delta$, given by Theorem \ref{theo: ARn+ prices}, reads as: 
\begin{equation*}
\boldsymbol{y}^{(k)}(k)=
\left(
-\frac{1}{i\Delta}A^{(k)}(k,k+i)+\frac{1}{i\Delta}\boldsymbol B(k,k+i)^\top\boldsymbol X^{(k)}(k)\right)^\top_{i=1,\ldots,M} \in \R^M.
\end{equation*}

For given starting factor $\boldsymbol X^{(k)}(k)\in\mathbb R^n$ and parameters $\boldsymbol b\in\mathbb R^n$,
$\beta\in\mathbb R^{n\times n}$ and $\Sigma\in\mathbb R^{n\times n}$, our aim is to choose the Hull--White extension $\theta\in\mathbb R^\N$ such that we get an exact fit at time $k\Delta$ to the yield curve $\widehat{\boldsymbol{y}}(k)$, that is,
\begin{equation}\label{eq: calibration theta}
\boldsymbol{y}^{(k)}(k)=\widehat{\boldsymbol{y}}(k).
\end{equation}

The following theorem provides an equivalent condition to \eqref{eq: calibration theta}, which allows one to calculate the Hull--White extension $\theta\in\R^\N$ explicitly. 
\begin{Theorem}\label{theo: calibration}
Denote by $\boldsymbol{y}^{(k)}(k)$ the yield curve at time $k\Delta$ obtained from the Hull--White extended discrete-time multifactor Vasi\v{c}ek Model (\ref{eq: spot rate}, \ref{eq: ARn+}) for given starting factor $\boldsymbol X^{(k)}(k)=\boldsymbol x\in\mathbb R^n$, parameters $\boldsymbol b\in\mathbb R^n$, $\beta\in\mathbb R^{n\times n}$ and $\Sigma\in\mathbb R^{n\times n}$ and Hull--White extension $\theta\in\R^\N$. For given $\boldsymbol y\in\R^M$, identity $\boldsymbol{y}^{(k)}(k)=\boldsymbol y$ holds if and only if the Hull--White extension $\boldsymbol\theta$ fulfills: 
\begin{equation}\label{eq: calibration matrix}
\boldsymbol{\theta}={\cal C}(\beta)^{-1} \boldsymbol{z}\left(\boldsymbol b,\beta,\Sigma, \boldsymbol x,\boldsymbol y\right),
\end{equation}
where $\boldsymbol\theta=(\theta_i)_{i=1,\ldots,M-1}^\top\in\R^{M-1}$, ${\cal C}(\beta)=\left({\cal C}_{ij}(\beta)\right)_{i,j=1,\ldots,M-1}\in\R^{(M-1)\times(M-1)}$ and \\$\boldsymbol{z}\left(\boldsymbol b,\beta,\Sigma, \boldsymbol x,\boldsymbol y\right)=\left(z_i\left(\boldsymbol b,\beta,\Sigma, \boldsymbol x,\boldsymbol y\right)\right)_{i=1,\ldots,M-1}^\top\in\R^{M-1}$ are defined by:
\[
\begin{aligned}
\theta_i&=\theta(i),\\
{\cal C}_{ij}(\beta)&=B_1(k+j, k+i+1)~1_{\{ j \le i \}},\\
z_i\left(\boldsymbol b,\beta,\Sigma, \boldsymbol x,\boldsymbol y\right)&=\sum_{s=k+1}^{k+i}\left(\frac{1}{2}\boldsymbol B(s,k+i+1)^\top\Sigma\boldsymbol B(s,k+i+1)-\boldsymbol B(s,k+i+1)^\top\boldsymbol b\right)\\&\qquad\qquad\qquad-\boldsymbol 1^\top\left(\mathds{1}-\beta^{i+1}\right)\left(\mathds{1}-\beta\right)^{-1}\boldsymbol x\Delta+(i+1) y_{i+1}(k)\Delta,
\end{aligned}
\]
with $i,j=1,\ldots,M-1$ and $\boldsymbol B(\cdot,\cdot)=\left(B_1(\cdot,\cdot),\ldots,B_n(\cdot,\cdot)\right)^\top$ given by Theorem \ref{theo: ARn prices}.
\end{Theorem}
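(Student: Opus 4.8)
The plan is to read the calibration identity $\boldsymbol y^{(k)}(k)=\boldsymbol y$ componentwise off the yield formula \eqref{eq: ARn+ yields} and to turn it into a triangular linear system for $\boldsymbol\theta$. First I would unfold the backward recursion for $A^{(k)}(\cdot,\cdot)$ in Theorem~\ref{theo: ARn+ prices}. Starting from the terminal value $A^{(k)}(m-1,m)=0$ and telescoping down to $t=k$, I obtain for $m=k+i+1$ the closed form
\[
A^{(k)}(k,k+i+1)=\sum_{s=k+1}^{k+i}\left(\tfrac12\boldsymbol B(s,k+i+1)^\top\Sigma\boldsymbol B(s,k+i+1)-\boldsymbol B(s,k+i+1)^\top\bigl(\boldsymbol b+\theta(s-k)\boldsymbol e_1\bigr)\right),
\]
the only subtlety being the index shift $s=t+1$, under which the Hull--White term $\theta(t+1-k)$ becomes $\theta(s-k)$.

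Next I would record the calibration identity for each maturity $k+i+1$. By \eqref{eq: ARn+ yields}, the $(i+1)$-th component of $\boldsymbol y^{(k)}(k)$ equals $y_{i+1}(k)$ iff, after multiplying through by $(i+1)\Delta$,
\[
A^{(k)}(k,k+i+1)=\boldsymbol B(k,k+i+1)^\top\boldsymbol x-(i+1)\,y_{i+1}(k)\,\Delta.
\]
Substituting the telescoped expression and separating the $\theta$-dependent terms, I would use $\boldsymbol B^\top\boldsymbol e_1=B_1$ to identify
\[
\sum_{s=k+1}^{k+i}B_1(s,k+i+1)\,\theta(s-k)=\sum_{j=1}^{i}B_1(k+j,k+i+1)\,\theta_j=\bigl(\mathcal C(\beta)\boldsymbol\theta\bigr)_i ,
\]
where the indicator $1_{\{j\le i\}}$ renders $\mathcal C$ lower triangular. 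The remaining terms should reassemble into $z_i$; the one identity to verify is $\boldsymbol B(k,k+i+1)^\top\boldsymbol x=\Delta\,\boldsymbol 1^\top(\mathds 1-\beta^{i+1})(\mathds 1-\beta)^{-1}\boldsymbol x$, which follows by transposing the formula $\boldsymbol B(k,k+i+1)=(\mathds 1-\beta^\top)^{-1}(\mathds 1-(\beta^\top)^{i+1})\boldsymbol 1\Delta$ from Theorem~\ref{theo: ARn prices} and noting that $(\mathds 1-\beta)^{-1}$ and $\beta^{i+1}$ commute as polynomials in $\beta$. This yields the system $\mathcal C(\beta)\boldsymbol\theta=\boldsymbol z(\boldsymbol b,\beta,\Sigma,\boldsymbol x,\boldsymbol y)$ for $i=1,\dots,M-1$.

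Finally I would establish invertibility of $\mathcal C(\beta)$. Since $\mathcal C_{ij}(\beta)=B_1(k+j,k+i+1)\,1_{\{j\le i\}}$, the matrix is lower triangular, and its diagonal entries are $\mathcal C_{ii}(\beta)=B_1(k+i,k+i+1)=\Delta$, using $\boldsymbol B(m-1,m)=\boldsymbol 1\Delta$ from Theorem~\ref{theo: ARn prices}. Hence $\det\mathcal C(\beta)=\Delta^{M-1}\neq0$, so $\mathcal C(\beta)$ is invertible and the system is equivalent to $\boldsymbol\theta=\mathcal C(\beta)^{-1}\boldsymbol z$, as claimed.

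I expect the main obstacle to be purely the bookkeeping: keeping the successive index shifts $t\mapsto s\mapsto j$ and the summation ranges consistent so that the $\theta$-terms align exactly with the entries of $\mathcal C$, together with the single matrix identity for the $\boldsymbol x$-term. I would also flag that the maturity-$(k+1)$ (spot-rate) equation reads $\boldsymbol 1^\top\boldsymbol x=y_1(k)$ and carries no $\theta$-dependence, so strictly the stated equivalence presupposes that this first component already matches; the $M-1$ equations for $i=1,\dots,M-1$ then determine $\boldsymbol\theta$ uniquely.
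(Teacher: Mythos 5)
Your proposal is correct and follows essentially the same route as the paper's proof: both unfold the recursion for $A^{(k)}(\cdot,\cdot)$ from Theorem \ref{theo: ARn+ prices} (the paper does this component-by-component, solving for $\theta_1$ and then $\theta_{i+1}$ recursively, while you telescope once and write the whole system), match against \eqref{eq: ARn+ yields} via $\boldsymbol B(k,k+i+1)^\top\boldsymbol x=\boldsymbol 1^\top(\mathds 1-\beta^{i+1})(\mathds 1-\beta)^{-1}\boldsymbol x\Delta$, and invert the lower triangular matrix ${\cal C}(\beta)$ using $\det{\cal C}(\beta)=\Delta^{M-1}\neq 0$. Your closing remark that the first yield component $\boldsymbol 1^\top\boldsymbol x=y_1(k)$ carries no $\theta$-dependence is a valid refinement that the paper leaves implicit in its opening observation that $\boldsymbol y^{(k)}(k)=\boldsymbol y$ constrains only $\theta(1),\ldots,\theta(M-1)$.
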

Theorem \ref{theo: calibration} shows that the Hull--White extension can be calculated by inverting the \mbox{$(M-1)\times(M-1)$} lower triangular positive definite matrix ${\cal C}(\beta)$. 

\section{Consistent Re-Calibration}\label{sec: crc}
The crucial extension now is the following: we let parameters $\boldsymbol b$, $\beta$ and $\Sigma$ vary over time, and we re-calibrate the Hull--White extension in a consistent way at each time point, that is according to the actual choice of the parameter values using Theorem \ref{theo: calibration}. Below, we show that this naturally leads to a Heath--Jarrow--Morton \cite{HJM} (HJM) approach to term structure modeling. 

\subsection{Consistent Re-Calibration Algorithm} \label{Re-calibration algorithm} 
Assume that $\left(\boldsymbol b(k)\right)_{k\in\mathbb N_0}$, $(\beta(k))_{k\in \N_0}$ and $(\Sigma(k))_{k\in \N_0}$ are $\F$-adapted parameter processes with $\beta(k)$ and $\Sigma(k)$ satisfying Assumption \ref{assumption}, $\p^\ast$-a.s., for all $k\in\N_0$. Based on these parameter processes, we define the $n$-dimensional $\F$-adapted CRC factor process $\boldsymbol{\mathcal X}$, which evolves according to Steps (i)--(iv) of the CRC algorithm described below. Thus, factor process $\boldsymbol{\mathcal X}$ will define a spot rate model similar to \eqref{eq: spot rate}. 

 In the CRC algorithm, Steps \ref{subsubsec crc step 1}--\ref{subsubsec crc step 3} below are executed iteratively.

\subsubsection[Initialization k=0]{Initialization $k=0$} \label{subsubsec crc step 1}

Assume that the initial yield curve observation at Time 0 is given by $\widehat{\boldsymbol{y}}(0)\in\R^{M}$. Let $\theta^{(0)}\in\R^\N$ be an ${\cal F}(0)$-measurable Hull--White extension, such that condition \eqref{eq: calibration theta} is satisfied at Time $0$ for initial factor $\boldsymbol{\mathcal X}(0)\in\R^n$ and parameters $\boldsymbol b(0)$, $\beta(0)$ and $\Sigma(0)$. By Theorem \ref{theo: calibration}, the values $\boldsymbol \theta^{(0)}=(\theta^{(0)}(i))_{i=1,\ldots,M-1}\in\R^{M-1}$ are given by:
\begin{equation*}
\boldsymbol{\theta}^{(0)} = 
{\cal C}\left(\beta(0)\right)^{-1} \boldsymbol{z}\left(\boldsymbol b(0),\beta(0),\Sigma(0),\boldsymbol{\cal X}(0),\widehat{\boldsymbol{y}}(0)\right).
\end{equation*}

This provides Hull--White extended Vasi\v{c}ek yield curve $\boldsymbol y^{(0)}(0)$ identically equal to $\widehat{\boldsymbol{y}}(0)$ for given initial factor $\boldsymbol{\cal X}(0)$ and parameters $\boldsymbol b(0)$, $\beta(0)$, $\Sigma(0)$.

\subsubsection[Increments of the Factor Process from k to k+1]{Increments of the Factor Process from $k \to k+1$}\label{subsubsec crc step 2} 

Assume factor $\boldsymbol{\cal X}(k)$, parameters $\boldsymbol b(k),\beta(k)$ and $\Sigma(k)$ and Hull--White extension $\theta^{(k)}$ are given. Define the Hull--White extended model $\boldsymbol X^{(k)}=(\boldsymbol X^{(k)}(t))_{t\geq k}$ by:
\begin{equation}\label{re-calibration step 1}
\boldsymbol X^{(k)}(t)=\boldsymbol b(k)+\theta^{(k)}(t-k)\boldsymbol e_1+\beta(k)\boldsymbol X^{(k)}(t-1) + \Sigma(k)\boldsymbol\varepsilon^\ast(t),\quad t>k,
\end{equation}
with starting value $\boldsymbol X^{(k)}(k)=\boldsymbol{\cal X}(k)$, ${\cal F}(k)$-measurable parameters $\boldsymbol b(k)$, $\beta(k)$ and $\Sigma(k)$ and Hull--White extension $\theta^{(k)}$. We update the factor process $\boldsymbol{\cal X}$ at time $(k+1)\Delta$ according to the $\boldsymbol X^{(k)}$-dynamics, that is, we set:
\[
\boldsymbol{\cal X}(k+1)=\boldsymbol X^{(k)}(k+1).
\]

This provides ${\cal F}(k+1)$-measurable yield curve at time $(k+1)\Delta$ for maturity dates \mbox{$ m\Delta>(k+1)\Delta$:}
\begin{equation*}
Y^{(k)}(k+1,m)=-\frac{A^{(k)}(k+1,m)}{(m-(k+1))\Delta}+\frac{\boldsymbol B^{(k)}(k+1,m)^\top\boldsymbol{\cal X}(k+1)}{(m-(k+1))\Delta},
\end{equation*}
with $A^{(k)}(m-1,m)=0$ and $\boldsymbol B^{(k)}(m-1,m)=\Delta\boldsymbol 1$, and recursively for $m-1>t\geq k$:
\[
\begin{aligned}
A^{(k)}(t,m)&=A^{(k)}(t+1,m)-\boldsymbol B^{(k)}(t+1,m)^\top\left(\boldsymbol b(k)+\theta^{(k)}(t+1-k)\boldsymbol e_1\right)\\&\quad+\frac{1}{2}\boldsymbol B^{(k)}(t+1,m)^\top\Sigma(k)\boldsymbol B^{(k)}(t+1,m),\\
\boldsymbol B^{(k)}(t,m)&=\left(\mathds{1}-\beta(k)^\top\right)^{-1}\left(\mathds{1}-(\beta(k)^\top)^{m-t}\right)\boldsymbol 1\Delta.
\end{aligned}
\]

This is exactly the no-arbitrage price under $\p^\ast$ if the parameters $\boldsymbol b(k)$, $\beta(k)$ and $\Sigma(k)$ and the Hull--White extension $\theta^{(k)}$ remain constant for all $t>k$.

\subsubsection[Parameter Update and Re-Calibration at k+1]{Parameter Update and Re-Calibration at $k+1$}\label{subsubsec crc step 3} 

 Assume that at time $(k+1)\Delta$, the parameters $(\boldsymbol b(k),\beta(k),\Sigma(k))$ are updated to $(\boldsymbol b(k+1),\beta(k+1),\Sigma(k+1))$. We may think of this parameter update as a consequence of model selection after we observe a new yield curve at time $(k+1)\Delta$. This is discussed in more detail in Section~\ref{sec: parameters} below.
The no-arbitrage yield curve at time $(k+1)\Delta$ from the model with parameters $(\boldsymbol b(k),\beta(k),\Sigma(k))$ and Hull--White extension $\theta^{(k)}$ is given by: 
\[
\boldsymbol{y}^{(k)}(k+1)=\left(Y^{(k)}(k+1,k+2),\ldots,Y^{(k)}(k+1,k+1+M)\right)^\top\in\R^M.
\]

The parameter update $\left(\boldsymbol b(k), \beta(k),\Sigma(k)\right) \mapsto \left(\boldsymbol b(k+1), \beta(k+1),\Sigma(k+1)\right)$ requires re-calibration of the Hull--White extension, otherwise arbitrage is introduced into the model. This re-calibration provides ${\cal F}(k+1)$-measurable Hull--White extension $\theta^{(k+1)}\in\R^\N$ at time $(k+1)\Delta$. The values $\boldsymbol \theta^{(k+1)}=(\theta^{(k+1)}(i))_{i=1,\ldots,M-1}\in\R^{M-1}$ are given by (see Theorem \ref{theo: calibration}):
\begin{equation}\label{eq: re-calibration step 2}
\boldsymbol\theta^{(k+1)}={\cal C}\left(\beta(k+1)\right)^{-1} \boldsymbol{z}\left(\boldsymbol b(k+1),\beta(k+1),\Sigma(k+1),\boldsymbol{\cal X}(k+1),\boldsymbol{y}^{(k)}(k+1)\right),
\end{equation}
and the resulting yield curve $\boldsymbol{y}^{(k+1)}(k+1)$ under the updated parameters is identically equal to $\boldsymbol{y}^{(k)}(k+1)$. Note that this CRC makes the upper index $(k)$ in the yield curve superfluous, because the Hull--White extension is re-calibrated to the new parameters, such that the resulting yield curve remains unchanged. Therefore, we write ${\cal Y}(k,\cdot)$ in the sequel for the CRC yield curve with factor $\boldsymbol{\cal X}(k)$, parameters $\boldsymbol b(k),\beta(k),\Sigma(k)$ and Hull--White extension $\theta^{(k)}$.

(End of algorithm.)

\begin{Remark*}
For the implementation of the above algorithm, we need to consider the following issue. Assume we start the algorithm at Time $0$ with initial yield curve $\widehat{\boldsymbol{y}}(0)\in\R^M$. At times $ k\Delta$, for $k>0$, calibration of $\boldsymbol\theta^{(k)}\in\R^{M-1}$ requires yields with times to maturity beyond $ M\Delta$. Either yields for these times to maturity are observable, and the length of $\boldsymbol{\theta}^{(k)}$ is reduced in every step of the CRC algorithm or an appropriate extrapolation method beyond the latest available maturity date is applied in every~step.	
\end{Remark*}

\subsection{Heath--Jarrow--Morton Representation}
We analyze the yield curve dynamics $({\cal Y}(k,\cdot))_{k\in\N_0}$
obtained by the CRC algorithm of Section~\ref{Re-calibration algorithm}.
Due to re-calibration \eqref{eq: re-calibration step 2},
the yield curve fulfills the following identity for $m>k+1$:
\begin{equation}\label{eq: crc yields}
\begin{aligned}
{\cal Y}(k+1,m)&=-\frac{A^{(k)}(k+1,m)}{(m-(k+1))\Delta}+\frac{\boldsymbol B^{(k)}(k+1,m)^\top\boldsymbol{\cal X}(k+1)}{(m-(k+1))\Delta}\\&=-\frac{A^{(k+1)}(k+1,m)}{(m-(k+1))\Delta}+\frac{\boldsymbol B^{(k+1)}(k+1,m)^\top\boldsymbol{\cal X}(k+1)}{(m-(k+1))\Delta},
\end{aligned}
\end{equation}
where the first line is based on the ${\cal F}(k)$-measurable parameters
$(\boldsymbol b(k),\beta(k),\Sigma(k))$ and Hull--White extension $\theta^{(k)}$, and the second line is based on the ${\cal F}(k+1)$-measurable parameters and Hull--White extension $(\boldsymbol b(k+1),\beta(k+1),\Sigma(k+1),\theta^{(k+1)})$ after CRC Step (iii). Note that in the re-calibration only $(\boldsymbol b(k+1),\beta(k+1),\Sigma(k+1))$ can be chosen exogenously, and the Hull--White extension $\theta^{(k+1)}$ is used for consistency property \eqref{eq: re-calibration step 2}. Our aim is to express ${\cal Y}(k+1,m)$ as a function of $\boldsymbol{\cal X}(k)$ and ${\cal Y}(k,m)$. Using Equations \eqref{re-calibration step 1} and \eqref{eq: crc yields}, we have for $m>k+1$:
\begin{equation} \label{start HJM}
\begin{aligned}
{\cal Y}(k+1,m)&\left(m-(k+1)\right) \Delta=-A^{(k)}(k+1,m)\\&+\boldsymbol B^{(k)}(k+1,m)^\top \left(\boldsymbol b(k)+\theta^{(k)}(1)\boldsymbol e_1+\beta(k)\boldsymbol{\cal X}(k)+\Sigma(k)^{\frac{1}{2}}\boldsymbol\varepsilon^\ast(k+1)
\right).
\end{aligned}
\end{equation}

This provides the following theorem; see Appendix \ref{sec: proofs} for the proof.
\begin{Theorem} \label{theorem HJM view}
Under equivalent martingale measure $\p^\ast$, the yield curve dynamics $({\cal Y}(k,\cdot))_{k\in\N_0}$ obtained by the CRC algorithm of Section~\ref{Re-calibration algorithm} has the following HJM representation for $m>k+1$: 
\[
\begin{aligned}
{\cal Y}(k+1,m)(m-(k+1))\Delta &={\cal Y}(k,m)(m-k)\Delta- {\cal Y}(k,k+1)\Delta\\&\quad+\frac{1}{2}\boldsymbol B^{(k)}(k+1,m)^\top\Sigma(k)\boldsymbol B^{(k)}(k+1,m)\\&\quad+\boldsymbol B^{(k)}(k+1,m)^\top\Sigma(k)^{\frac{1}{2}}\boldsymbol\varepsilon^\ast(k+1),
\end{aligned}
\]
with $\boldsymbol B^{(k)}(k+1,m)=\left(\mathds{1}-\beta^\top(k)\right)^{-1}\left(\mathds{1}-(\beta(k)^\top)^{m-k-1}\right)\boldsymbol 1\Delta$.
\end{Theorem}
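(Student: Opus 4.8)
The plan is to start from Equation \eqref{start HJM}, which already expresses ${\cal Y}(k+1,m)(m-(k+1))\Delta$ in terms of $A^{(k)}(k+1,m)$, the drift $\boldsymbol b(k)+\theta^{(k)}(1)\boldsymbol e_1$, and the one-step factor dynamics. The right-hand side of the theorem involves ${\cal Y}(k,m)$ and ${\cal Y}(k,k+1)$ in place of $A^{(k)}(k+1,m)$ and $\boldsymbol{\cal X}(k)$, so the whole argument is a matter of re-expressing the ${\cal F}(k)$-measurable quantities via the affine term structure of Theorem \ref{theo: ARn+ prices} together with the backward recursion for $A^{(k)}$.

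First I would apply the recursion for $A^{(k)}(t,m)$ at $t=k$ to solve for $A^{(k)}(k+1,m)$ in terms of $A^{(k)}(k,m)$, and substitute the result into \eqref{start HJM}. The drift contribution $\boldsymbol B^{(k)}(k+1,m)^\top(\boldsymbol b(k)+\theta^{(k)}(1)\boldsymbol e_1)$ coming from the factor dynamics then cancels exactly against the identical term produced by the recursion, while the quadratic term $\tfrac12\boldsymbol B^{(k)}(k+1,m)^\top\Sigma(k)\boldsymbol B^{(k)}(k+1,m)$ survives with the correct sign. After this step one is left with $-A^{(k)}(k,m)$, the quadratic variance term, the noise term $\boldsymbol B^{(k)}(k+1,m)^\top\Sigma(k)^{1/2}\boldsymbol\varepsilon^\ast(k+1)$, and a factor-dependent term $\boldsymbol B^{(k)}(k+1,m)^\top\beta(k)\boldsymbol{\cal X}(k)$.

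Next I would invoke the affine yield representation \eqref{eq: ARn+ yields} at $t=k$ (which holds for the CRC yield curve after re-calibration at time $k\Delta$) to write $-A^{(k)}(k,m)={\cal Y}(k,m)(m-k)\Delta-\boldsymbol B^{(k)}(k,m)^\top\boldsymbol{\cal X}(k)$. This produces the desired term ${\cal Y}(k,m)(m-k)\Delta$ and leaves two $\boldsymbol{\cal X}(k)$-linear contributions, namely $-\boldsymbol B^{(k)}(k,m)^\top\boldsymbol{\cal X}(k)$ and $\boldsymbol B^{(k)}(k+1,m)^\top\beta(k)\boldsymbol{\cal X}(k)$.

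The main computational step, and the only place where anything nontrivial happens, is to show that these two contributions collapse to $-{\cal Y}(k,k+1)\Delta$. Using the explicit formula $\boldsymbol B^{(k)}(t,m)=(\mathds{1}-\beta(k)^\top)^{-1}(\mathds{1}-(\beta(k)^\top)^{m-t})\boldsymbol 1\Delta$, I would transpose both vectors and exploit the commutativity of $\beta(k)$ with $(\mathds{1}-\beta(k))^{-1}$ to obtain
\[
-\boldsymbol B^{(k)}(k,m)^\top+\boldsymbol B^{(k)}(k+1,m)^\top\beta(k)=\Delta\boldsymbol 1^\top\bigl[-(\mathds{1}-\beta(k)^{m-k})+(\beta(k)-\beta(k)^{m-k})\bigr](\mathds{1}-\beta(k))^{-1}.
\]
The two powers $\beta(k)^{m-k}$ cancel, the bracket simplifies to $-(\mathds{1}-\beta(k))$, and the inverse is annihilated, leaving exactly $-\Delta\boldsymbol 1^\top$. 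Since ${\cal Y}(k,k+1)=\boldsymbol 1^\top\boldsymbol{\cal X}(k)$ is the spot rate, the combined factor term equals $-{\cal Y}(k,k+1)\Delta$, which completes the identity. The invertibility of $\mathds{1}-\beta(k)$ needed throughout is guaranteed by Assumption \ref{assumption}. I expect this telescoping cancellation of the $m$-dependent powers to be the subtle point; everything else is bookkeeping.
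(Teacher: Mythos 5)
Your proposal is correct and follows essentially the same route as the paper's proof: both start from \eqref{start HJM}, eliminate the drift $\boldsymbol B^{(k)}(k+1,m)^\top(\boldsymbol b(k)+\theta^{(k)}(1)\boldsymbol e_1)$ via the recursion for $A^{(k)}$ at $t=k$, rewrite $-A^{(k)}(k,m)$ through the affine representation of ${\cal Y}(k,m)$, and reduce the two $\boldsymbol{\cal X}(k)$-linear terms to $-\Delta\boldsymbol 1^\top\boldsymbol{\cal X}(k)=-{\cal Y}(k,k+1)\Delta$. The only cosmetic difference is that you verify the key identity $\boldsymbol B^{(k)}(k+1,m)^\top\beta(k)=\boldsymbol B^{(k)}(k,m)^\top-\boldsymbol 1^\top\Delta$ from the closed form $\left(\mathds{1}-\beta(k)^\top\right)^{-1}\left(\mathds{1}-(\beta(k)^\top)^{m-t}\right)\boldsymbol 1\Delta$ using commutativity, whereas the paper uses the geometric-sum representation and an index shift.
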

\begin{KeyObservation}
Observe that in Theorem \ref{theorem HJM view}, a remarkable simplification happens. Simulating the CRC algorithm \eqref{re-calibration step 1} and \eqref{eq: re-calibration step 2} to future time points $k\Delta>0$ does not require the calculation of the Hull--White extensions $(\theta^{(k)})_{k\in \N_0}$ according to \eqref{eq: re-calibration step 2},
but the knowledge of the parameter process $\left(\boldsymbol b(k),\beta(k), \Sigma(k)\right)_{k\in \N_0}$ is sufficient. The
Hull--White extensions are fully encoded in the yield curve process $({\cal Y}(k,\cdot))_{k\in\mathbb N_0}$, and we can avoid the inversion of (potentially) high dimensional matrices ${\cal C}(\beta(k))_{k\in\mathbb N_0}$.	
\end{KeyObservation}

\begin{FurtherRemarks*}
\begin{itemize}[leftmargin=*,labelsep=4mm]
\item CRC of the multifactor Vasi\v cek spot rate model can be defined directly in the HJM framework assuming a stochastic dynamics for the parameters. However, solely from the HJM representation, one cannot see that the yield curve dynamics is obtained, in our case, by combining well-understood Hull--White extended multifactor Vasi\v cek spot rate models using the CRC algorithm of Section~\ref{sec: crc}; that is, the Hull--White extended multifactor Vasi\v cek model gives an explicit functional form to the HJM representation.
\item The CRC algorithm of Section~\ref{sec: crc} does not rely directly on $(\boldsymbol\varepsilon^\ast(t))_{t\in\N}$ having independent and Gaussian components. The CRC algorithm is feasible as long as explicit formulas for ZCB prices in the Hull--White extended model are available. Therefore, one may replace the Gaussian innovations by other distributional assumptions, such as normal variance mixtures. This replacement is possible provided that conditional exponential moments can be calculated under the new innovation assumption. Under non-Gaussian innovations, it will no longer be the case that the HJM representation does not depend on the Hull--White extension $\theta^{(k)}\in\R^\N$. 
\item Interpretation of the parameter processes will be given in Section~\ref{sec: parameters}, below.
\end{itemize}
\end{FurtherRemarks*}

\section{Real World Dynamics and Market Price of Risk} \label{sec: real world dynamics}

All previous derivations were done under an equivalent martingale measure $\p^\ast$ for the bank account numeraire. In order to statistically estimate parameters from market data, we need to specify a Girsanov transformation to the real-world measure, which is denoted by $\mathbb P$. We present a specific change of measure, which provides tractable spot rate dynamics under $\mathbb P$. Assume that $(\boldsymbol\lambda(k))_{k\in\mathbb N_0}$ and $(\Lambda(k))_{k\in\mathbb N_0}$ are $\mathbb R^n$- and $\mathbb R^{n\times n}$-valued $\mathbb F$-adapted processes, respectively. Let $(\boldsymbol{\cal X}(k))_{k\in\mathbb N_0}$ be the factor process obtained by the CRC algorithm of Section~\ref{Re-calibration algorithm}. Then, we assume that the $n$-dimensional $\F$-adapted process $(\boldsymbol{\lambda}(k)+\Lambda(k)\boldsymbol{\cal X}(k))_{k\in \mathbb N_0}$ describes the market price of risk dynamics. We define the following $\p^\ast$-density process:
$(\xi(k))_{k\in \mathbb N_0}$
\[
\xi(k)=\exp\left\{-\frac{1}{2}\sum_{s=0}^{k-1}\left\|\boldsymbol\lambda(s)+\Lambda(s)\boldsymbol {\cal X}(s)\right\|_2^2+\sum_{s=0}^{k-1}\left(\boldsymbol\lambda(s)+\Lambda(s)\boldsymbol {\cal X}(s)\right)^\top \boldsymbol\varepsilon^{*}(s+1)\right\},\quad k \in \mathbb N_0.
\]

The real-world probability measure $\p$ is then defined by the Radon--Nikodym derivative:
\begin{equation}\label{transformation}
\left.\frac{d\p}{d\p^\ast}\right|_{\mathcal F(k)}= \xi(k),\quad k \in \mathbb N_0.
\end{equation}

An immediate consequence is that for $k \in \mathbb N_0$:
\begin{equation*}
\boldsymbol\varepsilon(k+1)=\boldsymbol\lambda(k)+
\Lambda(k)\boldsymbol{\cal X}(k)+\boldsymbol\varepsilon^\ast(k+1),
\end{equation*}
has a standard Gaussian distribution under $\p$, conditionally on ${\cal F}(k)$.
This implies that under the real-world measure $\p$, the factor process
$(\boldsymbol{\cal X}(k))_{k\in \mathbb N_0}$ is described by:
\begin{equation}\label{eq: real world dynamics}
\boldsymbol{\cal X}(k+1)=\boldsymbol a(k)+\alpha(k)\boldsymbol{\cal X}(k)+\Sigma(k)^{\frac{1}{2}}\boldsymbol\varepsilon(k+1),
\end{equation}
where we define:
\begin{equation}\label{eq: real world transform}
\boldsymbol a(k)=\boldsymbol b(k)+\theta^{(k)}(1)\boldsymbol e_1-\Sigma(k)^{\frac{1}{2}}\boldsymbol\lambda(k)
\quad \text{and }\quad \alpha(k)=\beta(k)-\Sigma(k)^{\frac{1}{2}}\Lambda(k).
\end{equation}

As in Assumption~\ref{assumption}, we require $\Lambda(k)$ to be such that the spectrum of $\alpha(k)$ is a subset of $(-1,1)^n$. Formula \eqref{eq: real world dynamics} describes the dynamics of the factor process $(\boldsymbol{\cal X}(k))_{k\in \mathbb N_0}$ obtained by the CRC algorithm of Section~\ref{Re-calibration algorithm} under real-world measure $\p$. The following corollary describes the yield curve dynamics obtained by the CRC algorithm under $\p$, in analogy to Theorem \ref{theorem HJM view}.

\begin{Corollary}\label{HJM under the real world measure}
Under real-world measure $\p$ satisfying \eqref{transformation}, the yield curve dynamics $({\cal Y}(k,\cdot))_{k\in\N_0}$ obtained by the CRC algorithm of Section~\ref{Re-calibration algorithm} has the following HJM representation for $m>k+1$: 
\[
\begin{aligned}
{\cal Y}(k+1,m)
\left(m-(k+1)\right)\Delta &={\cal Y}(k,m)(m-k)\Delta- {\cal Y}(k,k+1)\Delta\\&\quad+\frac{1}{2}\boldsymbol B^{(k)}(k+1,m)^\top\Sigma(k)\boldsymbol B^{(k)}(k+1,m)\\&\quad-\boldsymbol B^{(k)}(k+1,m)^\top\Sigma(k)^{\frac{1}{2}}\boldsymbol\lambda(k)\\&\quad-\boldsymbol B^{(k)}(k+1,m)^\top\Sigma(k)^{\frac{1}{2}}\Lambda(k)\boldsymbol {\cal X}(k)\\&\quad+\boldsymbol B^{(k)}(k+1,m)^\top\Sigma(k)^{\frac{1}{2}}\boldsymbol\varepsilon(k+1),
\end{aligned}
\]
with $\boldsymbol B^{(k)}(k+1,m)=\left(\mathds{1}-\beta(k)^\top\right)^{-1}\left(\mathds{1}-\left(\beta(k)^\top\right)^{m-k-1}\right)\boldsymbol 1\Delta$.
\end{Corollary}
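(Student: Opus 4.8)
The plan is to obtain the corollary directly from the HJM representation of Theorem~\ref{theorem HJM view} by substituting the Girsanov change of innovations, rather than re-deriving the dynamics from scratch. First I would observe that the identity in Theorem~\ref{theorem HJM view} is a \emph{pathwise} algebraic relation among the random variables ${\cal Y}(k+1,m)$, ${\cal Y}(k,m)$, ${\cal Y}(k,k+1)$ and $\boldsymbol\varepsilon^\ast(k+1)$, valid $\p^\ast$-a.s. Since $\p$ and $\p^\ast$ are equivalent by \eqref{transformation}, the same identity holds $\p$-a.s., so I may work with it under $\p$ without change.

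Next I would isolate the only term depending on the innovations, namely $\boldsymbol B^{(k)}(k+1,m)^\top\Sigma(k)^{\frac12}\boldsymbol\varepsilon^\ast(k+1)$, and insert the defining relation $\boldsymbol\varepsilon^\ast(k+1)=\boldsymbol\varepsilon(k+1)-\boldsymbol\lambda(k)-\Lambda(k)\boldsymbol{\cal X}(k)$ established just above the corollary. Expanding the product splits this single term into the $\p$-innovation term $\boldsymbol B^{(k)}(k+1,m)^\top\Sigma(k)^{\frac12}\boldsymbol\varepsilon(k+1)$ together with the two drift corrections $-\boldsymbol B^{(k)}(k+1,m)^\top\Sigma(k)^{\frac12}\boldsymbol\lambda(k)$ and $-\boldsymbol B^{(k)}(k+1,m)^\top\Sigma(k)^{\frac12}\Lambda(k)\boldsymbol{\cal X}(k)$, which are exactly the third, fourth and fifth lines of the claimed representation.

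All remaining terms are free of the innovations and therefore carry over verbatim: the carried-forward yields ${\cal Y}(k,m)(m-k)\Delta-{\cal Y}(k,k+1)\Delta$ and the convexity term $\tfrac12\boldsymbol B^{(k)}(k+1,m)^\top\Sigma(k)\boldsymbol B^{(k)}(k+1,m)$ are unchanged, and $\boldsymbol B^{(k)}(k+1,m)$ retains the same closed form as in Theorem~\ref{theorem HJM view}. Collecting the pieces yields the stated formula.

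Since the manipulation is a one-line substitution, there is no genuine analytic obstacle; the work is purely bookkeeping. The only point deserving care is interpretational rather than computational: to call the result an HJM representation \emph{under} $\p$, one must confirm that the driving noise $\boldsymbol\varepsilon(k+1)$ is standard Gaussian and ${\cal F}(k)$-conditionally independent under $\p$. This is precisely the content of the Girsanov computation recorded before the corollary, where the density process $\xi$ of \eqref{transformation} renders $\boldsymbol\lambda(k)+\Lambda(k)\boldsymbol{\cal X}(k)+\boldsymbol\varepsilon^\ast(k+1)$ conditionally $\p$-standard normal given ${\cal F}(k)$; I would simply invoke this fact to complete the argument.
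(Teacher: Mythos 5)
Your proposal is correct and matches the paper's (implicit) argument exactly: the paper gives no separate proof of this corollary, presenting it as an immediate consequence of Theorem~\ref{theorem HJM view} together with the relation $\boldsymbol\varepsilon(k+1)=\boldsymbol\lambda(k)+\Lambda(k)\boldsymbol{\cal X}(k)+\boldsymbol\varepsilon^\ast(k+1)$ established in Section~\ref{sec: real world dynamics}, which is precisely your substitution. Your added remarks---that the identity is pathwise and so transfers between the equivalent measures, and that the Girsanov computation is what licenses calling $\boldsymbol\varepsilon(k+1)$ the $\p$-innovation---are correct and make explicit what the paper leaves tacit.
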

Compared to Theorem~\ref{theorem HJM view}, there are additional drift terms $-\boldsymbol B^{(k)}(k+1,m)^\top\Sigma(k)^{\frac{1}{2}}\boldsymbol\lambda(k)$ and $-\boldsymbol B^{(k)}(k+1,m)^\top\Sigma(k)^{\frac{1}{2}}\Lambda(k)\boldsymbol X(k)$, which
are characterized by the market price of risk parameters $\boldsymbol{\lambda}(k)\in\R^n$ and $\Lambda(k)\in\R^{n\times n}$.

\section{Choice of Parameter Process}\label{sec: parameters}
The yield curve dynamics obtained by the CRC algorithm of Section~\ref{Re-calibration algorithm} require exogenous specification of the parameter process of the multifactor Vasi\v cek Models~\eqref{eq: spot rate} and \eqref{eq: ARn} and the market price of risk process, i.e., we need to model the process:
\begin{equation}\label{eq: parameter process}
\left(\boldsymbol b(t),\beta(t), \Sigma(t), \boldsymbol{\lambda}(t), \Lambda(t)\right)_{t\in \N_0}.
\end{equation} 

By Equation \eqref{re-calibration step 1}, the one-step ahead development of the CRC factor process $\boldsymbol{\mathcal{X}}$ under $\p$ reads as: 
\begin{equation}\label{eq: factor evolution}
\boldsymbol {\cal X}(t+1)=\boldsymbol b(t)+\theta^{(t)}(1)\boldsymbol e_1-\Sigma(t)^{\frac{1}{2}}\boldsymbol\lambda(t)+\left(\beta(t)-\Sigma(t)^{\frac{1}{2}}\Lambda(t)\right)\boldsymbol {\cal X}(t) + \Sigma(t)^{\frac{1}{2}}\boldsymbol\varepsilon(t+1),
\end{equation}
with ${\cal F}(t)$-measurable parameters $\boldsymbol b(t)$, $\beta(t)$ and $\Sigma(t)$ and Hull--White extension $\theta^{(t)}$. Thus, on the one hand, the factor process $(\boldsymbol {\cal X}(t))_{t\in \N_0}$ evolves according to \eqref{eq: factor evolution}, and on the other hand, parameters $(\boldsymbol b(t),\beta(t), \Sigma(t), \boldsymbol{\lambda}(t), \Lambda(t))_{t\in \N_0}$ evolve according to the financial market conditions. Note that the process $(\theta^{(t)})_{t\in\N_0}$ of Hull--White extensions is fully determined through CRC by \eqref{eq: re-calibration step 2}. In order to distinguish the evolutions of $(\boldsymbol {\cal X}(t))_{t\in \N_0}$ and $(\boldsymbol b(t),\beta(t), \Sigma(t), \boldsymbol{\lambda}(t), \Lambda(t))_{t\in \N_0}$, respectively, we assume that process \eqref{eq: parameter process} changes at a slower pace than the factor process, and therefore, parameters can be assumed to be constant over a short time window. This assumption motivates the following approach to specifying a model for process \eqref{eq: parameter process}. For each time point $t\Delta$, we fit multifactor Vasi\v cek Models~\eqref{eq: spot rate} and \eqref{eq: ARn} with fixed parameters $\left(\boldsymbol b, \beta, \Sigma, \boldsymbol\lambda,\Lambda\right)$ on observations from a time window $\{t-K+1,\ldots, t\}$ of length $K$. For estimation, we assume that we have yield curve observations $(\widehat{\boldsymbol y}(k))_{k=t-K+1,\ldots, t}=((\widehat{y}_1(k),\ldots,\widehat{y}_M(k)))_{{k=t-K+1,\ldots, t}}$ for times to maturity $ \tau_1\Delta<\ldots< \tau_M\Delta$. Since yield curves are not necessarily observed on a regular time to the maturity grid, we introduce the indices $\tau_1,\ldots,\tau_M\in\N$ to refer to the available times to maturity. Varying the time of estimation $t\Delta$, we obtain time series for the parameters from historical data. Finally, we fit a stochastic model to these time series. In the following, we discuss the interpretation of the parameters and present two different estimation procedures. The two procedures are combined to obtain a full specification of the model parameters. 

\subsection{Interpretation of Parameters}
\subsubsection{Level and Speed of Mean Reversion}
By Equation \eqref{eq: AR conditional distribution}, we have under $\p^\ast$ for $m>t$:
\[
\begin{aligned}
\mathbb E^{\ast}\left[\boldsymbol X(m)\middle|\mathcal F(t)\right]&=\left(\mathds{1}-\beta\right)^{-1}\left(\mathds{1}-\beta^{m-t}\right)\boldsymbol b+\beta^{m-t}\boldsymbol X(t),\\
\mathbb E^{\ast}\left[r(m)\middle|\mathcal F(t)\right]&=\boldsymbol 1^\top\left(\mathds{1}-\beta\right)^{-1}\left(\mathds{1}-\beta^{m-t}\right)\boldsymbol b+\boldsymbol 1^\top\beta^{m-t}\boldsymbol X(t).
\end{aligned}
\]

Thus, $\beta$ determines the speed at which the factor process $(\boldsymbol X(t))_{t\in \N_0}$ and the spot rate process $(r(t))_{t\in \N_0}$ return to their long-term means:
\[
\lim_{m\to\infty}\mathbb E^{\ast}\left[\boldsymbol X(m)|\mathcal F(t)\right]=\left(\mathds{1}-\beta\right)^{-1}\boldsymbol b\quad\text{and}\quad\lim_{m\to\infty}\mathbb E^{\ast}\left[r(m)|\mathcal F(t)\right]=\boldsymbol 1^\top\left(\mathds{1}-\beta\right)^{-1}\boldsymbol b.
\]

A sensible choice of $(\beta(t))_{t\in\N_0}$ adapts the speed of mean reversion to the prevailing financial market conditions at each time point $t\Delta$.

\subsubsection{Instantaneous Variance} 
By Equation \eqref{eq: AR conditional distribution}, we have under $\p^{\ast}$ for $t>0$:
\[
\mathrm{Cov}^\ast\left[\boldsymbol X(t)\middle|\mathcal F(t-1)\right]=\Sigma,\quad\text{and}\quad\mathrm{Var}^\ast\left[r(t)\middle|\mathcal F(t-1)\right]=\boldsymbol 1^\top\Sigma\boldsymbol 1.
\]

Thus, matrix $\Sigma$ plays the role of the instantaneous covariance matrix of $\boldsymbol X$, and it describes the instantaneous spot rate volatility. 

\subsection{State Space Modeling Approach}\label{sec: kalman MLE}

On each time window, we want to use yield curve observations to estimate the parameters of time-homogeneous Vasi\v cek Models~\eqref{eq: spot rate} and \eqref{eq: ARn}. In general, this model is not able to reproduce the yield curve observations exactly. One reason might be that the data are given in the form of parametrized yield curves, and the parametrization might not be compatible with the Vasi\v cek model. For example, this is the case for the widely-used Svensson family \cite{Svensson}. Another reason might be that yield curve observations do not exactly represent risk-free zero-coupon bonds. 

The discrepancy between the Vasi\v cek model and the yield curve observations can be accounted for by adding a noise term to the Vasi\v cek yield curves. This defines a state space model with the factor process as the hidden state variable. In this state space model, the parameters of the factor dynamics can be estimated using Kalman filter techniques in conjunction with maximum likelihood estimation (\cite{Wuethrich} Section 3.6.3). This is explained in detail in Sections~\ref{subsubsec: kalman MLE transition}--\ref{subsubsec: kalman MLE likelihood} below.

\subsubsection{Transition System}\label{subsubsec: kalman MLE transition} The evolution of the unobservable process $\boldsymbol X$ under $\p$ is assumed to be given on time window $\{t-K+1,\ldots,t\}$ by:
\[
\boldsymbol X(k)=\boldsymbol a+\alpha \boldsymbol X(k-1)+\Sigma^{\frac{1}{2}}\boldsymbol \varepsilon(k),\quad k\in\{t-K+1,\ldots,t\},
\]
with initial factor $\boldsymbol X(t-K)\in\R^n$ and parameters $\boldsymbol a=\boldsymbol b-\Sigma^{\frac{1}{2}}\boldsymbol\lambda$ and $\alpha=\beta-\Sigma^{\frac{1}{2}}\Lambda$. The initial factor $\boldsymbol X(t-K)$ is updated according to the output of the Kalman filter for the previous time window $\{t-K,\ldots,t-1\}$. The initial factor is set to zero for the first time window available.

\begin{Remark*}
Parameters $(\boldsymbol b,\beta,\Sigma,\boldsymbol\lambda,\Lambda)$ are assumed to be constant over the time window \{$t-K+1$,$\ldots$,$t$\}. Thus, we drop the index $k$ compared to Equations \eqref{eq: real world dynamics} and \eqref{eq: real world transform}. For estimation, we assume that the factor process evolves according to the time-homogeneous multifactor Vasi\v cek Models~\eqref{eq: spot rate} and \eqref{eq: ARn} in that time window. The Hull--White extension is calibrated to the yield curve at time $t\Delta$ given the estimated parameter values of the time-homogeneous model. 
\end{Remark*}

\subsubsection{Measurement System} 
We assume that the observations in the state space model are given by:
\begin{equation}\label{eq: noisy measurement}
\widehat{\boldsymbol Y}(k)=\boldsymbol d+D\boldsymbol X(k)+S^{\frac{1}{2}}\boldsymbol\eta(k),\quad k\in\{t-K,\ldots,t\},
\end{equation}
where:
\[
\begin{aligned}
\widehat{\boldsymbol Y}(k)&=\left(\widehat Y(k,k+\tau_1),\ldots,\widehat Y(k,k+\tau_M)\right)^\top\in\R^M,\\
\boldsymbol d&=\left(-( \tau_1\Delta)^{-1}A(k,k+\tau_1),\ldots,-( \tau_M\Delta)^{-1}A(k,k+\tau_M)\right)^\top\in\R^M,\\
D_{ij}&=( \tau_i\Delta)^{-1}B_j(k,k+\tau_i),\quad1\leq i\leq M,\quad1\leq j\leq n,
\end{aligned}
\]
with $A(\cdot,\cdot)$ and $\boldsymbol B(\cdot,\cdot)=(B_1(\cdot,\cdot),\dots,B_n(\cdot,\cdot))^\top$ given by Theorem \ref{theo: ARn prices} and $M$-dimensional $\mathcal{F}(k)$-measurable noise term $S^{\frac{1}{2}}\boldsymbol\eta(k)$ for non-singular $S^{\frac{1}{2}}\in\mathbb R^{M\times M}$. We assume that $\boldsymbol\eta(k)$ is independent of $\mathcal{F}(k-1)$ and $\boldsymbol\varepsilon(k)$ under $\p$ and that $\boldsymbol \eta(k)\stackrel{\mathbb P}{\sim}\mathcal N(\boldsymbol 0, \mathds{1})$. The error term $S^{\frac12}\boldsymbol{\eta}$ describes the discrepancy between the yield curve observations and the model. For $S=0$, we would obtain a yield curve in \eqref{eq: noisy measurement} that corresponds exactly to the multifactor Vasi\v cek one. 

Given the parameter and market price of risk value $(\boldsymbol b, \beta, \Sigma, \boldsymbol\lambda, \Lambda)$, we estimate the factor using the following iterative procedure. For each fixed value of $k\in\{t-K,\ldots,t\}$ and fixed time $t$, we consider the $\sigma$-field \linebreak$\mathcal F^{\widehat{\boldsymbol Y}}(k)=\sigma\left(\widehat{\boldsymbol Y}(s)\:\middle|\:t-K\leq s\leq k\right)$ and describe the estimation procedure in this state space~model.

\subsubsection{Anchoring} Fix initial factor $\boldsymbol X(t-K)=\boldsymbol x(t-K|t-K-1)$, and initialize:
\[
\begin{aligned}
\boldsymbol x(t-K+1|t-K)&=\mathbb E\left[\boldsymbol X(t-K+1)\middle|\mathcal F^{\widehat{\boldsymbol Y}}(t-K)\right]=\boldsymbol a+\alpha\boldsymbol x(t-K|t-K-1),\\
\Sigma(t-K+1|t-K)&=\mathrm{Cov}\left(\boldsymbol X(t-K+1)\middle|\mathcal F^{\widehat{\boldsymbol Y}}(t-K)\right)=\Sigma.
\end{aligned}
\]

\subsubsection{Forecasting the Measurement System} At time $k\in\{t-K+1,\ldots,t\}$, we have:
\[
\begin{aligned}
\boldsymbol y(k|k-1)&=\mathbb E\left[\widehat{\boldsymbol Y}(k)\middle|\mathcal F^{\widehat{\boldsymbol Y}}(k-1)\right]=\boldsymbol d+D\boldsymbol x(k|k-1),\\
F(k)&=\mathrm{Cov}\left(\widehat{\boldsymbol Y}(k)\middle|\mathcal F^{\widehat{\boldsymbol Y}}(k-1)\right)=D\Sigma(k|k-1)D^\top+S,\\
\boldsymbol \zeta(k)&=\widehat{\boldsymbol y}(k)-\boldsymbol y(k|k-1).
\end{aligned}
\]

\subsubsection{Bayesian Inference in the Transition System} The prediction error $\boldsymbol \zeta(k)$ is used to update the unobservable factors.
\[
\begin{aligned}
\boldsymbol x(k|k)&=\mathbb E\left[\boldsymbol X(k)\middle|\mathcal F^{\widehat{\boldsymbol Y}}(k)\right]=\boldsymbol x(k|k-1)+K(k)\boldsymbol\zeta(k),\\
\Sigma(k|k)&=\mathrm{Cov}\left(\boldsymbol X(k)\middle|\mathcal F^{\widehat{\boldsymbol Y}}(k)\right)=\left(\mathds{1}-K(k)D\right)\Sigma(k|k-1),
\end{aligned}
\] 
where $K(k)$ denotes the Kalman gain matrix given by:
\[
K(k)=\mathrm{Cov}\left(\boldsymbol X(k)\middle|\mathcal F^{\widehat{\boldsymbol Y}}(k-1)\right)D^\top\mathrm{Cov}\left(\widehat{\boldsymbol Y}(k)\middle|\mathcal F^{\widehat{\boldsymbol Y}}(k-1)\right)^{-1}=\Sigma(k|k-1)D^\top F(k)^{-1}.
\]

\subsubsection{Forecasting the Transition System} For the unobservable factor process, we have the following forecast: 
\[
\begin{aligned}
\boldsymbol x(k+1|k)&=\mathbb E\left[\boldsymbol X(k+1)\middle|\mathcal F^{\widehat{\boldsymbol Y}}(k)\right]=\boldsymbol a + \alpha\boldsymbol x(k|k),\\
\Sigma(k+1|k)&=\mathrm{Cov}\left(\boldsymbol X(k+1)\middle|\mathcal F^{\widehat{\boldsymbol Y}}(k)\right)=\alpha\Sigma(k|k)\alpha^\top+\Sigma.
\end{aligned}
\]

\subsubsection{Likelihood Function}\label{subsubsec: kalman MLE likelihood}
The Kalman filter procedure above allows one to infer factors $\boldsymbol X$ given the parameter and market price of risk values. Of course, in this section, we are interested in estimating these values in the first~place. For this purpose, the procedure above can be used in conjunction with maximum likelihood estimation. For the underlying parameters $\boldsymbol\Theta=\left(\boldsymbol b,\beta,\Sigma,\boldsymbol a,\alpha\right)$, we have the following likelihood function given the observations $(\widehat{\boldsymbol y}(k))_{k=t-K+1,\ldots, t}$:
\begin{equation} \label{eq: likelihood}
\mathcal L_t(\boldsymbol\Theta)=\prod_{k=t-K+1}^{t}\frac{\exp\left(-\frac{1}{2}\boldsymbol\zeta(k)^\top F(k)^{-1}\boldsymbol\zeta(k)\right)}{\left(2\pi\right)^{\frac{M}{2}}\det F(k)^{\frac{1}{2}}}.
\end{equation}

The maximum likelihood estimator (MLE) $\widehat{\boldsymbol\Theta}^{\mathrm{MLE}}=(\widehat{\boldsymbol b}^{\mathrm{MLE}},\widehat{\beta}^{\mathrm{MLE}},\widehat{\Sigma}^{\mathrm{MLE}},\widehat{\boldsymbol a}^{\mathrm{MLE}},\widehat{\alpha}^{\mathrm{MLE}})$ is found by maximizing the likelihood function $\mathcal L_t(\boldsymbol\Theta)$ over $\boldsymbol\Theta$, given the data. As in the EM (expectation maximization) algorithm, maximization of the likelihood function is alternated with Kalman filtering until convergence of the estimated parameters $\widehat{\boldsymbol\Theta}^{\mathrm{MLE}}$ is achieved.

\subsection{Estimation Motivated by Continuous Time Modeling}
\label{calibration real world 2}
\subsubsection{Rescaling the Time Grid}
Assume factor process $(\boldsymbol X(t))_{t\in\N_0}$ is given under $\p$ by $\boldsymbol X(0)\in \R^n$ and for $t>0$:
\[
\boldsymbol X(t)=\boldsymbol a+\alpha\boldsymbol X(t-1)+\Sigma^{\frac{1}{2}}\boldsymbol\varepsilon(t),
\]
where $\boldsymbol a=\boldsymbol b-\Sigma^{\frac{1}{2}}\boldsymbol\lambda$ and $\alpha=\beta-\Sigma^{\frac{1}{2}}\Lambda$. Furthermore, assume that $\alpha$ is a diagonalizable matrix with $\alpha=TDT^{-1}$ for $T\in\R^{n\times n}$ and diagonal matrix $D\in(-1,1)^{n\times n}$. Then, the transformed process $\boldsymbol Z=(T^{-1}\boldsymbol X(t))_{t\in\N_0}$ evolves according to: 
\[
\boldsymbol Z(t)=\boldsymbol{c}+D\boldsymbol{Z}(t-1)+\Psi^{\frac{1}{2}}\boldsymbol\varepsilon(t),\quad t>0,
\]
where $\boldsymbol c=T^{-1}\boldsymbol a$ and $\Psi=T^{-1}\Sigma(T^{-1})^\top$. For $d \in \mathbb N_+$, the $d$-step ahead conditional distribution of $\boldsymbol Z$ under $\p$ is given by:
\[
\boldsymbol Z(t+d)|\mathcal F(t)\stackrel{\mathbb P}{\sim}\mathcal N\left(\boldsymbol\mu+\gamma\boldsymbol{Z}(t),\Gamma\right),\quad t\geq0,
\]
where $\boldsymbol\mu=\left(\mathds{1}-D\right)^{-1}\left(\mathds{1}-D^d\right)\boldsymbol c$, $\gamma=D^d$ and $\Gamma=\sum_{s=0}^{d-1}D^s\Psi D^s$. Suppose we have estimated $\boldsymbol\mu\in\R^n$, the diagonal matrix $\gamma\in(-1,1)^n$ and $\Gamma\in\R^{n\times n}$ on the time grid with size $d\Delta$, for instance, using MLE, as explained in Section~\ref{sec: kalman MLE}. We are interested in recovering the parameters $\boldsymbol c$, $D$ and $\Psi$ of the dynamics on the refined time grid with size $\Delta$ from $\boldsymbol\mu$, $\gamma$ and $\Gamma$.

The diagonal matrix $D$ and vector $\boldsymbol c$ are reconstructed from the diagonal matrix $\gamma$ as follows:
\[
\begin{aligned}
D&=\gamma^{\frac{1}{d}}=\mathds{1}+\frac{1}{d}\log(\gamma)+o\left(\frac{1}{d}\right), \quad \text{as $d\to\infty$},\\
\boldsymbol c&=\left(\mathds 1-\gamma\right)^{-1}\left(\mathds 1-\gamma^{\frac{1}{d}}\right)\boldsymbol{\mu}=\frac{1}{d}\left(\mathds 1-\gamma\right)^{-1}\log\left(\gamma^{-1}\right)\boldsymbol{\mu}+o\left(\frac{1}{d}\right), \quad \text{as $d\to\infty$},
\end{aligned}
\]
where logarithmic and power functions applied to diagonal matrices are defined on their diagonal elements. Note that for $i,j=1,\ldots,n$, we have:
\[
\Gamma_{ij}=\sum_{s=0}^{d-1}\gamma_{ii}^\frac{s}{d}\Psi_{ij}\gamma_{jj}^\frac{s}{d}=\Psi_{ij}\sum_{s=0}^{d-1}\left(\gamma_{ii}^{\frac{1}{d}}\gamma_{jj}^{\frac{1}{d}}\right)^s=\Psi_{ij}\frac{1-\gamma_{ii}\gamma_{jj}}{1-\left(\gamma_{ii}\gamma_{jj}\right)^{\frac{1}{d}}}.
\]

Therefore, we recover $\Psi$ from $\gamma$ and $\Gamma$ as follows.
\[
\Psi=\frac{1}{d}\upsilon+o\left(\frac{1}{d}\right),\quad\text{as $d\to\infty$},
\]
where $\upsilon=(-\Gamma_{ij}\log(\gamma_{ii}\gamma_{jj})(1-\gamma_{ii}\gamma_{jj})^{-1})_{i,j=1,\ldots,n}\in\R^{n\times n}$. Consider for $t>0$ the increments \linebreak$\mathcal D_t\boldsymbol Z=\boldsymbol Z(t)-\boldsymbol Z(t-1)$. From the formulas for $\boldsymbol c$, $D$ and $\Psi$, we observe that the $\mathcal F_{t-1}$-conditional mean of $\mathcal D_t \boldsymbol Z$:
\[
\boldsymbol c+\left(D-\mathds 1\right)\boldsymbol Z(t-1)=-\frac{1}{d}\left(\mathds1-\gamma\right)^{-1}\log(\gamma)\boldsymbol\mu+\frac{1}{d}\log(\gamma)\boldsymbol Z(t-1)+o\left(\frac{1}{d}\right),
\]
and the $\mathcal F_{t-1}$-conditional volatility of $\mathcal D_t \boldsymbol Z$:
\[
\Psi^{\frac{1}{2}}=\sqrt{\frac{1}{d}}\upsilon^{\frac{1}{2}}+o\left(\sqrt{\frac{1}{d}}\right),
\]
live on different scales as $d \to \infty$; in fact, volatility dominates for large $d$. Under $\p$ for $t>0$, we have:
\[
\begin{aligned}
&\mathbb E\left[\mathcal D_t\boldsymbol Z\left(\mathcal D_t\boldsymbol Z\right)^\top\middle|\mathcal F_{t-1}\right]=\mathrm{Cov}\left[\mathcal D_t\boldsymbol Z,\mathcal D_t\boldsymbol Z\middle|\mathcal F_{t-1}\right]+\mathbb E\left[\mathcal D_t\boldsymbol Z\middle|\mathcal F_{t-1}\right]\mathbb E\left[\mathcal D_t\boldsymbol Z\middle|\mathcal F_{t-1}\right]^\top\\\quad&=\mathrm{Cov}\left[\boldsymbol Z(t),\boldsymbol Z(t)\middle|\mathcal F_{t-1}\right]+\left(\mathbb E\left[\boldsymbol Z(t)\middle|\mathcal F_{t-1}\right]-Z(t-1)\right)\left(\mathbb E\left[\boldsymbol Z(t)\middle|\mathcal F_{t-1}\right]-Z(t-1)\right)^\top\\&=\Psi+\left(\boldsymbol c+\left(D-\mathds 1\right)\boldsymbol Z(t-1)\right)\left(\boldsymbol c+\left(D-\mathds 1\right)\boldsymbol Z(t-1)\right)^\top.
\end{aligned}
\]

Therefore, setting $\mathcal D_t\boldsymbol X=\boldsymbol X(t)-\boldsymbol X(t-1)$, we obtain as $d\to\infty$:
\begin{equation}\label{eq: small delta}
\begin{aligned}
&\mathbb E\left[\mathcal D_t\boldsymbol X\left(\mathcal D_t\boldsymbol X\right)^\top\middle|\mathcal F_{t-1}\right]=T\mathbb E\left[\mathcal D_t\boldsymbol Z\left(\mathcal D_t\boldsymbol Z\right)^\top\middle|\mathcal F_{t-1}\right]T^\top\\&\quad=T\Psi T^\top+T\left(\boldsymbol c+\left(D-\mathds 1\right)\boldsymbol Z(t-1)\right)\left(\boldsymbol c+\left(D-\mathds 1\right)\boldsymbol Z(t-1)\right)^\top T^\top\\&\quad=\frac{1}{d}T\upsilon T^\top+o\left(\frac{1}{d}\right)=T\Psi T^\top+o\left(\frac{1}{d}\right)=\Sigma+o\left(\frac{1}{d}\right),
\end{aligned}
\end{equation}

\subsubsection{Longitudinal Realized Covariations of Yields}
We consider the yield curve increments within the discrete-time multifactor Vasi\v{c}ek Models~\eqref{eq: spot rate} and \eqref{eq: ARn}. The increments of the yield process $(Y(t,t+\tau))_{t\in\mathbb N_0}$ for fixed time to maturity $\tau\Delta>0$ are given by:
\[
\begin{aligned}
\mathcal D_{t,\tau}Y&=Y\left(t,t+\tau\right)-Y\left(t-1,t-1+\tau\right)\\&=\frac{1}{\tau\Delta}\boldsymbol B(t,t+\tau)^\top\left(\boldsymbol X(t)-\boldsymbol X(t-1)\right)=\frac{1}{\tau\Delta}\boldsymbol B(t,t+\tau)^\top\mathcal D_t\boldsymbol X,
\end{aligned}
\]
where $\mathcal D_t\boldsymbol X|\mathcal F(t-1)\stackrel{\mathbb P}{\sim}\mathcal N\left(\boldsymbol a+\left(\alpha-\mathds{1}\right)\boldsymbol X(t-1),\Sigma\right)$. For times to maturity $\tau_1\Delta,\tau_2\Delta>0$, we get under~$\p$: 
\[
\mathbb E\left[\mathcal D_{t,\tau_1}Y\mathcal D_{t,\tau_2}Y\middle|{\cal F}_{t-1}\right]=\frac{1}{\tau_1\tau_2\Delta^2}\boldsymbol B(t,t+\tau_1)^\top\mathbb E\left[\mathcal D_t\boldsymbol X\left(\mathcal D_t\boldsymbol X\right)^\top\middle|{\cal F}_{t-1}\right]\boldsymbol B(t,t+\tau_2).
\]

By Equation \eqref{eq: small delta} for small grid size $\Delta$, we estimate the last expression by:
\begin{equation}\label{eq: calibration}
\mathbb E\left[\mathcal D_{t,\tau_1}Y\mathcal D_{t,\tau_2}Y\middle|{\cal F}_{t-1}\right]\approx\frac{1}{\tau_1\tau_2}\boldsymbol 1^\top\left(\mathds{1}-\beta^{\tau_1}\right)\left(\mathds{1}-\beta\right)^{-1}\Sigma\left(\mathds{1}-\beta^\top\right)^{-1}\left(\mathds{1}-\left(\beta^\top\right)^{\tau_2}\right)\boldsymbol 1.
\end{equation}

Formula \eqref{eq: calibration} is interesting for the following reasons:
\begin{itemize}[leftmargin=*,labelsep=6mm]
\item It does not depend on the unobservable factors $\boldsymbol X$.
\item It allows for direct cross-sectional estimation of $\beta$ and $\Sigma$. That is, $\beta$ and $\Sigma$ can directly be estimated from market observations without knowing the market-price of risk.
\item It is helpful to determine the number of factors needed to fit the model to market yield curve increments. This can be analyzed by principal component analysis.
\item It can also be interpreted as a small-noise approximation for noisy measurement systems of the form \eqref{eq: noisy measurement}.
\end{itemize}

Let $\widehat{y}_{1}(k)$ and $\widehat{y}_{2}(k)$ be market observations for times to maturity $\tau_{1}\Delta$ and $\tau_{2}\Delta$ and at times $k\in\{t-K+1,\ldots,t\}$, also specified in Section~\ref{sec: kalman MLE}. Then, the expectation on the left hand side of \eqref{eq: calibration} can be estimated by the realized covariation:
\begin{equation}\label{eq: rcov}
\widehat{\rm RCov}(t,\tau_1,\tau_2)=\frac{1}{K} \sum_{k=t-K+1}^t\big(\widehat{y}_{1}(k)-\widehat{y}_{1}(k-1)\big)\big(\widehat{y}_{2}(k)-\widehat{y}_{2}(k-1)\big).
\end{equation}

The quality of this estimator hinges on two crucial assumptions. First, higher order terms in \eqref{eq: small delta} are negligible in comparison to $\Sigma$. Second, the noise term $S^{\frac12}\boldsymbol\eta$ in \eqref{eq: noisy measurement} leads to a negligible distortion in the sense that observations $\widehat{\boldsymbol Y}$ are reliable indicators for the underlying Vasi\v cek yield curves.

\subsubsection[Cross-Sectional Estimation of beta and Sigma]{Cross-Sectional Estimation of $\beta$ and $\Sigma$}
Realized covariation estimator \eqref{eq: rcov} can be used in conjunction with asymptotic relation \eqref{eq: calibration} to estimate parameters $\beta$ and $\Sigma$ at time $ t\Delta$ in the following way. For given symmetric weights $w_{ij}=w_{ji}\geq0$, we solve the least squares problem:
\begin{equation}\label{eq: co-var estimate}
\begin{aligned}
\left(\widehat{\beta}^{\rm RCov},\widehat{\Sigma}^{\rm RCov}\right)&={\arg\min}_{\beta,\Sigma}\Bigg\{\sum_{i,j=1}^{M}w_{ij}\bigg[\widehat{\rm RCov}(t,\tau_i,\tau_j)\\&\quad-\frac{1}{\tau_i\tau_j}\boldsymbol 1^\top\left(\mathds{1}-\beta^{\tau_i}\right)\left(\mathds{1}-\beta\right)^{-1}\Sigma\left(\mathds{1}-\beta^\top\right)^{-1}\left(I-\left(\beta^\top\right)^{\tau_j}\right)\boldsymbol 1\bigg]^2\Bigg\},
\end{aligned}
\end{equation}
where we optimize over $\beta$ and $\Sigma$ satisfying Assumption~\ref{assumption}.

\subsection{Inference on Market Price of Risk}

Finally, we aim at determining parameters $\boldsymbol\lambda$ and $\Lambda$ of the change of measure specified in Section~\ref{sec: real world dynamics}. For this purpose, we combine MLE estimation (Section~\ref{sec: kalman MLE}) with estimation from realized covariations of yields (Section~\ref{calibration real world 2}). First, we estimate $\beta$ and $\Sigma$ by $\widehat\beta^{\mathrm{RCov}}$ and $\widehat\Sigma^{\mathrm{RCov}}$ as in Section~\ref{calibration real world 2}. Second, we estimate $\boldsymbol a$, $\boldsymbol b$ and $\alpha$ by maximizing the log-likelihood:
\[
\log\mathcal{L}_t\left(\boldsymbol b,\beta,\Sigma,\boldsymbol a,\alpha\right)=\sum_{k=t-K+1}^{t}\log\left(\det F(k)\right)-\sum_{k=t-K+1}^{t}\boldsymbol\zeta(k)^\top F(k)^{-1}\boldsymbol\zeta(k)+\mathrm{const},
\]
for fixed $\beta$ and $\Sigma$ over $\boldsymbol{b}\in\R^n$, $\boldsymbol{a}\in\R^n$ and $\alpha\in\R^{n\times n}$ with spectrum in $(-1,1)^n$, i.e., 
\begin{equation}\label{eq: mle}
\left(\widehat{\boldsymbol{b}}^{\mathrm{MLE}},\widehat{\boldsymbol{a}}^{\mathrm{MLE}},\widehat{\alpha}^{\rm MLE}\right)={\arg\max}_{\boldsymbol b,\boldsymbol a,\alpha}\:\:\log\mathcal{L}_t\left(\boldsymbol b,\widehat\beta^{\mathrm{RCov}},\widehat\Sigma^{\mathrm{RCov}},\boldsymbol a,\alpha\right).
\end{equation}

The constraint on the matrix $\alpha$ ensures that the factor process is stationary under the real-world measure $\p$. From Equation \eqref{eq: real world transform}, we have $\boldsymbol\lambda=\Sigma^{-\frac{1}{2}}\left(\boldsymbol b-\boldsymbol a\right)$ and $\Lambda=\Sigma^{-\frac{1}{2}}\left(\beta-\alpha\right)$. This motivates the inference of $\boldsymbol\lambda$ by:
\begin{equation}\label{eq: lambda}
\widehat{\boldsymbol\lambda}=\left(\widehat{\Sigma}^{\mathrm{RCov}}\right)^{-\frac{1}{2}}\left(\widehat{\boldsymbol b}^{\mathrm{MLE}}-\widehat{\boldsymbol a}^{\mathrm{MLE}}\right),
\end{equation}
and the inference of $\Lambda$ by:
\begin{equation}\label{eq: lambda mat}
\widehat{\Lambda}(k)=\left(\widehat{\Sigma}^{\mathrm{RCov}}\right)^{-\frac{1}{2}}\left(\widehat{\beta}^{\mathrm{RCov}}-\widehat{\alpha}^{\mathrm{MLE}}\right).
\end{equation}

We stress the importance of estimating as many parameters as possible from the realized covariations of yields prior to using maximum likelihood estimation. The MLE procedure of Section~\ref{sec: kalman MLE} is computationally intensive and generally does not work well to estimate volatility~parameters.

\section{Numerical Example for Swiss Interest Rates}\label{sec: numerical example}
\subsection{Description and Selection of Data}

We choose $\Delta=1/252$, which corresponds to a daily time grid (assuming that a financial year has 252 business days). For the Swiss currency (CHF), we consider as yield observations the Swiss Average Rate (SAR), the London InterBank Offered
Rate (LIBOR) and the Swiss Confederation Bond (SWCNB). See Figures \ref{fig: data start} and \ref{fig: libor comment}.

\begin{itemize}[leftmargin=*,labelsep=6mm]
\item {\it Short times to maturity.} The SAR is an ongoing volume-weighted average rate calculated by the Swiss National Bank (SNB) based on repo transactions between financial institutions. It is used for short times to maturity of at most three months. For SAR, we have the Over-Night SARON that corresponds to a time to maturity of $\Delta$ (one business day) and the SAR Tomorrow-Next (SARTN) for time to maturity $2 \Delta$ (two business days). The latter is not completely correct, because SARON is a collateral over-night rate and tomorrow-next is a call money rate for receiving money tomorrow, which has to be paid back the next business day. Moreover, we have the SAR for times to maturity of one week (SAR1W), two weeks (SAR2W), one month (SAR1M) and three months (SAR3M); see also \cite{Jordan}.
\item {\it Short to medium times to maturity.} The LIBOR reflects times to maturity, which correspond to one~month (LIBOR1M), three months (LIBOR3M), six months (LIBOR6M) and 12 months (LIBOR12M) in the London interbank market.
\item {\it Medium to long times to maturity.} The SWCNB is based on Swiss government bonds, and it is used for times to maturity, which correspond to two years (SWCNB2Y), three years (SWCNB3Y), four~years (SWCNB4Y), five years (SWCNB5Y), seven years (SWCNB7Y), 10 years (SWCNB10Y), 20 years (SWCNB20Y) and 30 years (SWCNB30Y).
\end{itemize}

These data are available from 8 December 1999, and we set 15 September 2014 to be the last observation date. Of course, SAR, LIBOR and SWCNB do not exactly model risk-free zero-coupon bonds, and these different classes of instruments are not completely consistent, because prices are determined slightly differently for each class. In particular, this can be seen during the 2008--2009 financial crisis. However, these data are in many cases the best approximation to CHF risk-free zero-coupon yields that is available. For the longest times to maturity of SWCNB, one may also raise issues about the liquidity of these instruments, because insurance companies typically run a buy-and-hold strategy for long-term bonds. 

In Figures \ref{fig: rcov start}--\ref{fig: rcov end}, we compute the realized volatility $\widehat{\rm RCov}(t,\tau,\tau)^{\frac{1}{2}}$ of yield curves $(\widehat{y}_\tau(k))_{k=t-K+1,\ldots, t}$ for different times to maturity $\tau\Delta$ and window length $K$; see Equation \eqref{eq: rcov}. In~Figures \ref{fig: libor comment} and \ref{fig: rcov end}, we observe that SAR fits SWCNB better than LIBOR after the financial crisis of 2008. For this reason, we decide to drop LIBOR and build daily yield curves from SAR and SWCNB, only. The mismatch between LIBOR, SAR and SWCNB is attributable to differences in liquidity and the credit risk of the underlying instruments. 

\subsection{Model Selection}\label{sec: model selection}

In this numerical example, we restrict ourselves to multifactor Vasi\v cek models with $\beta$ and $\alpha$ of diagonal form:
\[
\beta=\mathrm{diag}\left(\beta_{11},\ldots,\beta_{nn}\right),\quad\text{and}\quad\alpha=\mathrm{diag}\left(\alpha_{11},\ldots,\alpha_{nn}\right),
\]
where $-1<\beta_{11},\ldots,\beta_{nn},\alpha_{11},\ldots,\alpha_{nn}<1$. In the following, we explain exactly how to perform the delicate task of parameter estimation in the multifactor Vasi\v cek Models~\eqref{eq: spot rate} and \eqref{eq: ARn} using the procedure explained in Section~\ref{sec: parameters}. 

\subsubsection{Discussion of Identification Assumptions} We select short times to maturity (SAR) to estimate parameters $\boldsymbol b$, $\beta$, $\Sigma$, $\boldsymbol a$ and $\alpha$. This is reasonable because these parameters describe the dynamics of the factor process and, thus, of the spot rate. As we are working on a small (daily) time grid, asymptotic Formulas \eqref{eq: small delta} and \eqref{eq: calibration} are expected to give good approximations. Additionally, it is reasonable to assume that the noise covariance matrix $S$ in data-generating Model~\eqref{eq: noisy measurement} is negligible compared to \eqref{eq: calibration}. Therefore, we can estimate the left hand side of \eqref{eq: calibration} by the realized covariation of observed yields; see estimator \eqref{eq: rcov}. Then, we determine the Hull--White extension $\theta$ in order to match the prevailing yield curve interpolated from SAR and~SWCNB. 

\begin{figure}[H]
\centering
\begin{minipage}[t]{0.42\textwidth}
\includegraphics[width=\textwidth]{./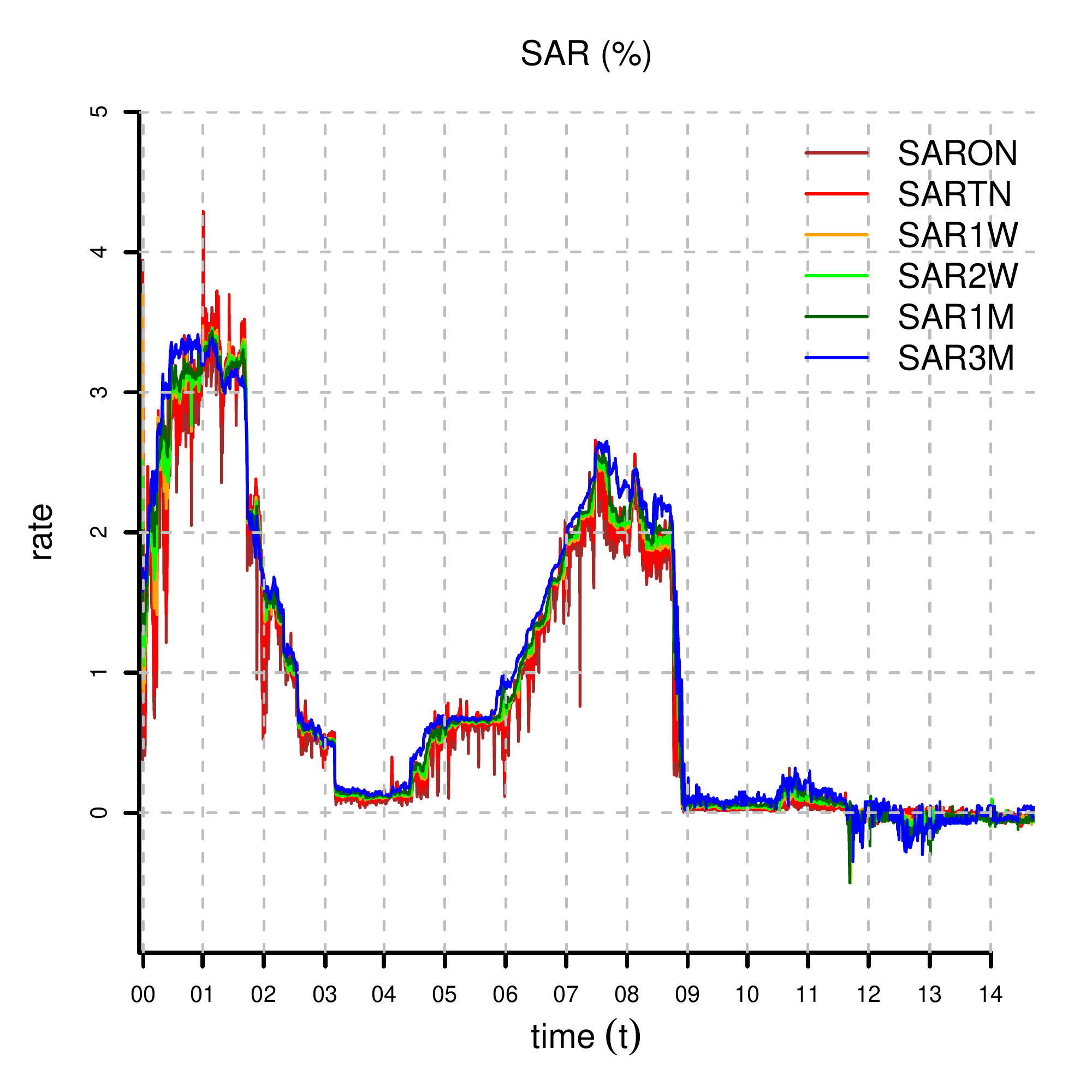}
\end{minipage}
\begin{minipage}[t]{0.42\textwidth}
\includegraphics[width=\textwidth]{./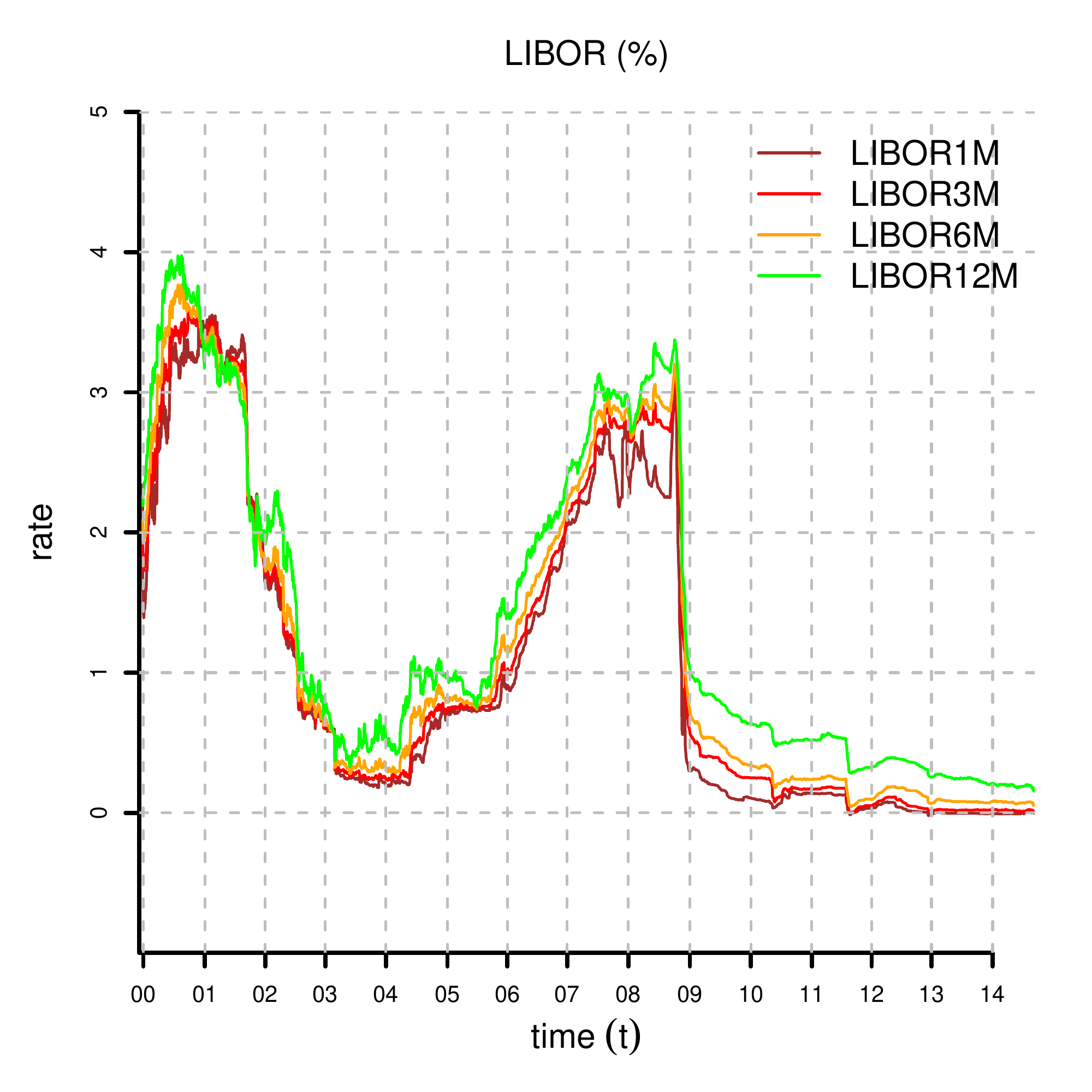}
\end{minipage}
\vspace{-15pt}
\caption{Yield rates (lhs): Swiss Average Rate (SAR) and (rhs) London InterBank Offered Rate (LIBOR) from 8 December 1999, until 15 September 2014.}\label{fig: data start}
\end{figure}
\vspace{-20pt}

\begin{figure}[H]
\centering
\begin{minipage}[t]{0.42\textwidth}
\includegraphics[width=\textwidth]{./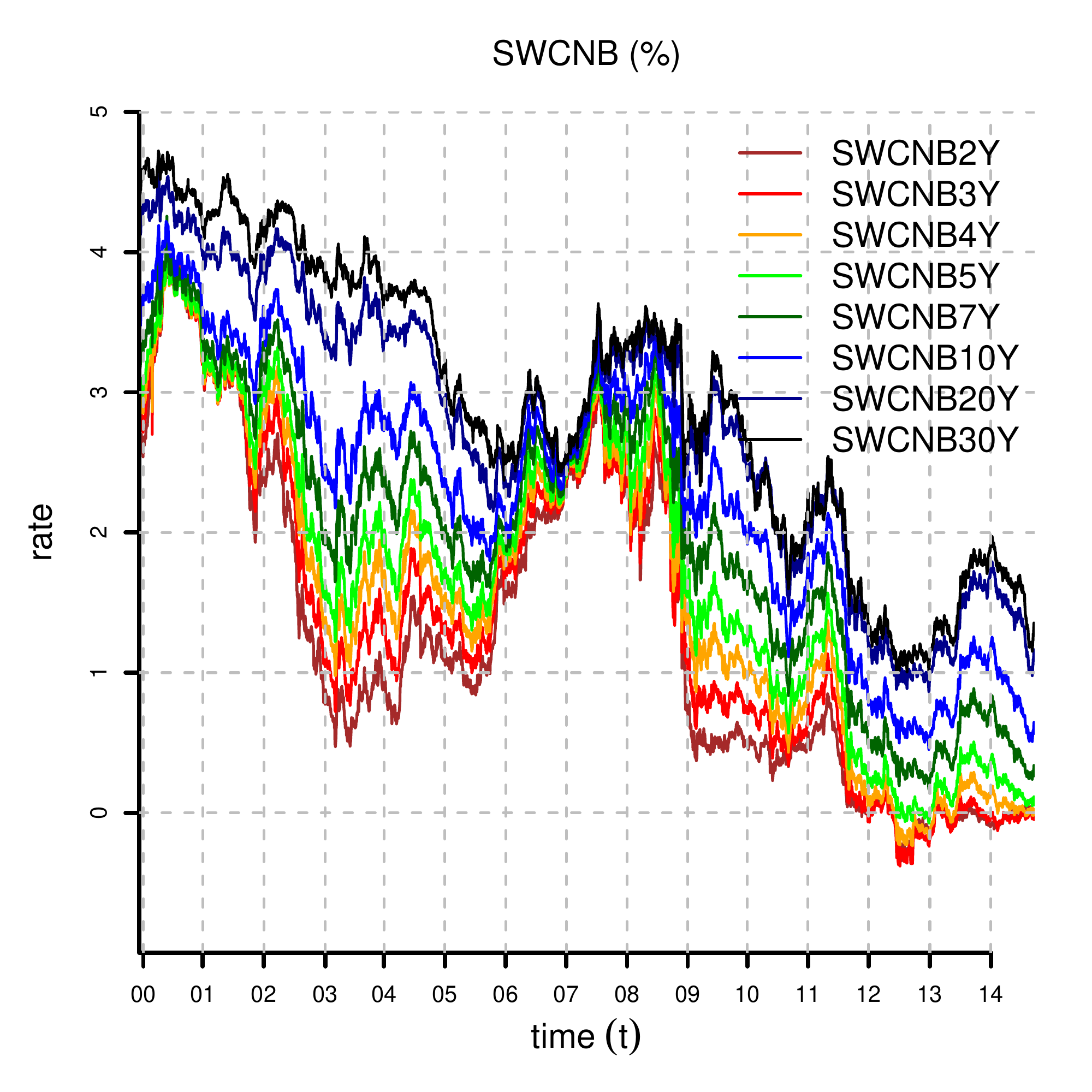}
\end{minipage}
\begin{minipage}[t]{0.42\textwidth}
\includegraphics[width=\textwidth]{./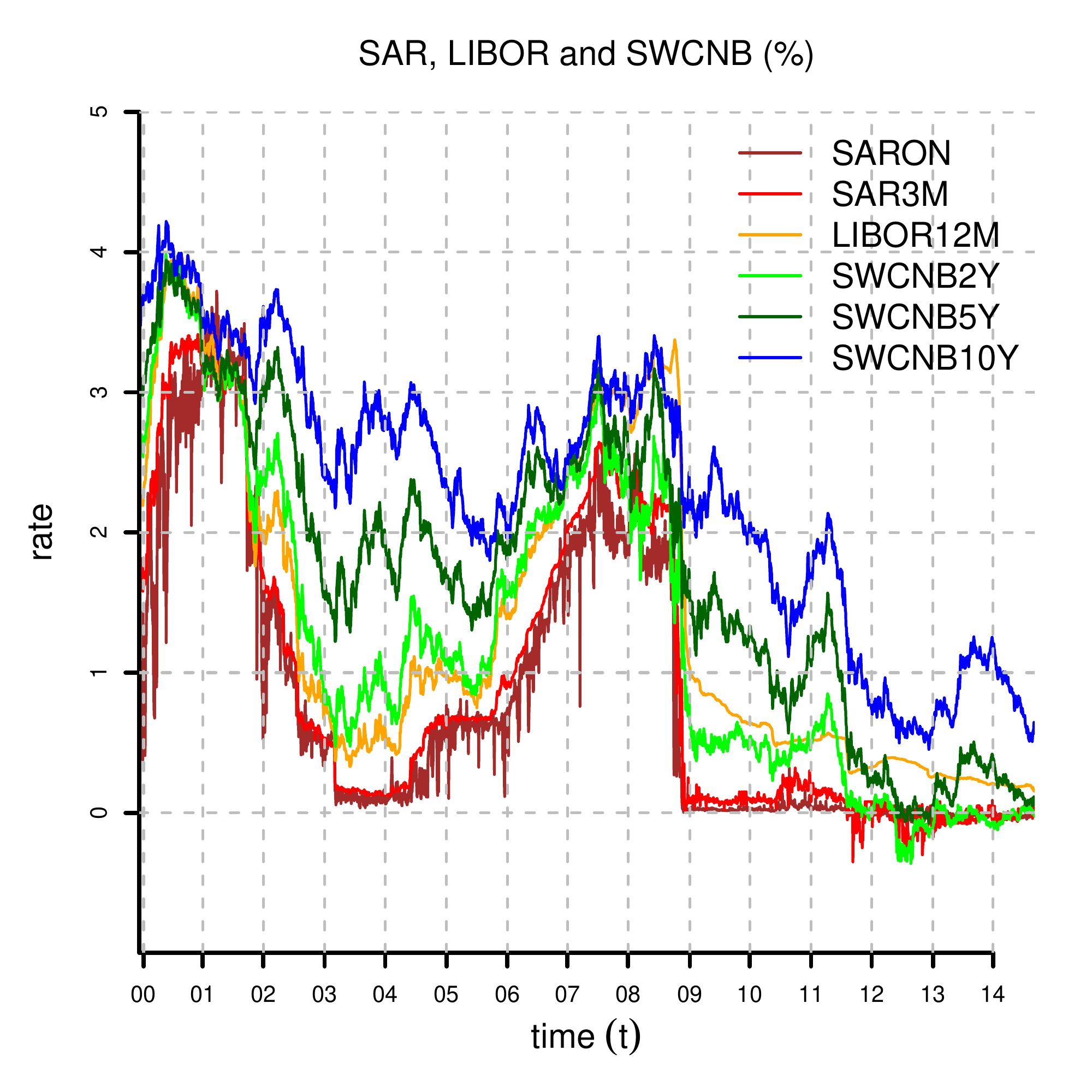}
\end{minipage}
\vspace{-15pt}
\caption{Yield rates: (lhs) Swiss Confederation Bond (SWCNB) and (rhs) a selection of SAR, LIBOR and Swiss Confederation Bond (SWCNB) from 8 December 1999, until 15 September 2014. Note that LIBOR looks rather differently from SAR and SWCNB after the financial crisis of 2008.}\label{fig: libor comment}
\end{figure}
\vspace{-20pt}

\begin{figure}[H]
\centering
\begin{minipage}[t]{0.42\textwidth}
\includegraphics[width=\textwidth]{./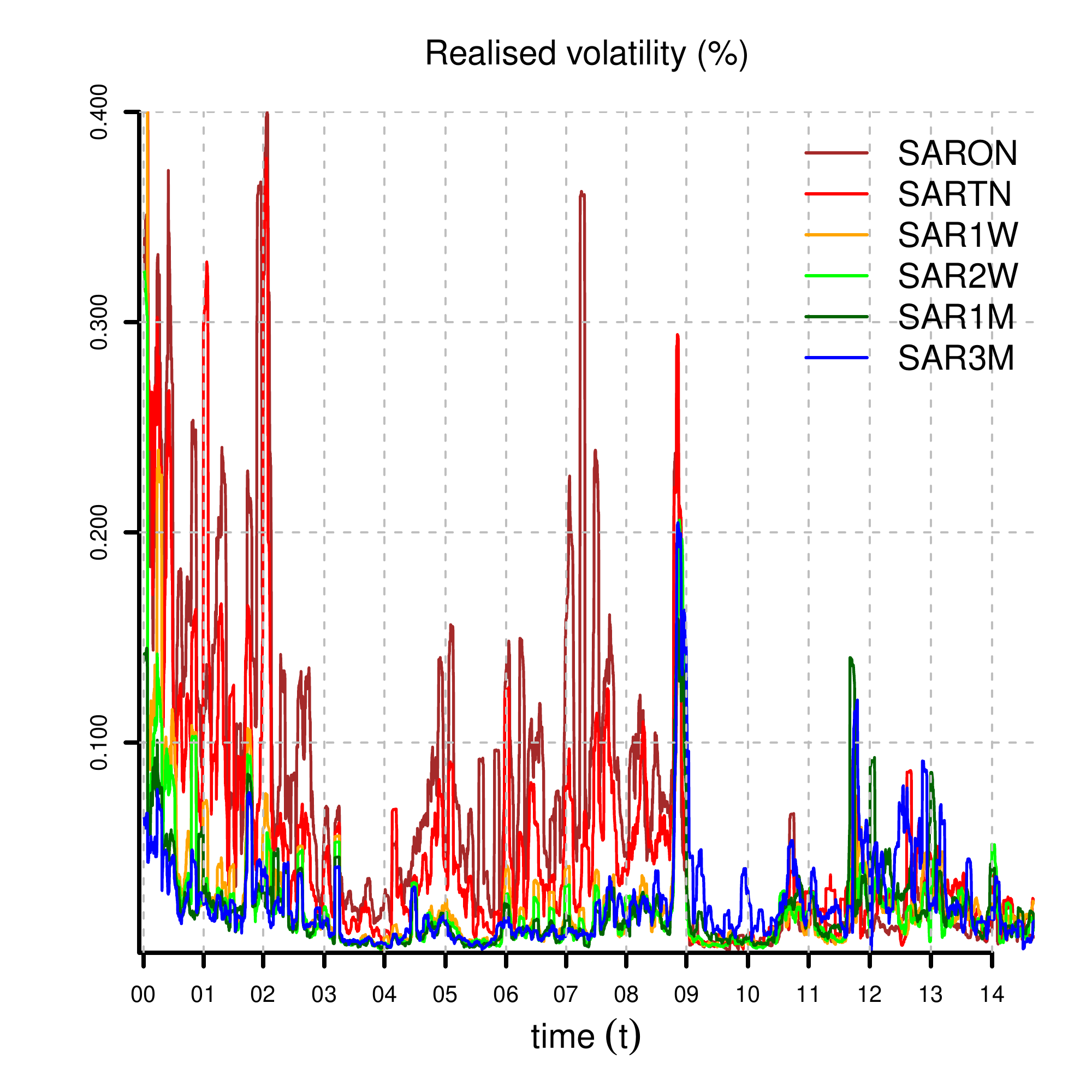}
\end{minipage}
\begin{minipage}[t]{0.42\textwidth}
\includegraphics[width=\textwidth]{./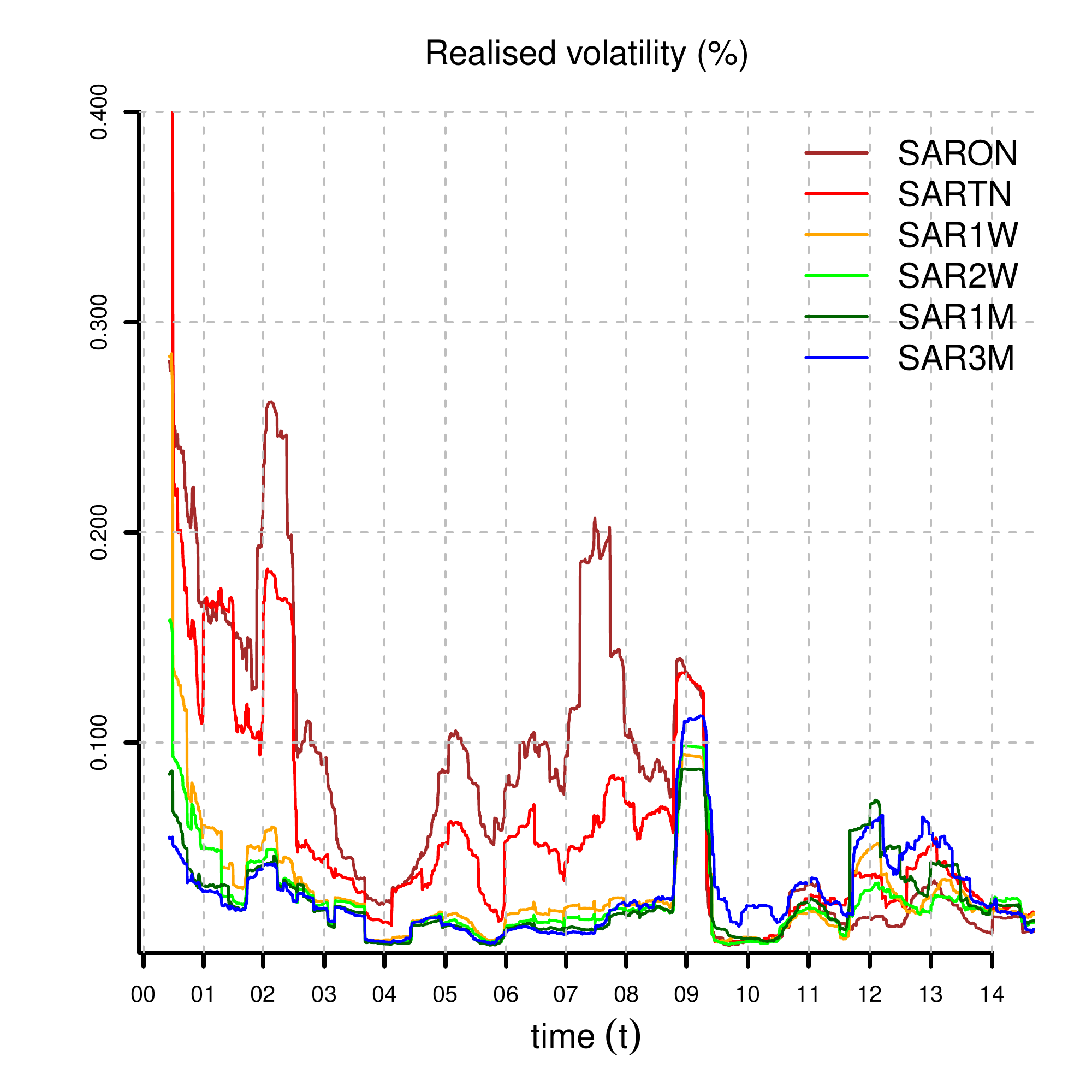}
\end{minipage}
\vspace{-15pt}
\caption{SAR realized volatility $\widehat{\rm RCov}(t,\tau,\tau)^{\frac{1}{2}}$ for $\tau=1,2,5,10,21,63$, window length $K=21$ (lhs) and $K=126$ (rhs).}\label{fig: rcov start}
\end{figure}
\vspace{-20pt}

\begin{figure}[H]
\centering
\begin{minipage}[t]{0.42\textwidth}
\includegraphics[width=\textwidth]{./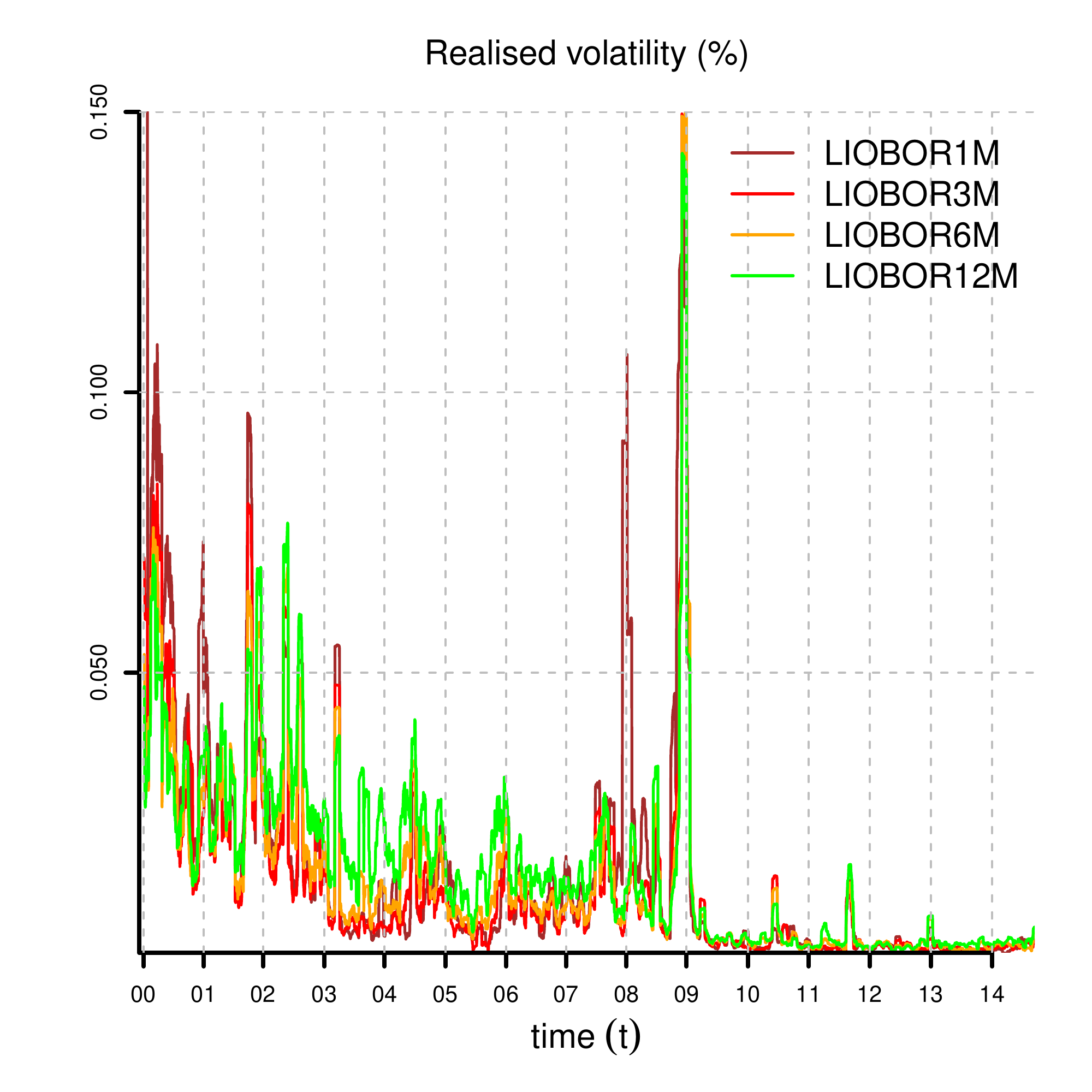}
\end{minipage}
\begin{minipage}[t]{0.42\textwidth}
\includegraphics[width=\textwidth]{./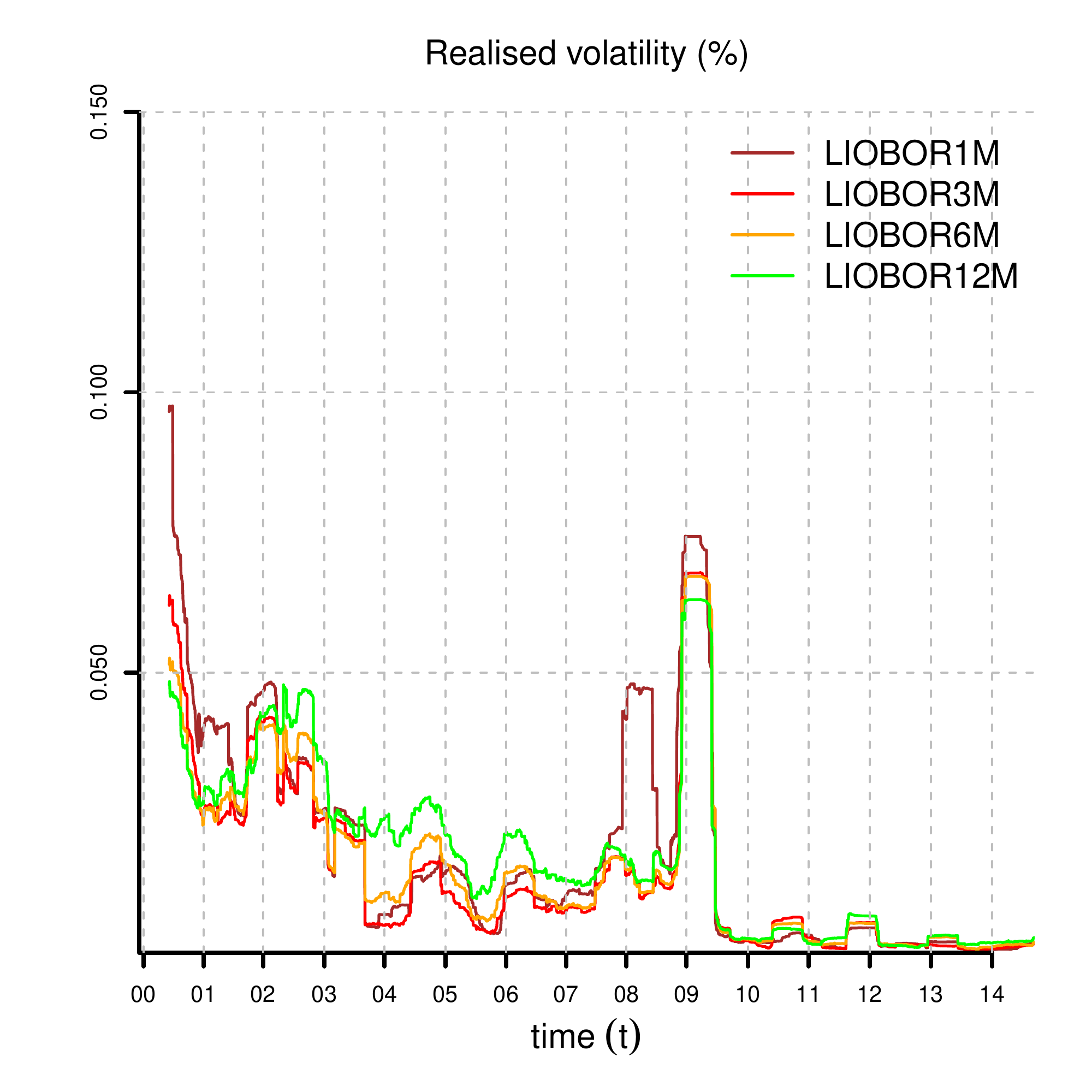}
\end{minipage}
\vspace{-15pt}
\caption{LIBOR realized volatility $\widehat{\rm RCov}(t,\tau,\tau)^{\frac{1}{2}}$ for $\tau=21,63,126,252$, window length $K=21$ (lhs) and $K=126$ (rhs).} 
\end{figure}
\vspace{-20pt}

\begin{figure}[H]
\centering
\begin{minipage}[t]{0.42\textwidth}
\includegraphics[width=\textwidth]{./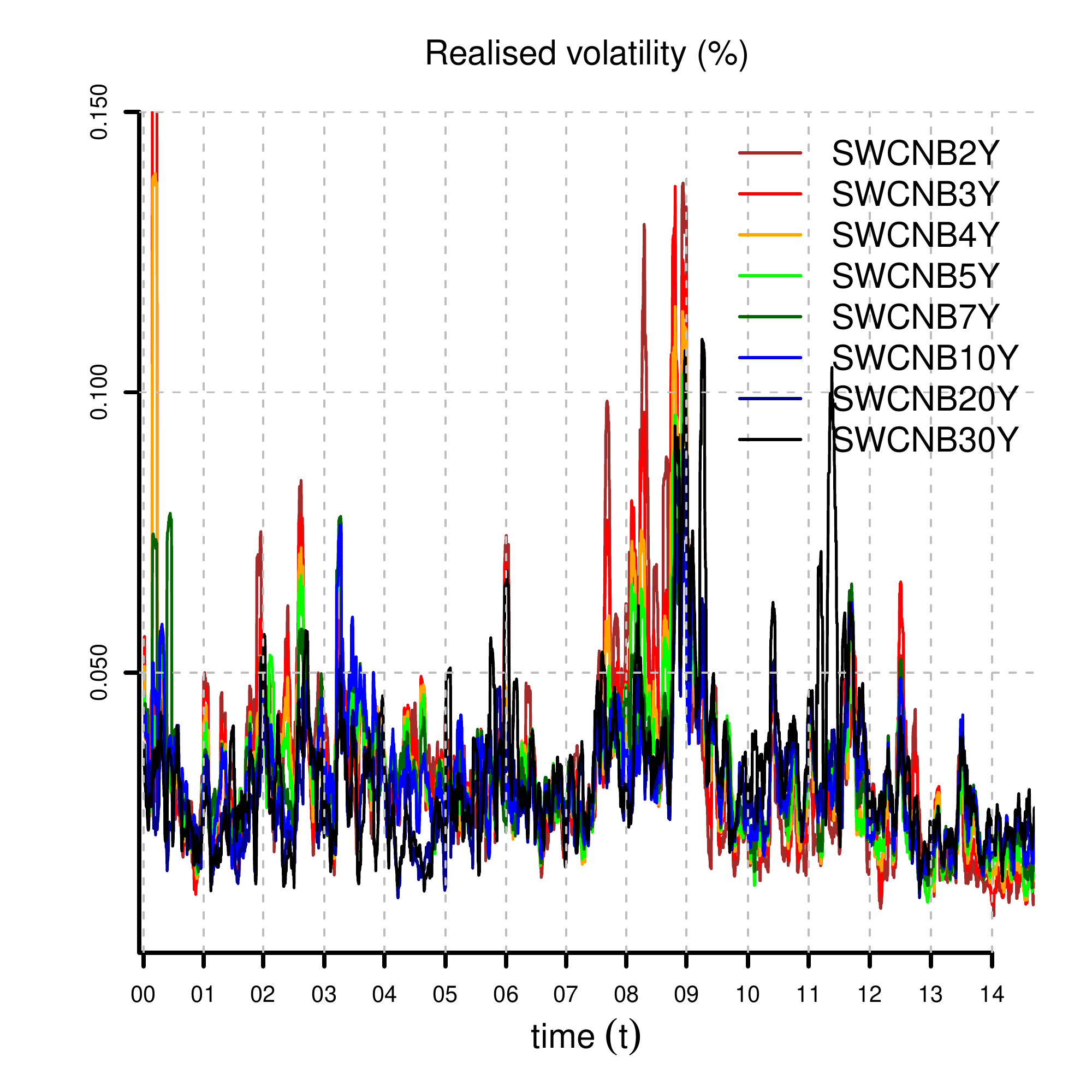}
\end{minipage}
\begin{minipage}[t]{0.42\textwidth}
\includegraphics[width=\textwidth]{./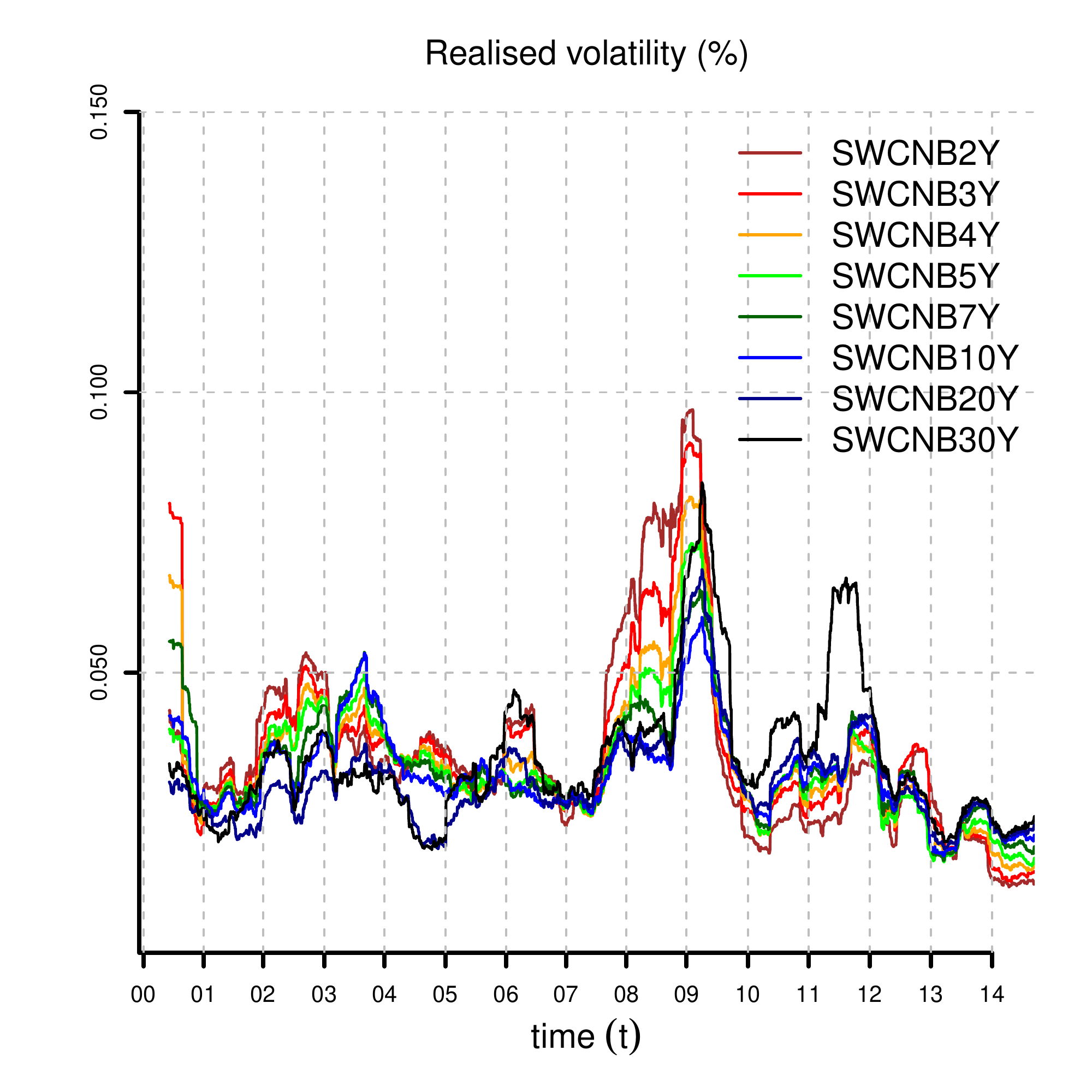}
\end{minipage}
\vspace{-15pt}
\caption{SWCNB realized volatility $\widehat{\rm RCov}(t,\tau,\tau)^{\frac{1}{2}}$ for $\tau/252=2,3,4,5,7,10,20,30$, window length $K=21$ (lhs) and $K=126$ (rhs).} 
\end{figure}
\vspace{-20pt}

\begin{figure}[H]
\centering
\begin{minipage}[t]{0.42\textwidth}
\includegraphics[width=\textwidth]{./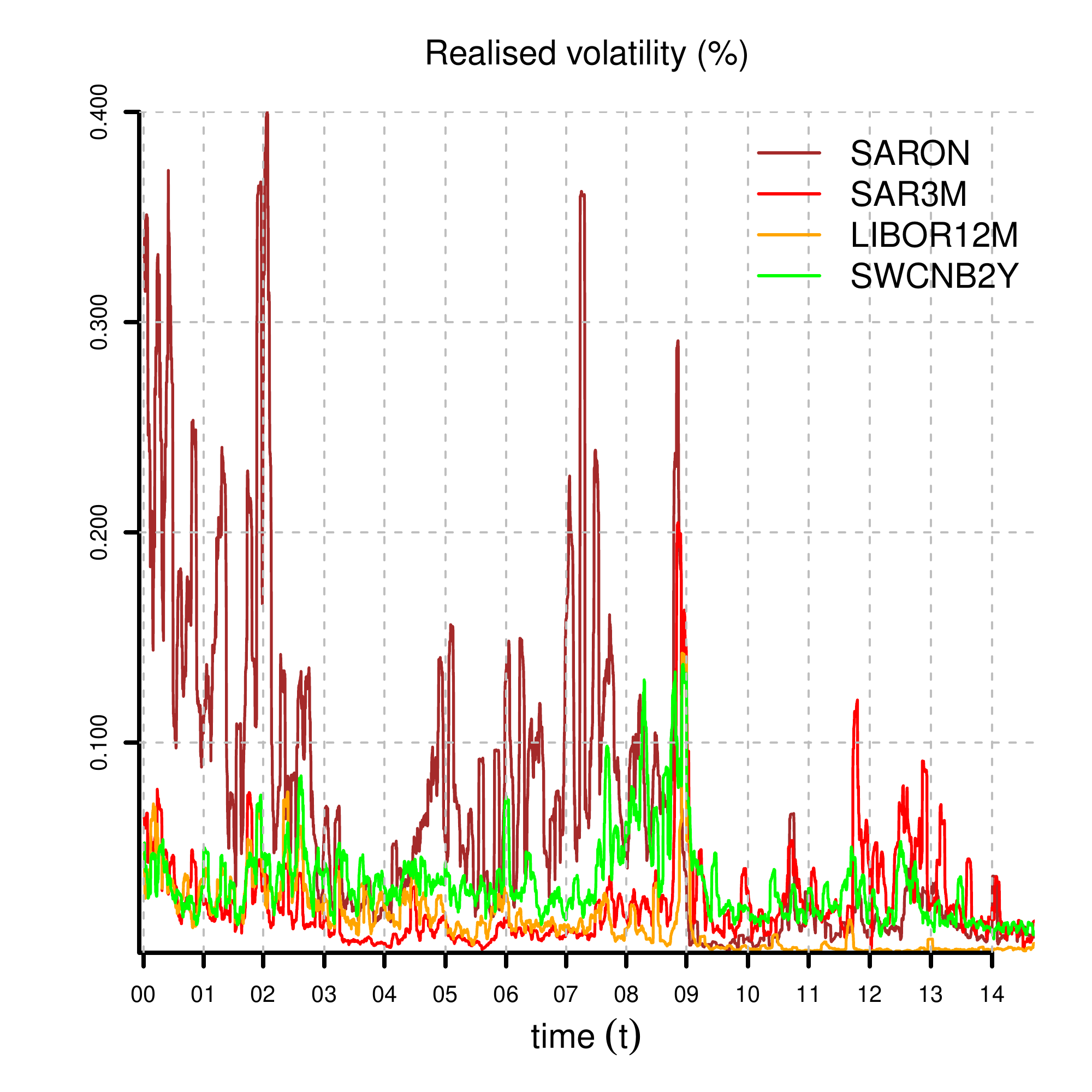}
\end{minipage}
\begin{minipage}[t]{0.42\textwidth}
\includegraphics[width=\textwidth]{./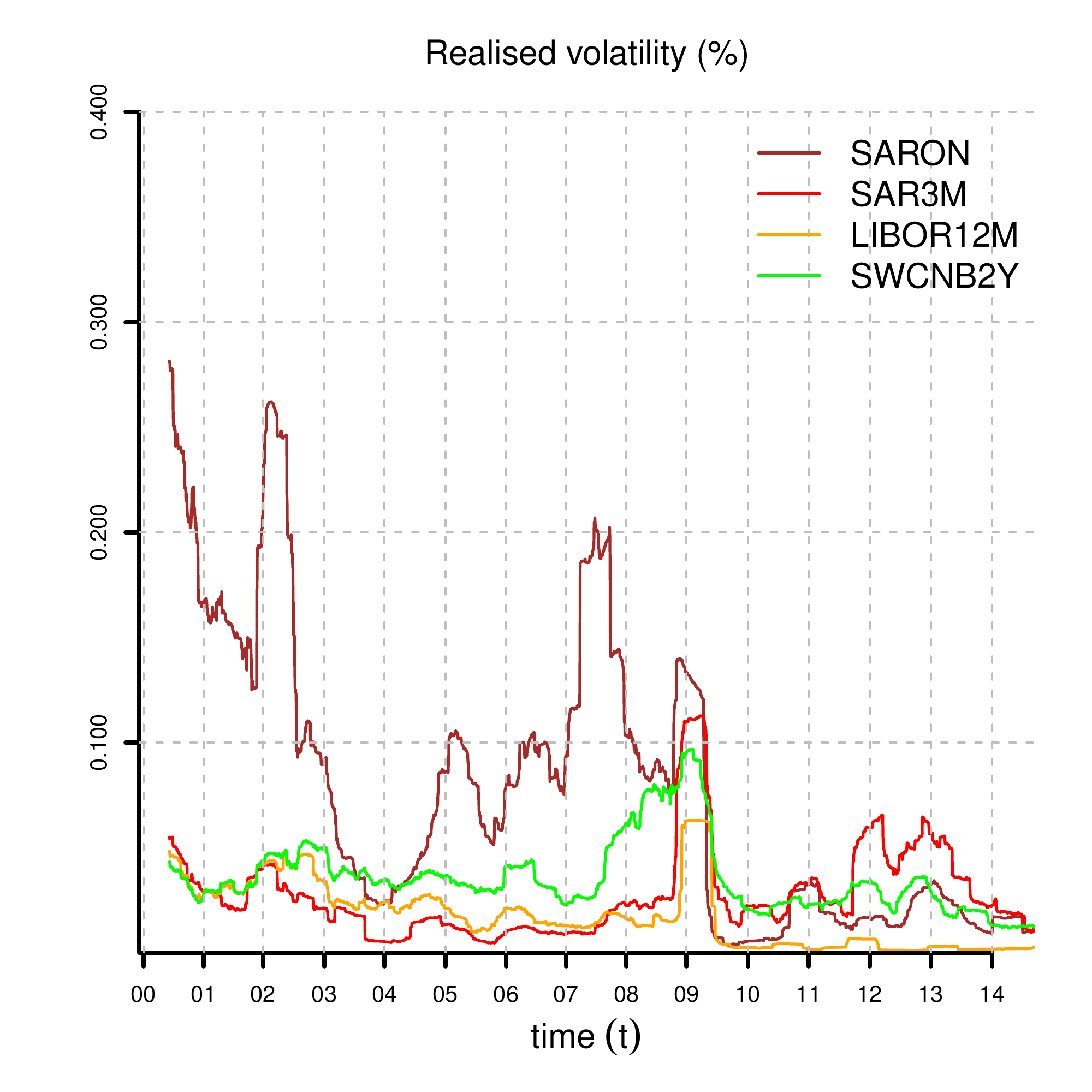}
\end{minipage}
\vspace{-15pt}
\caption{A selection of SAR, LIBOR and SWCNB realized volatility $\widehat{\rm RCov}(t,\tau,\tau)^{\frac{1}{2}}$ for $\tau=$ 1, 63, 252, 504, window length $K=21$ (lhs) and $K=126$ (rhs). Note that LIBOR looks rather differently from SAR and SWCNB after the financial crisis of 2008.}\label{fig: rcov end}
\end{figure}

\subsubsection{Determination of the Number of Factors} We need to determine the appropriate number of factors $n$. The more factors we use, the better we can fit the model to the data. However, the dimensionality of the estimation problem increases quadratically in the number of factors, and the model may become over-parametrized. Therefore, we look for a trade-off between the accuracy of the model and the number of parameters used. In Figure \ref{fig: number of factors sample dates}, we determine $\beta_{11},\ldots,\beta_{nn}$ and $\Sigma$ by solving optimization~\eqref{eq: co-var estimate} numerically for three observation dates and $n=2,3$. A three-factor model is able to capture rather accurately the dependence on the time to maturity $\tau$. In Figure \ref{fig: number of factors all dates start}, we compare the realized volatility of the numerical solution of \eqref{eq: co-var estimate} to the market realized volatility for all observation dates. We observe that in several periods, the two-factor model is not able to fit the SAR realized volatilities accurately for all times to maturities. The three-factor model achieves an accurate fit for most observation dates. The model exhibits small mismatches in 2001, 2008--2009 and 2011--2012. These are periods characterized by a sharp reduction in interest rates in response to financial crises. In September 2011, following strong appreciation of the Swiss Franc with respect to the Euro, the SNB pledged to no longer tolerate Euro-Franc exchange rates below the minimum rate of $1.20$, effectively enforcing a currency floor for more than three years. As a consequence of the European sovereign debt crisis and the intervention of the SNB starting from 2011, we have a long period of very low (even negative) interest rates. 

\begin{figure}[H]
\centering
\begin{minipage}[t]{0.42\textwidth}
\includegraphics[width=\textwidth]{./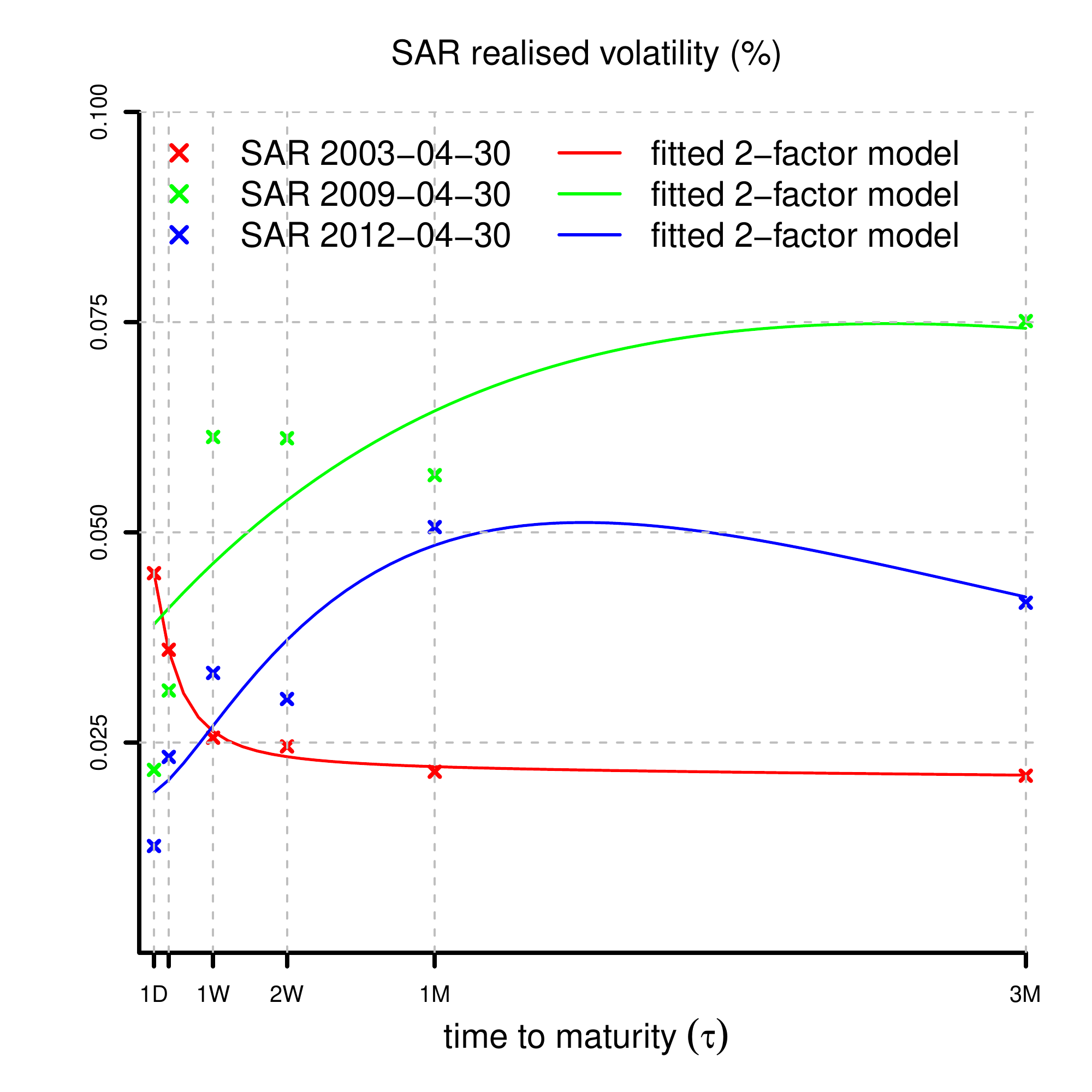}
\end{minipage}
\begin{minipage}[t]{0.42\textwidth}
\includegraphics[width=\textwidth]{./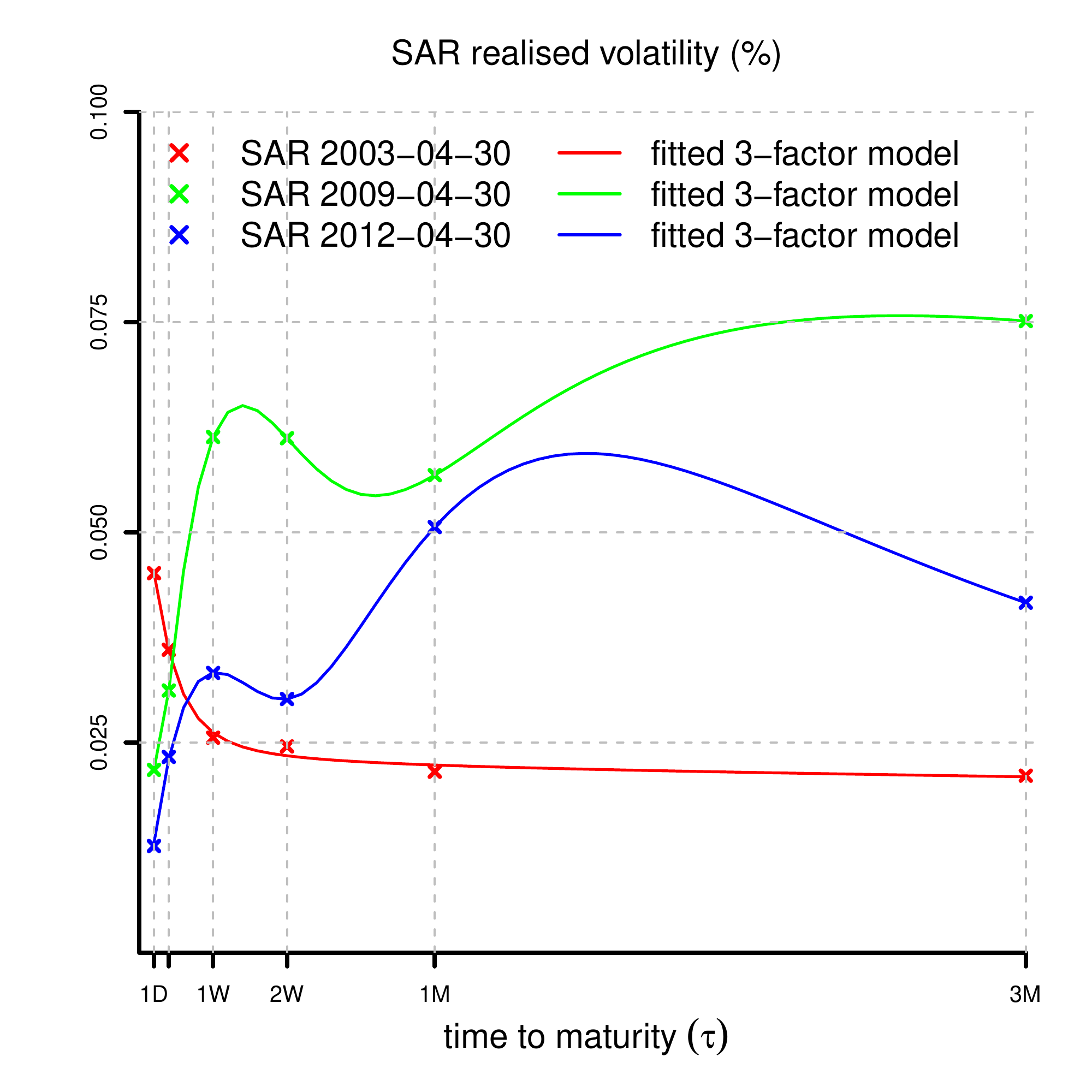}
\end{minipage}
\vspace{-15pt}
\caption{SAR realized volatility $\widehat{\mathrm{RCov}}(t,\tau,\tau)^{\frac{1}{2}}$ for $K=126$, $\tau=1,2,5,10,21,63$ and three observation dates compared to the realized volatility of the two- (lhs) and three-factor (rhs) Vasi\v cek model fitted by optimization~\eqref{eq: co-var estimate} for $M=6$, $\tau_1=1$, $\tau_2=2$, $\tau_3=5$, $\tau_4=10$, $\tau_5=21$, $\tau_6=63$ and $w_{ij}=1_{\{i=j\}}$. The three-factor model achieves an accurate fit.}\label{fig: number of factors sample dates}
\end{figure}

\subsubsection[Determination of Vasicek Parameters]{Determination of Vasi\v cek Parameters} Considering the results of Figure \ref{fig: number of factors all dates start}, we restrict ourselves from now on to three-factor Vasi\v cek models with parameters $\boldsymbol a,\boldsymbol b\in\R^3$ and:
\[
\begin{aligned}
\beta=\mathrm{diag}\left(\beta_{11},\beta_{22},\beta_{33}\right),\quad\alpha=\mathrm{diag}\left(\alpha_{22},\alpha_{22},\alpha_{33}\right),\quad\Sigma^{\frac{1}{2}}=\begin{pmatrix} \Sigma^{\frac{1}{2}}_{11} & 0 & 0 \\ \Sigma^{\frac{1}{2}}_{21} & \Sigma^{\frac{1}{2}}_{22} & 0 \\ \Sigma^{\frac{1}{2}}_{31} & \Sigma^{\frac{1}{2}}_{32} & \Sigma^{\frac{1}{2}}_{33} \end{pmatrix},
\end{aligned}
\]
where $-1\leq\beta_{11},\beta_{22},\beta_{33},\alpha_{11},\alpha_{22},\alpha_{33}\leq1$, $\Sigma^{\frac{1}{2}}_{11},\Sigma^{\frac{1}{2}}_{22},\Sigma^{\frac{1}{2}}_{33}>0$ and $\Sigma^{\frac{1}{2}}_{21},\Sigma^{\frac{1}{2}}_{31},\Sigma^{\frac{1}{2}}_{32}\in\R$.

\begin{figure}[p]
\vspace{-15pt}
\centering
\begin{subfigure}[t]{0.42\textwidth}
\includegraphics[width=\textwidth]{./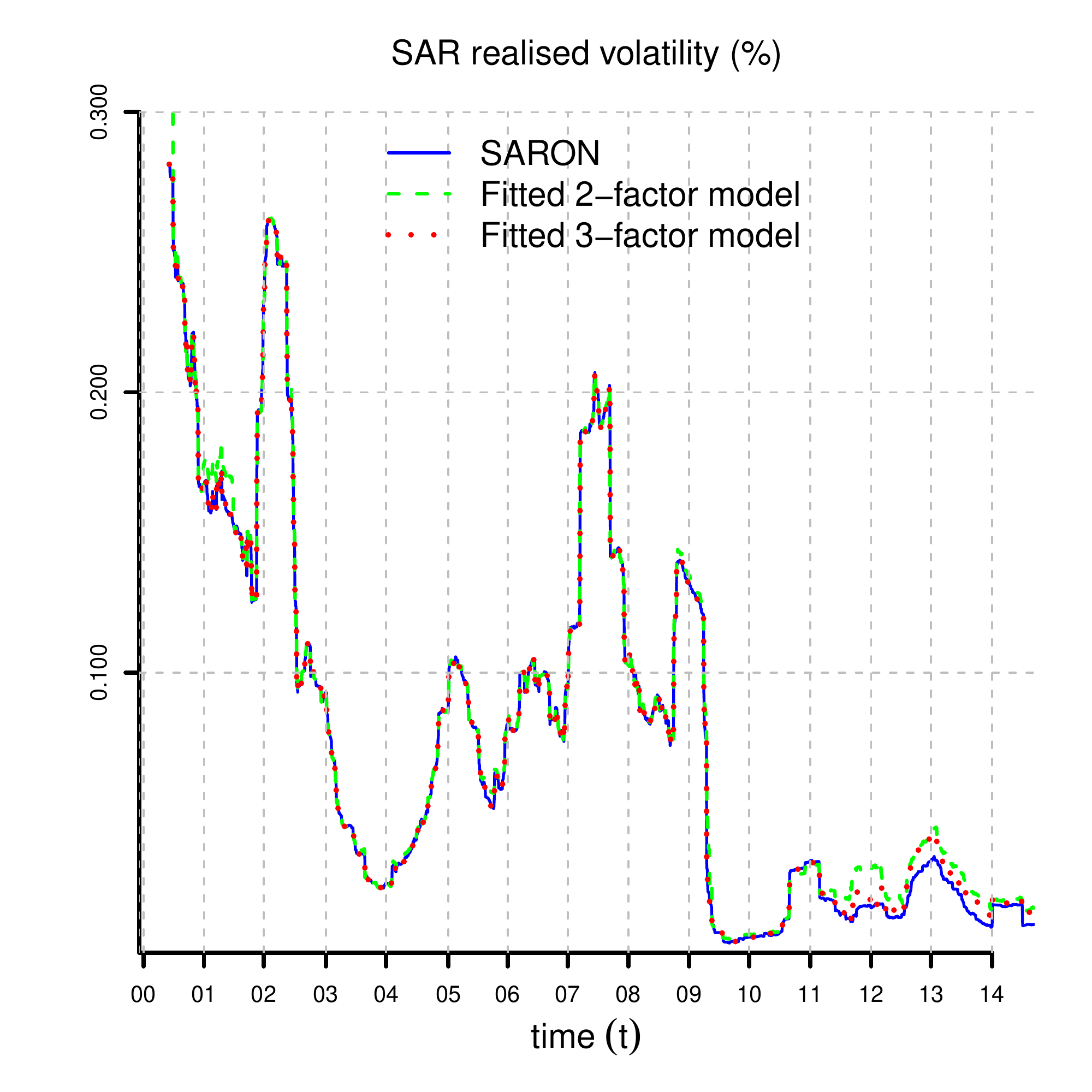}
\vspace{-25pt}
\caption{$\tau=1$.}
\end{subfigure}
\begin{subfigure}[t]{0.42\textwidth}
\includegraphics[width=\textwidth]{./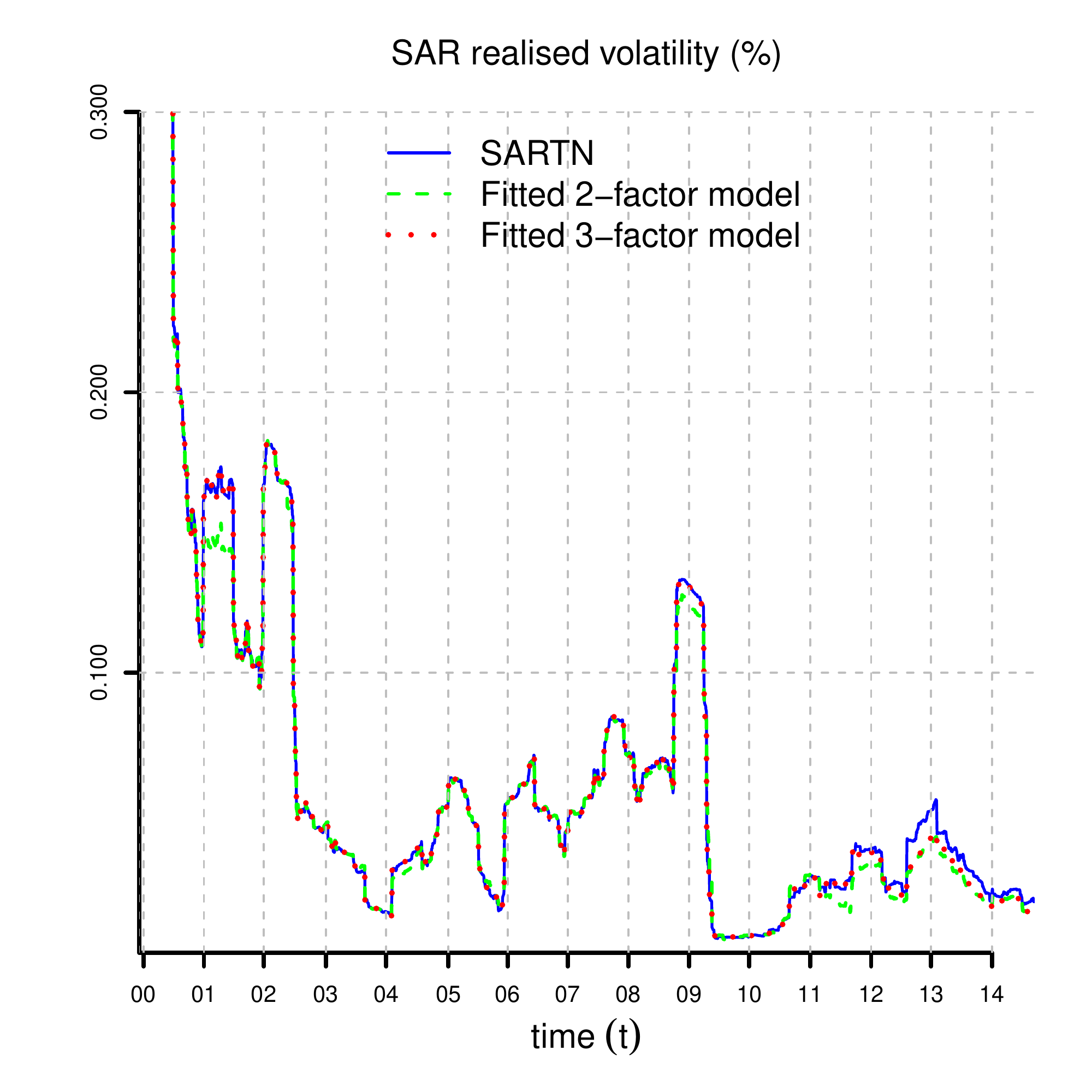}
\vspace{-25pt}
\caption{$\tau=2$.}
\end{subfigure}
\begin{subfigure}[t]{0.42\textwidth}
\includegraphics[width=\textwidth]{./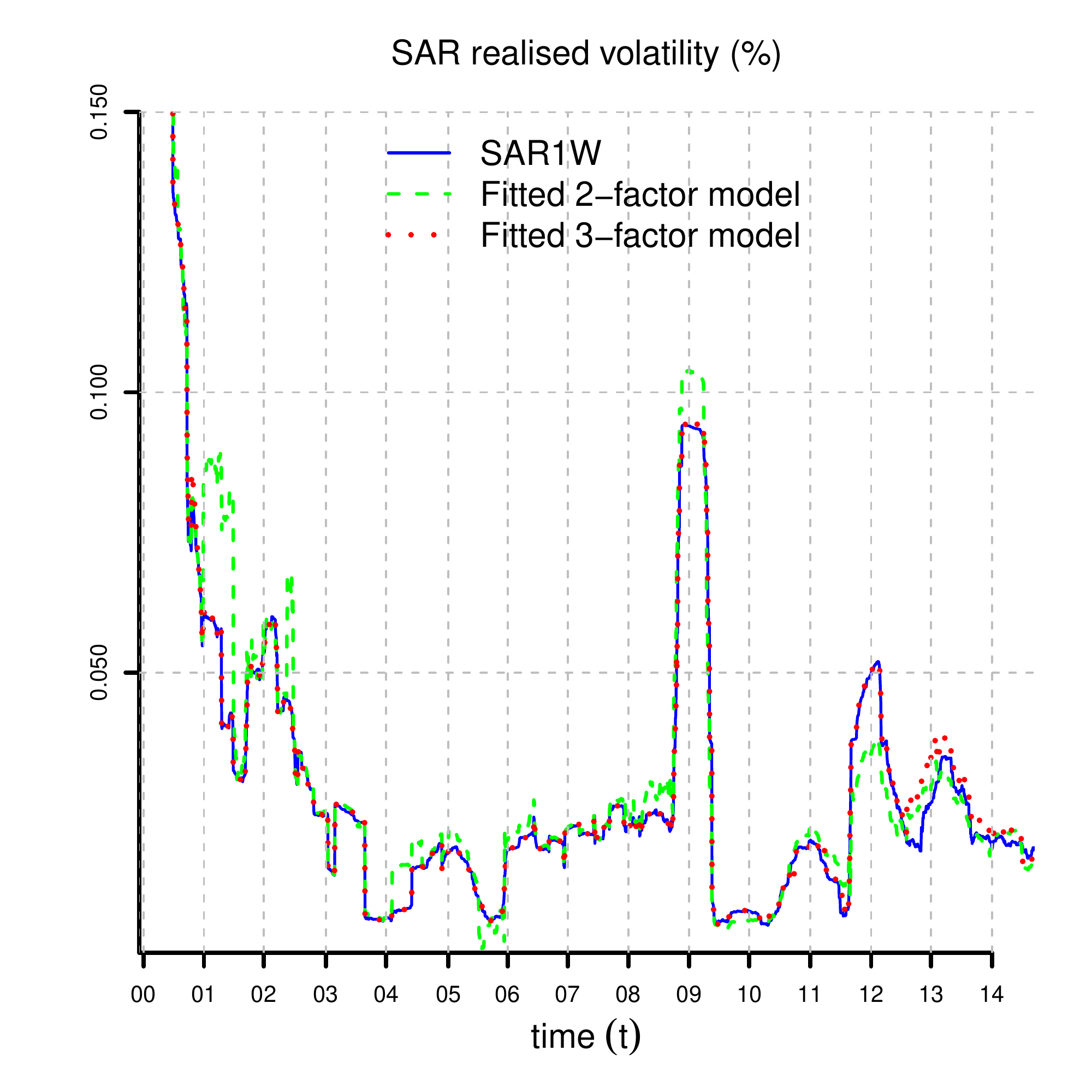}
\vspace{-25pt}
\caption{$\tau=5$.}
\end{subfigure}
\begin{subfigure}[t]{0.42\textwidth}
\includegraphics[width=\textwidth]{./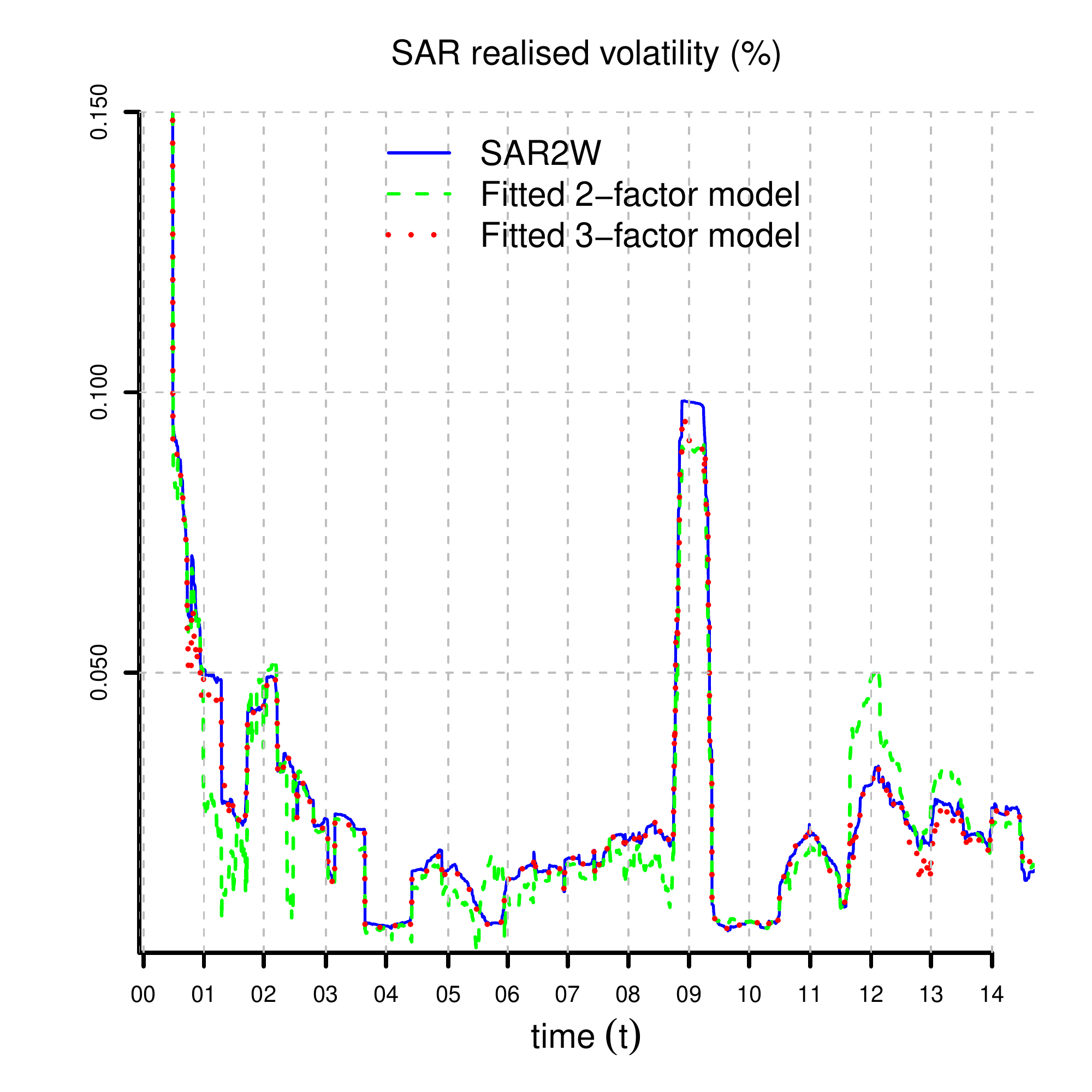}
\vspace{-25pt}
\caption{$\tau=10$.}
\end{subfigure}
\begin{subfigure}[t]{0.42\textwidth}
\includegraphics[width=\textwidth]{./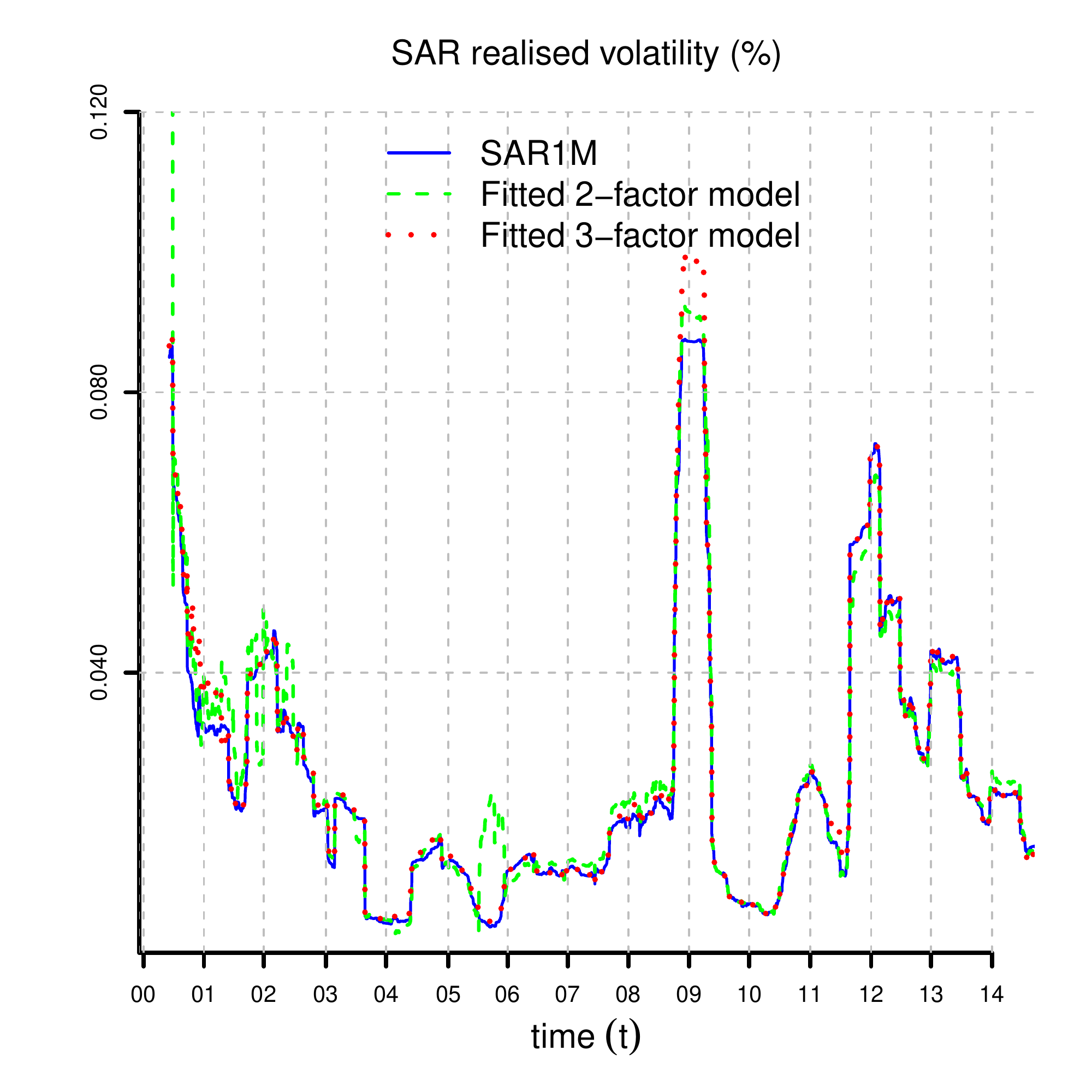}
\vspace{-25pt}
\caption{$\tau=21$.}
\end{subfigure}
\begin{subfigure}[t]{0.42\textwidth}
\includegraphics[width=\textwidth]{./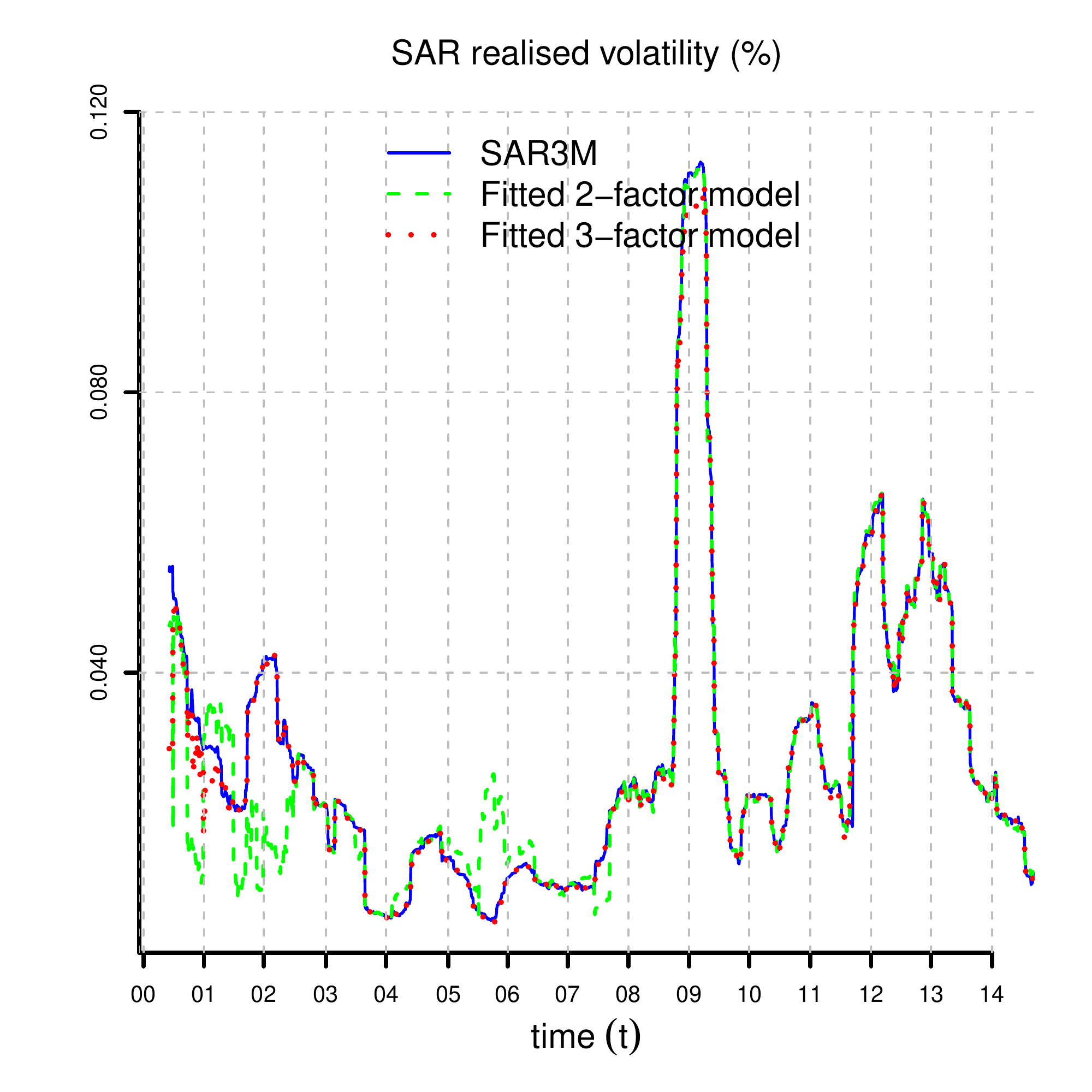}
\vspace{-25pt}
\caption{$\tau=63$.}
\end{subfigure}
\caption{SAR realized volatility $\widehat{\mathrm{RCov}}(t,\tau,\tau)^{\frac{1}{2}}$ for $K=126$ and a selection of times to maturity $\tau$ compared to the realized volatility of the two- and three-factor Vasi\v cek models fitted by optimization~\eqref{eq: co-var estimate} for $M=6$, $\tau_1=1$, $\tau_2=2$, $\tau_3=5$, $\tau_4=10$, $\tau_5=21$, $\tau_6=63$ and $w_{ij}=1_{\{i=j\}}$.}\label{fig: number of factors all dates start}
\end{figure}

\begin{figure}[p]
\vspace{-25pt}
\centering
\begin{subfigure}{\textwidth}
\begin{minipage}[t]{0.42\textwidth}
\includegraphics[width=\textwidth]{./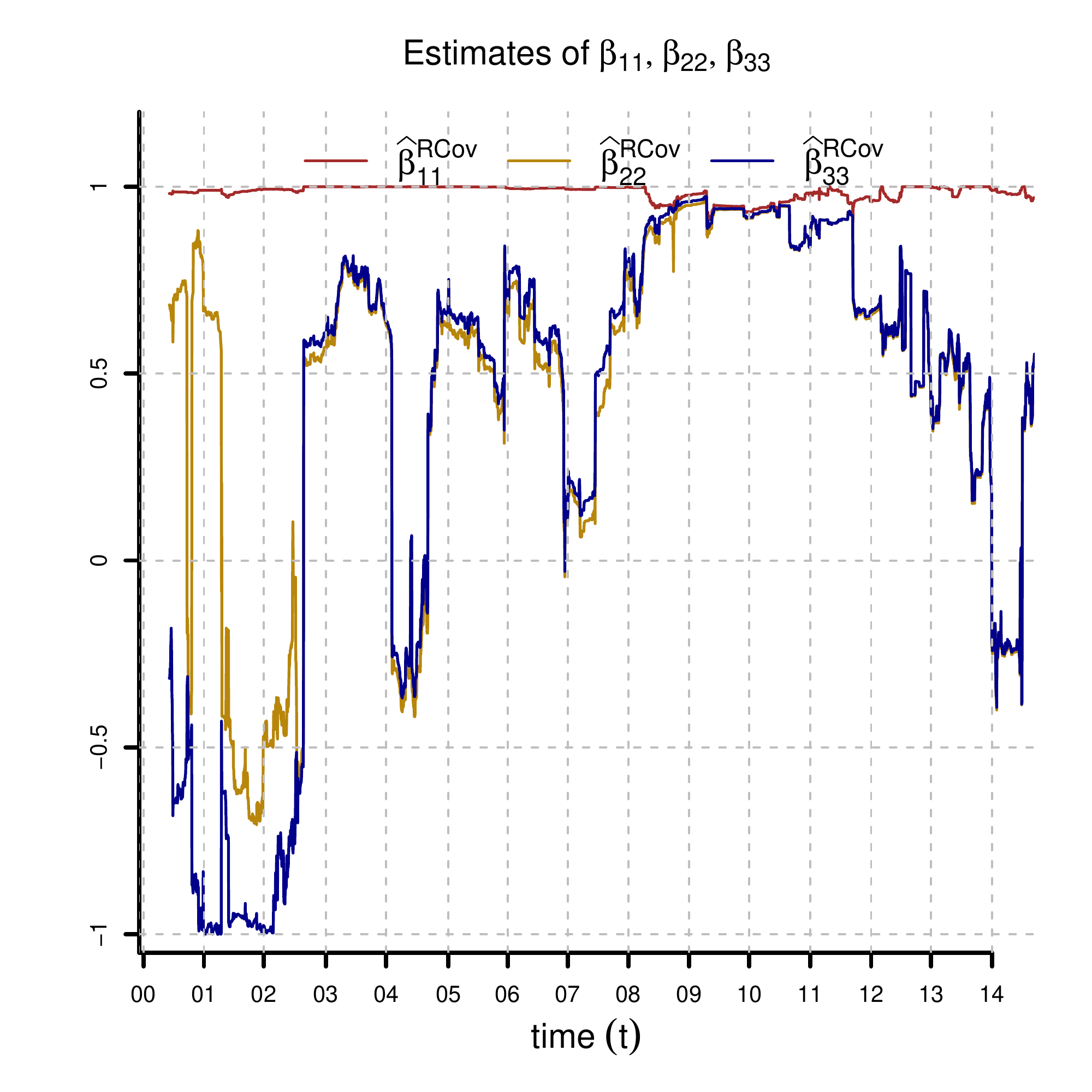}
\end{minipage}
\begin{minipage}[t]{0.42\textwidth}
\includegraphics[width=\textwidth]{./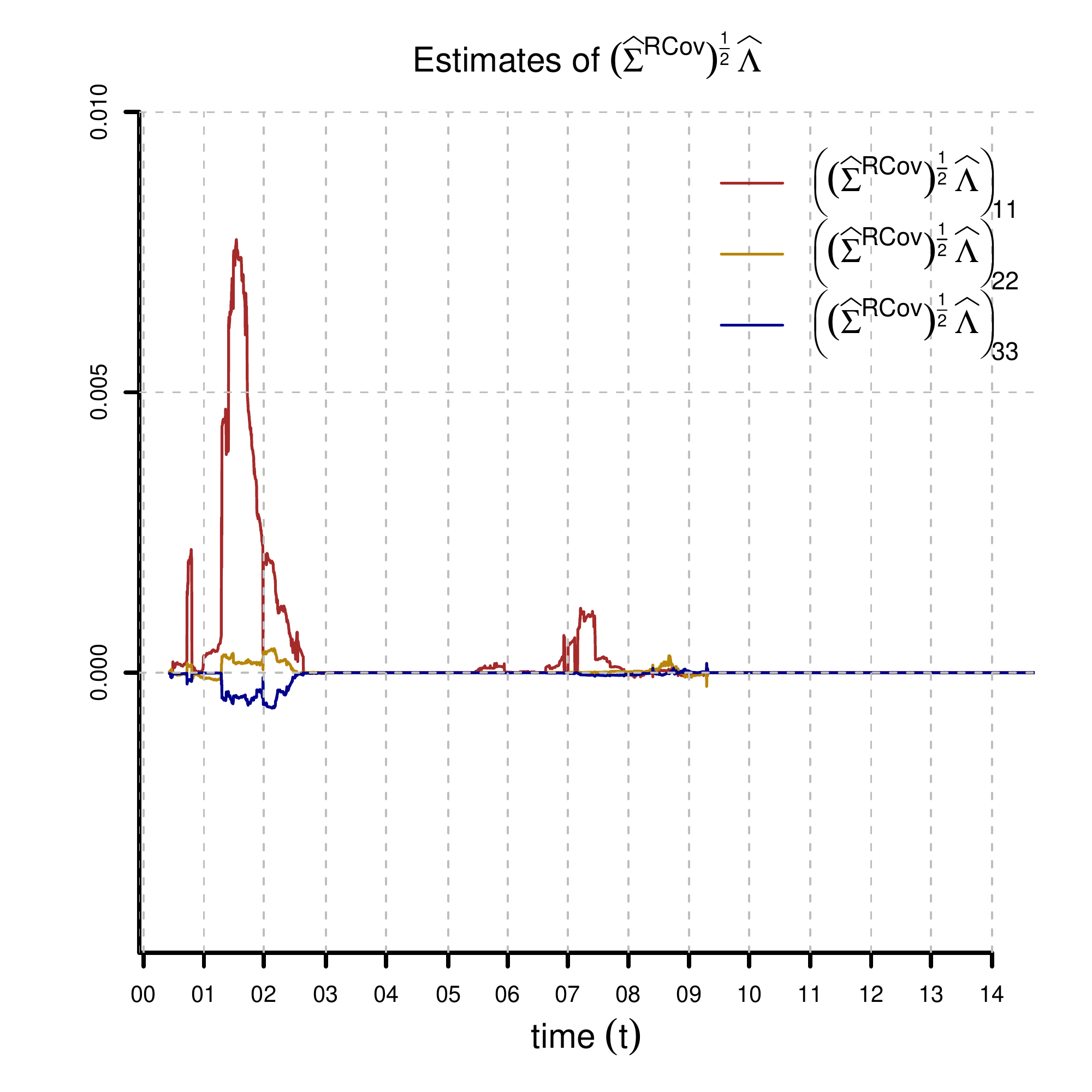}
\end{minipage}
\vspace{-10pt}
\caption{Note the values of $\beta_{ij}$ close to one (i.e., slow mean reversion), which are reasonable on the daily time grid. The differences in $\beta_{ij}$ under the risk-neutral and real-world measures are negligible.}\label{fig: parameters start a}
\end{subfigure}
\begin{subfigure}{\textwidth}
\begin{minipage}[t]{0.42\textwidth}
\includegraphics[width=\textwidth]{./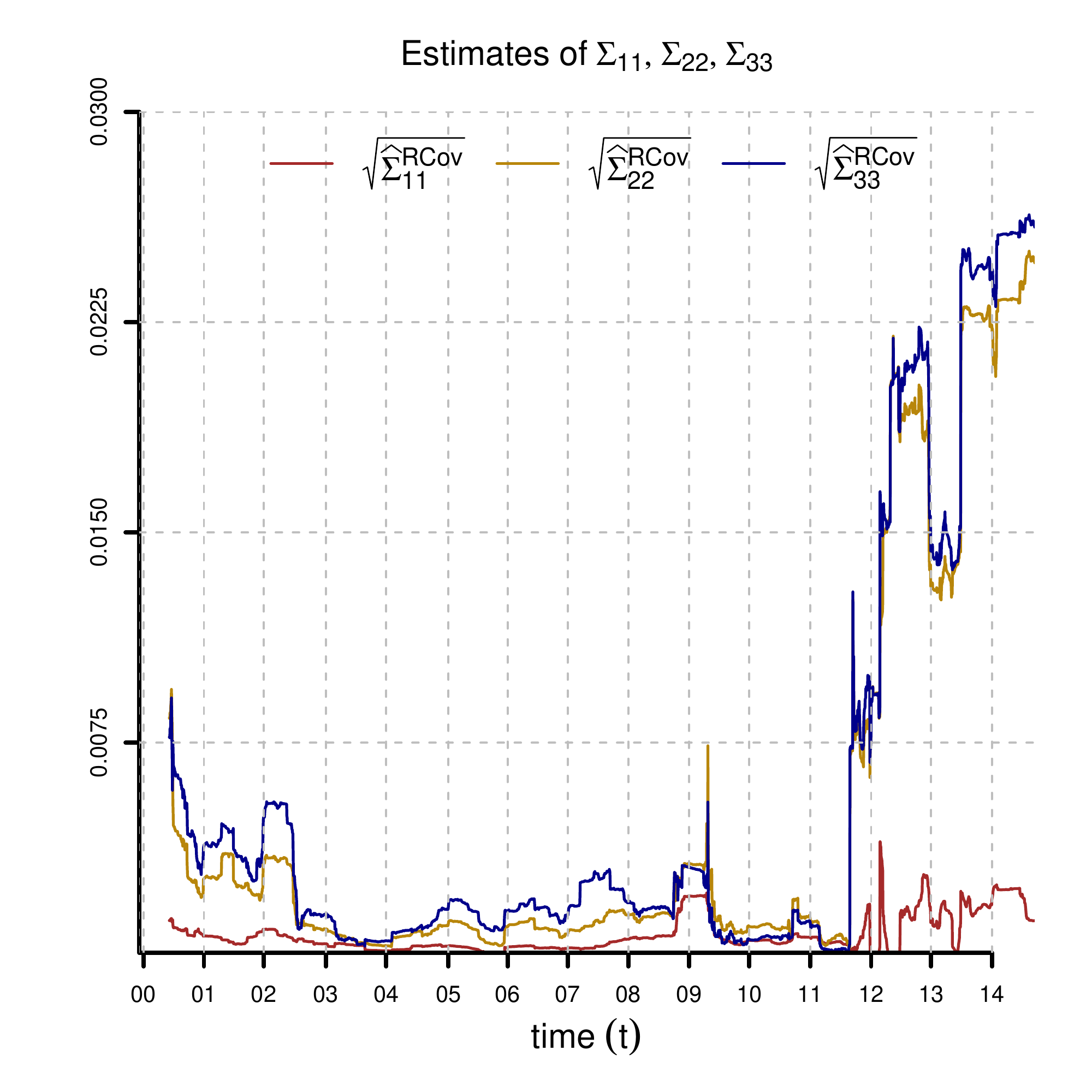}
\end{minipage}
\begin{minipage}[t]{0.42\textwidth}
\includegraphics[width=\textwidth]{./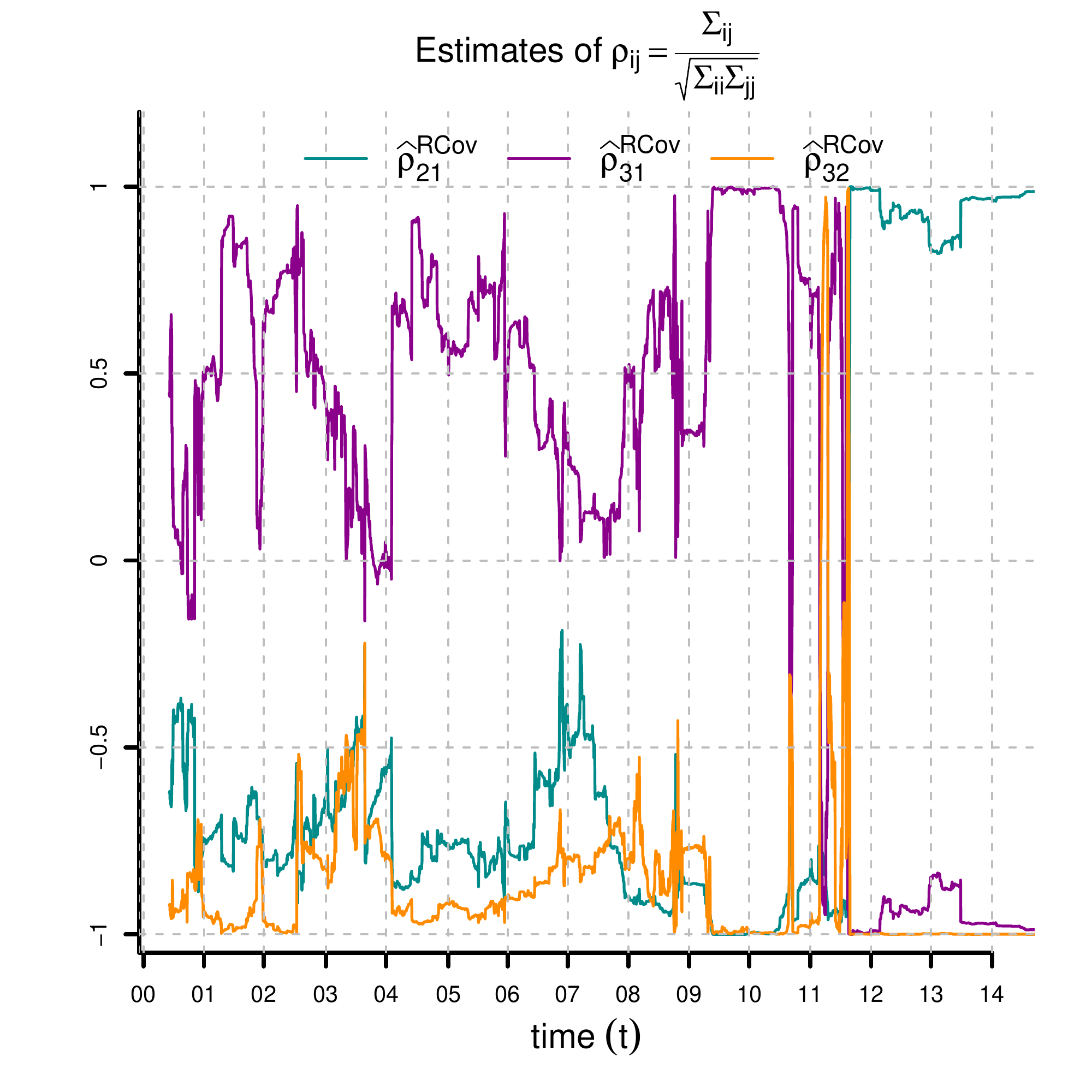}
\end{minipage}
\vspace{-10pt}
\caption{Note the large spikes in the volatilities and strong correlations among the factors during the European sovereign debt crisis and after the SNB intervention in 2011.}\label{fig: sigma}
\end{subfigure}
\begin{subfigure}{\textwidth}
\begin{minipage}[t]{0.42\textwidth}
\includegraphics[width=\textwidth]{./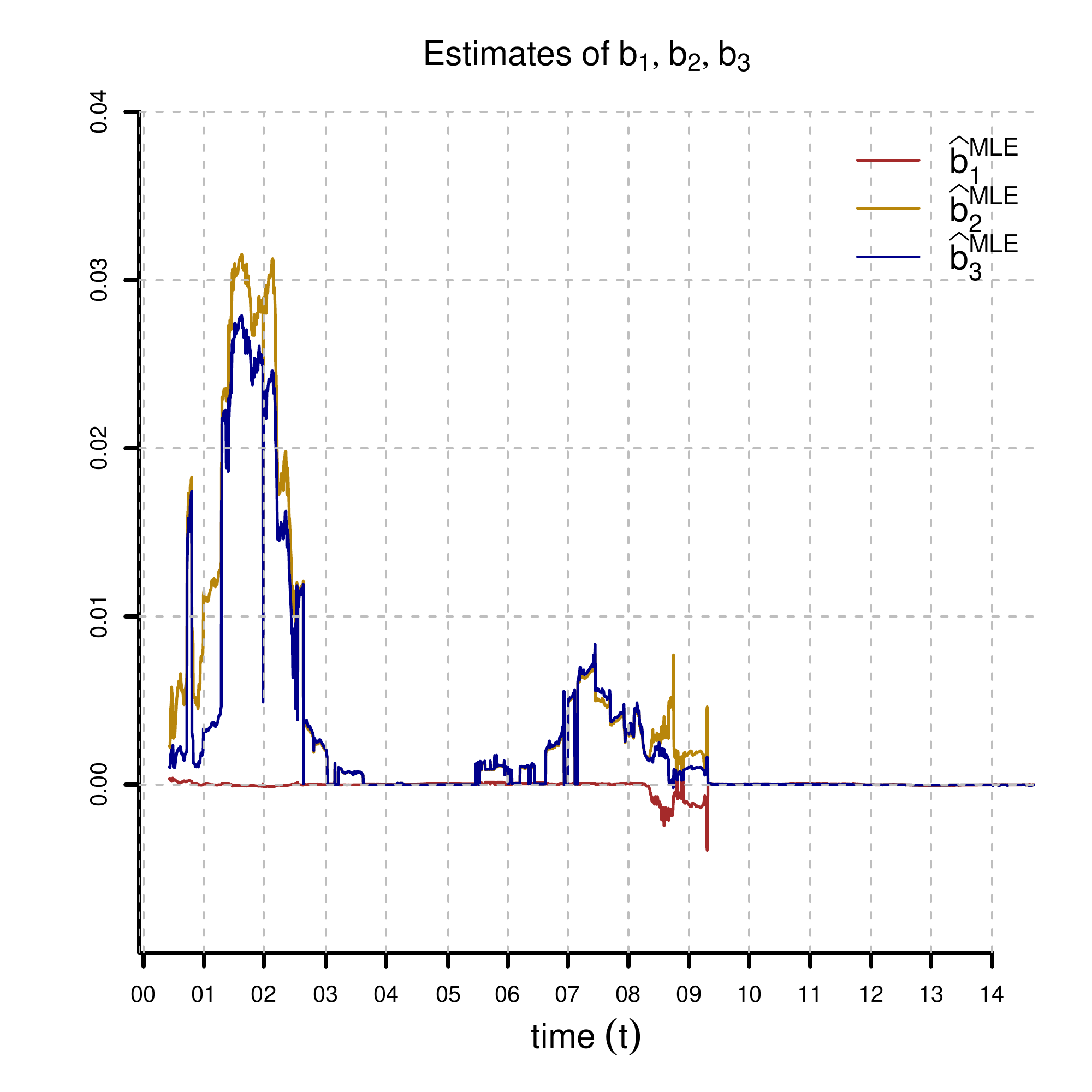}
\end{minipage}
\begin{minipage}[t]{0.42\textwidth}
\includegraphics[width=\textwidth]{./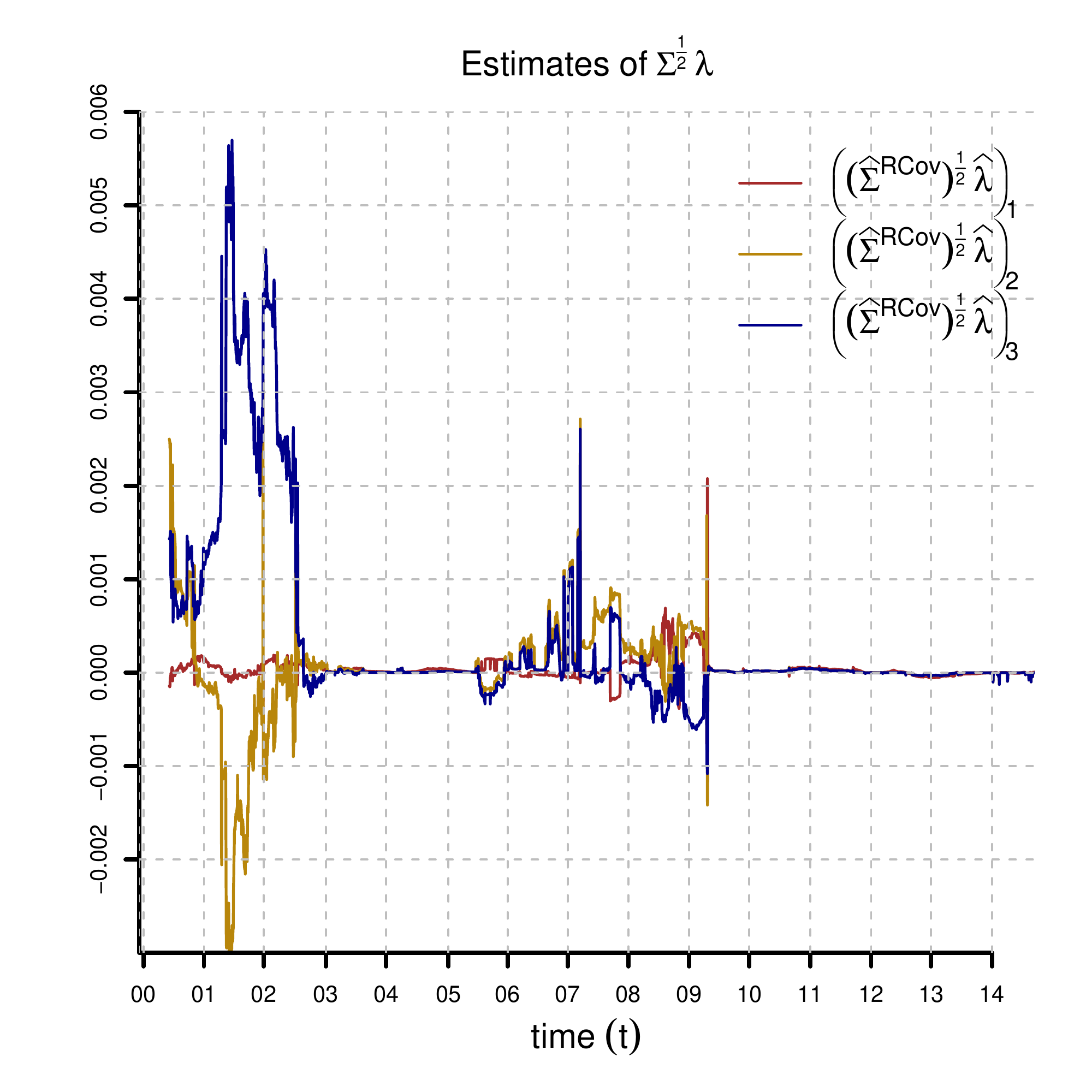}
\end{minipage}
\vspace{-10pt}
\caption{Note the considerable difference between $\boldsymbol b$ and $\boldsymbol a$ in 2000--2002 and 2006--2009.}\end{subfigure}
\vspace{-5pt}
\caption{Parameter estimation by optimizations \eqref{eq: co-var estimate} and \eqref{eq: mle} in the three-factor model for $K=126$, $M=6$, $\tau_1=1$, $\tau_2=2$, $\tau_3=5$, $\tau_4=10$, $\tau_5=21$, $\tau_6=63$, $w_{ij}=1_{\{i=j\}}$ and $S=10^{-5}\cdot \mathds{1}$.}\label{fig: parameters start}
\end{figure}

In Figure \ref{fig: parameters start}, we plot the numerical solutions of optimizations \eqref{eq: co-var estimate} and \eqref{eq: mle} for all observation dates. The parameters are reasonable for most of the observation dates. We observe that the estimates of $\beta_{11}$ are close to one for all observation dates. Our values for the speed of mean reversion are reasonable on a daily time grid. Note that $\beta$ scales as $\beta^d$ on a $d$-days time grid; see Section~\ref{calibration real world 2}. The speeds of mean reversion of $X_2$ and $X_3$ are higher than that of $X_1$ for most of the observation dates. We also see that the volatility of $X_1$ is lower than that of $X_2$ and $X_3$. In 2011, we observe large spikes in the factor volatilities. Starting from 2011, we have a period with strong correlations among the factors. From these results, we conclude that the three-factor Vasi\v cek model is reasonable for Swiss interest rates. Particularly challenging for the estimation is the period 2011--2014 of low interest rates following the European sovereign debt crisis and the SNB intervention. In Figure \ref{fig: parameters start a} (rhs), we observe that the difference in the speeds of mean-reversion under the risk-neutral and real-world measures is negligible. The difference between $\boldsymbol b$ and $\boldsymbol a$ is considerable in certain time periods. From the estimation results, we conclude that a constant market price of risk assumption is reasonable and set from now on $\Lambda=0$. In Figure \ref{fig: loglikelihood}, we compute the objective function of optimization~\eqref{eq: mle} for $(\boldsymbol b,\beta,\Sigma,\boldsymbol a,\alpha)=(\boldsymbol 0, \widehat{\beta}^{\mathrm{RCov}},\widehat{\Sigma}^{\mathrm{RCov}},\boldsymbol 0, \widehat{\beta}^{\mathrm{RCov}})$ and compare it to the numerical solution $(\widehat{\boldsymbol b}^{\mathrm{MLE}}, \widehat{\beta}^{\mathrm{RCov}},\widehat{\Sigma}^{\mathrm{RCov}},\widehat{\boldsymbol a}^{\mathrm{MLE}}, \widehat{\beta}^{\mathrm{RCov}})$. We observe that in 2003--2005 and 2010--2014, the parameter configuration $(\boldsymbol 0, \widehat{\beta}^{\mathrm{RCov}},\widehat{\Sigma}^{\mathrm{RCov}},\boldsymbol 0, \widehat{\beta}^{\mathrm{RCov}})$ is nearly optimal. In these periods, we have very low interest rates, and therefore, estimates of $\boldsymbol b$ and $\boldsymbol a$ close to zero are reasonable. Given the estimated parameters, we calibrate the Hull--White extension by equation $\eqref{eq: re-calibration step 2}$ to the full yield curve interpolated from SAR and SWCNB; see Figure \ref{fig: hwe}. We point out that our fitting method is not a purely statistical procedure; rather, it is a combination of estimation and calibration in accordance with the paradigm of robust calibration, as explained in \cite{Harms}.

\begin{figure}[h]
\centering
\begin{minipage}[t]{0.42\textwidth}
\includegraphics[width=\textwidth]{./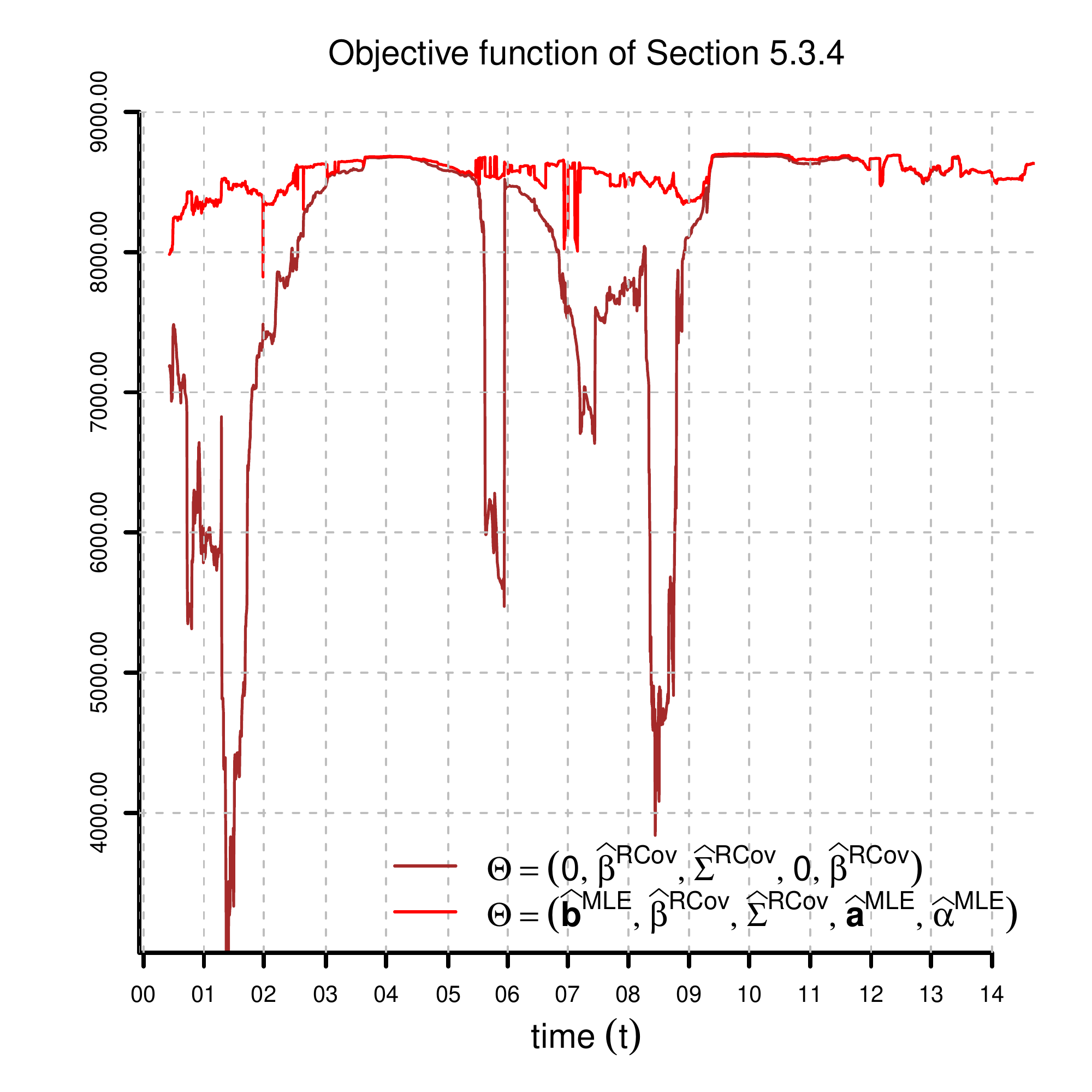}
\end{minipage}
\begin{minipage}[t]{0.42\textwidth}
\includegraphics[width=\textwidth]{./figures/mle_calibration_sar_lambda_3fac.pdf}
\end{minipage}
\vspace{-10pt}
\caption{Objective function $\log\mathcal{L}_t$ (lhs) and values of $(\Sigma^{\frac{1}{2}}\boldsymbol\lambda)_{1}=b_{1}-a_{1}$, $(\Sigma^{\frac{1}{2}}\boldsymbol\lambda)_{2}=b_{2}-a_{2}$ and \mbox{$(\Sigma^{\frac{1}{2}}\boldsymbol\lambda)_{3}=b_{3}-a_{3}$ (rhs)} given by optimization~\eqref{eq: mle} in the three-factor model for $K=126$, $M=6$, $\tau_1=1$, $\tau_2=2$, $\tau_3=5$, $\tau_4=10$, $\tau_5=21$, $\tau_6=63$, $w_{ij}=1_{\{i=j\}}$ and $S=10^{-5}\cdot \mathds{1}$. We compare the value of the objective function for $(\boldsymbol b,\beta,\Sigma,\boldsymbol a,\alpha)=(\boldsymbol 0,\beta^{\mathrm{RCov}},\Sigma^{\mathrm{RCov}},\boldsymbol 0,\alpha^{\mathrm{RCov}})$ and the numerical solution of the optimization. The configuration $(\boldsymbol 0,\beta^{\mathrm{RCov}},\Sigma^{\mathrm{RCov}},\boldsymbol 0,\alpha^{\mathrm{RCov}})$ is almost optimal in low interest rate~times.}\label{fig: loglikelihood}
\end{figure}
\begin{figure}[h]
\centering
\begin{minipage}[t]{0.42\textwidth}
\includegraphics[width=\textwidth]{./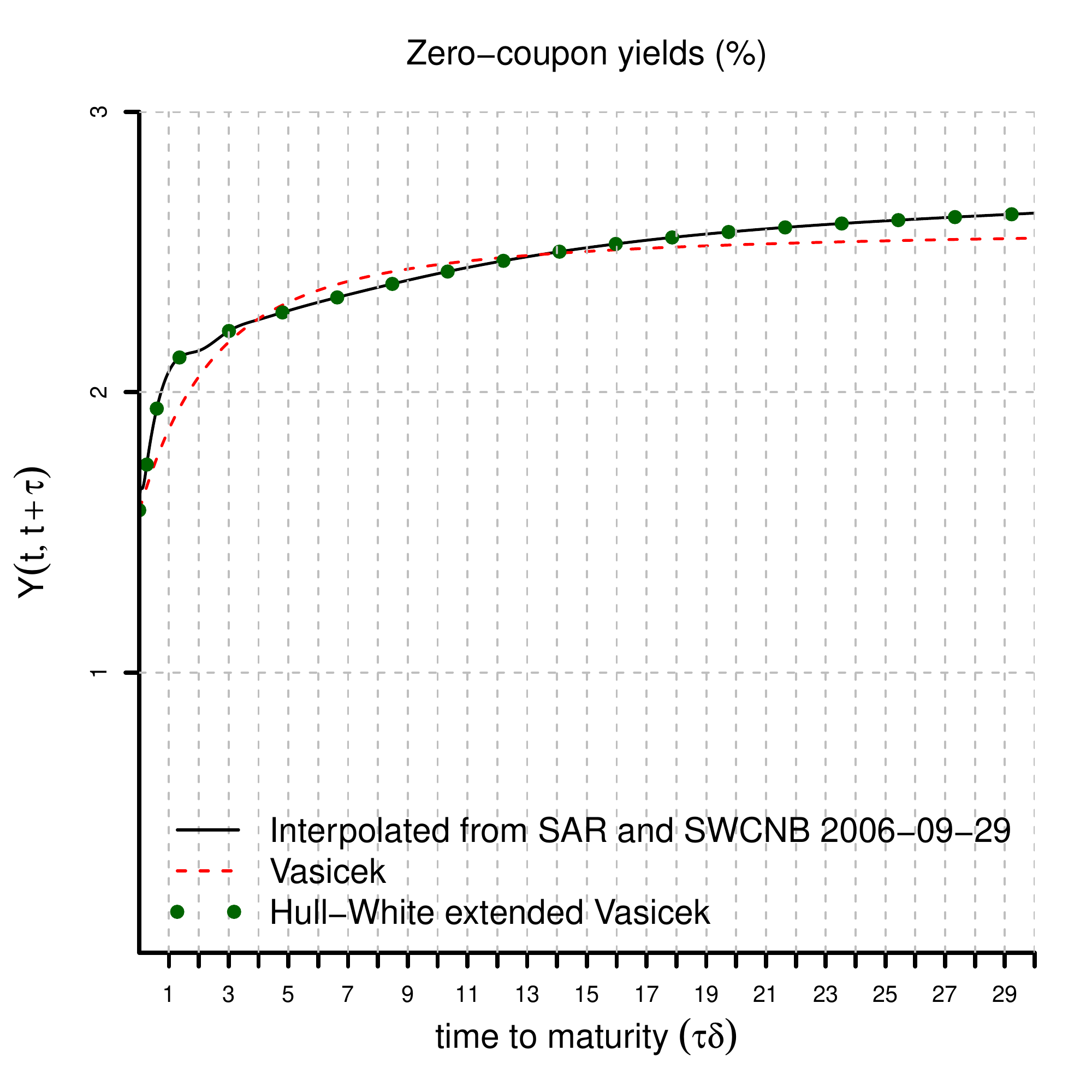}
\end{minipage}
\begin{minipage}[t]{0.42\textwidth}
\includegraphics[width=\textwidth]{./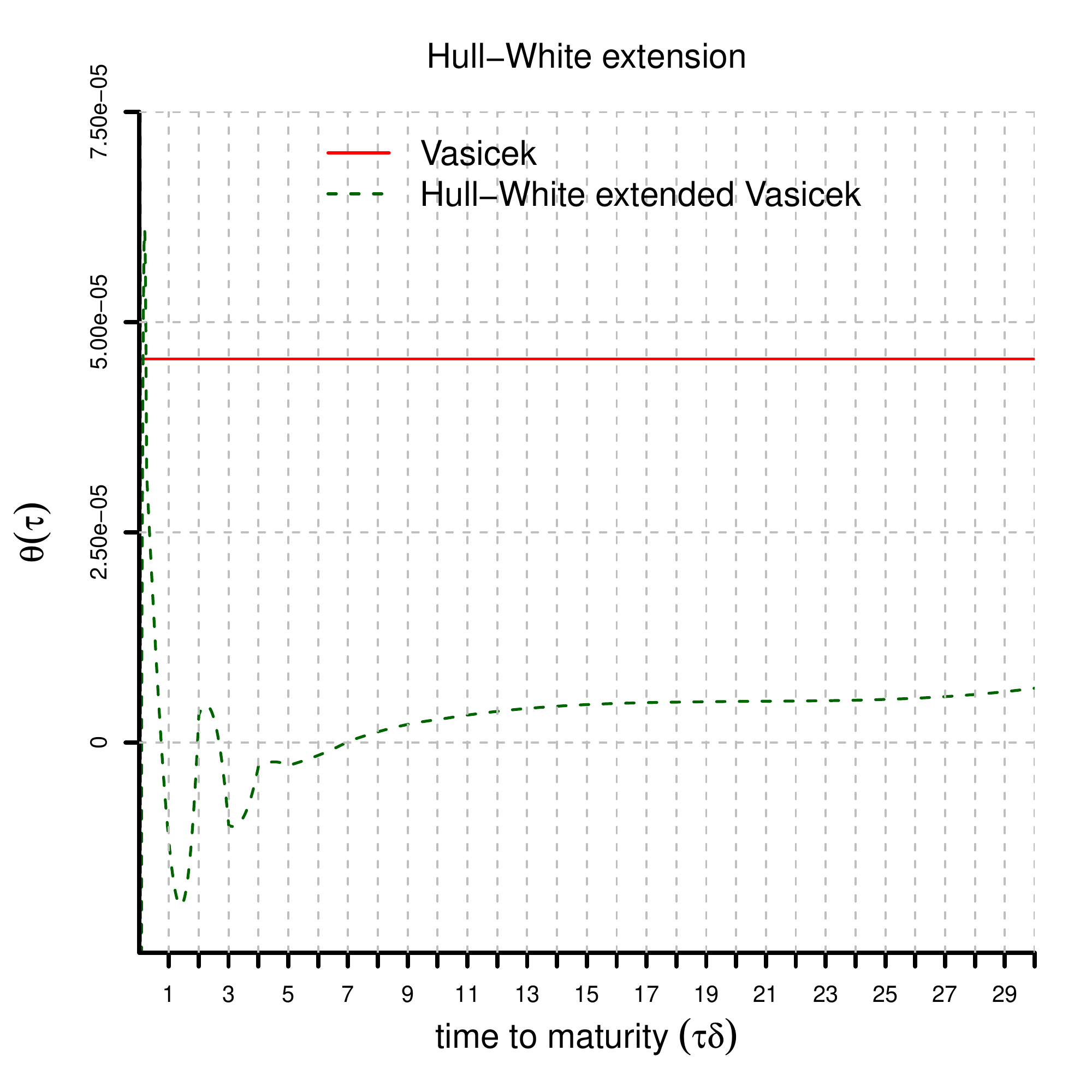}
\end{minipage}
\vspace{-10pt}
\caption{Three-factor Hull--White extended Vasi\v cek yield curve (lhs) and Hull--White extension $\theta$ (rhs) as of 29 September 2006. The parameters are estimated as in Figure \ref{fig: parameters start}. The initial factors are obtained from the Kalman filter for the estimated parameters. The calibration of the Hull--White extension requires yields on a time to maturity grid of size $\Delta$. These are interpolated from SAR and SWCNB using cubic splines.}\label{fig: hwe}
\end{figure}

\subsubsection[Selection of a Model for the Vasicek Parameters]{Selection of a Model for the Vasi\v cek Parameters} In the following, we use the CRC approach to construct a modification of the Vasi\v cek model with stochastic volatility. We model the process $(\Sigma(t))_{t\in\mathbb N_0}$ by a Heston-like \cite{Heston} approach. We assume deterministic correlations among the factors and stochastic volatility given by: 
\[
\begin{pmatrix} \Sigma_{11}(t) \\\Sigma_{22}(t) \\ \Sigma_{33}(t) \end{pmatrix}=\boldsymbol\varphi+\phi\begin{pmatrix} \Sigma_{11}(t-1) \\\Sigma_{22}(t-1) \\ \Sigma_{33}(t-1)\end{pmatrix}+\begin{pmatrix} \sqrt{\Sigma_{11}(t-1)} & 0 & 0 \\ 0 & \sqrt{\Sigma_{22}(t-1)} & 0\\ 0 & 0 & \sqrt{\Sigma_{33}(t-1)}\end{pmatrix}\Phi^{\frac{1}{2}}\widetilde{\boldsymbol\varepsilon}(t),
\]
where $\boldsymbol\varphi\in\mathbb R^3_+$, $\phi=\mathrm{diag}\left(\phi_{11},\phi_{22},\phi_{33}\right)\in\mathbb R^{3\times 3}$, $\Phi^{\frac{1}{2}}\in\mathbb R^{3\times 3}$ non-singular, and for each $t\in\N$, $\widetilde{\boldsymbol\varepsilon}(t)$ has a standard Gaussian distribution under $\p$, conditionally given $\mathcal F(t-1)$. Moreover, we assume that $\left(\boldsymbol\varepsilon(t),\widetilde{\boldsymbol\varepsilon}(t)\right)$ is multivariate Gaussian under $\p$, conditionally given $\mathcal F(t-1)$. Note that $\boldsymbol\varepsilon(t)$ and $\widetilde{\boldsymbol\varepsilon}(t)$ are allowed to be correlated. The matrix valued process $(\Sigma(t))_{t\in\mathbb N_0}$ is constructed combining this stochastic volatility model with fixed correlation coefficients. This model is able to capture the stylized fact that volatility appears to be more noisy in high volatility periods; see Figure \ref{fig: sigma}. 

We use the volatility time series of Figure \ref{fig: sigma} to specify $\boldsymbol\varphi$, $\phi$ and $\Phi$. We rewrite the equation for the evolution of the volatility as:
\[
\frac{\Sigma_{ii}(t)}{\sqrt{\Sigma_{ii}(t-1)}}=\frac{\varphi_i}{\sqrt{\Sigma_{ii}(t-1)}}+\phi_{ii}\sqrt{\Sigma_{ii}(t-1)}+(\Phi^{\frac{1}{2}}\widetilde{\boldsymbol\varepsilon}(t))_{i},\quad i=1,2,3,
\]
and use least square regression to estimate $\boldsymbol\varphi$, $\phi$ and $\Phi$. From the regression residuals, we estimate the correlations between $\boldsymbol\varepsilon(t)$ and $\widetilde{\boldsymbol\varepsilon}(t)$. Figures \ref{fig: parameter crc start}--\ref{fig: parameter crc end} show the estimates of $\boldsymbol\varphi$, $\phi$ and $\Phi$. 

\subsection{Simulation and Back-Testing}

Section~\ref{sec: model selection} provides a full specification of the three-factor Vasi\v cek CRC model under the risk-neutral and real-world probability measures. Various model quantities of interest in applications can then be calculated by simulation. 

\begin{figure}[p]
\centering
\begin{minipage}[t]{0.42\textwidth}
\includegraphics[width=\textwidth]{./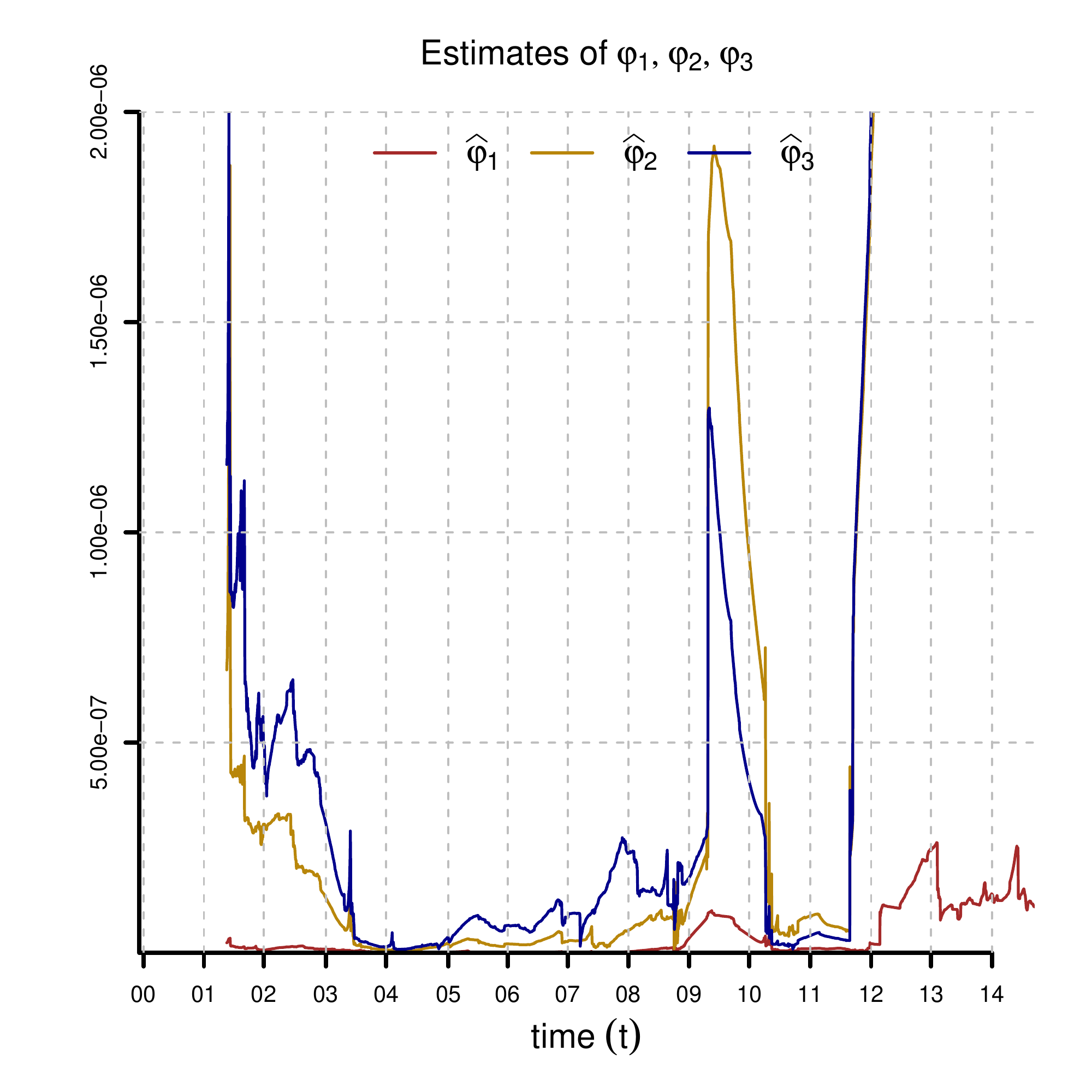}
\end{minipage}
\begin{minipage}[t]{0.42\textwidth}
\includegraphics[width=\textwidth]{./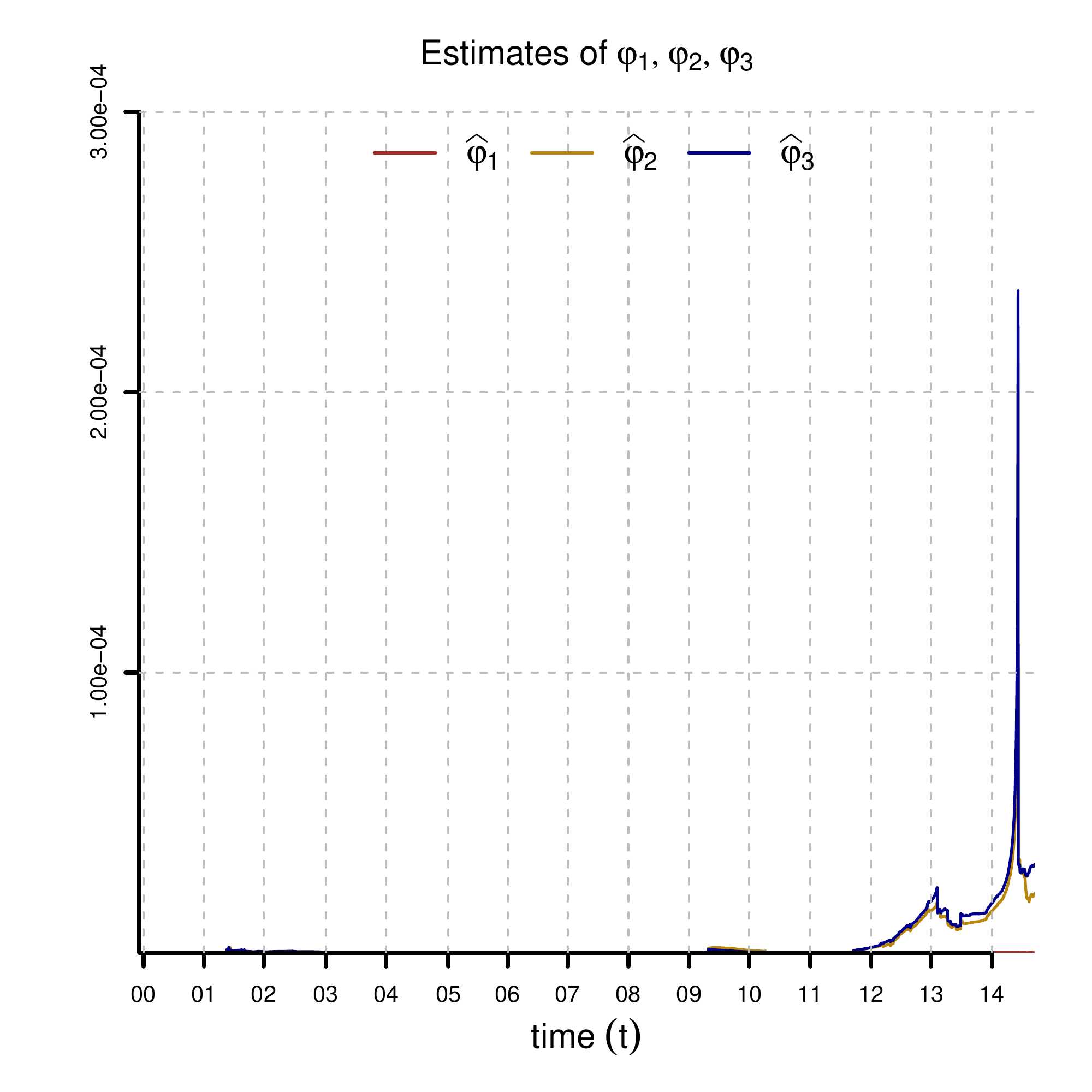}
\end{minipage}
\vspace{-15pt}
\caption{Estimation of $\varphi_{1}$, $\varphi_{2}$ and $\varphi_{3}$ by least square regression (two different scales). We use a time window of 252 observations for the regression.}\label{fig: parameter crc start}\vspace{-5pt}
\end{figure}
\begin{figure}[p]
\centering
\begin{minipage}[t]{0.42\textwidth}
\includegraphics[width=\textwidth]{./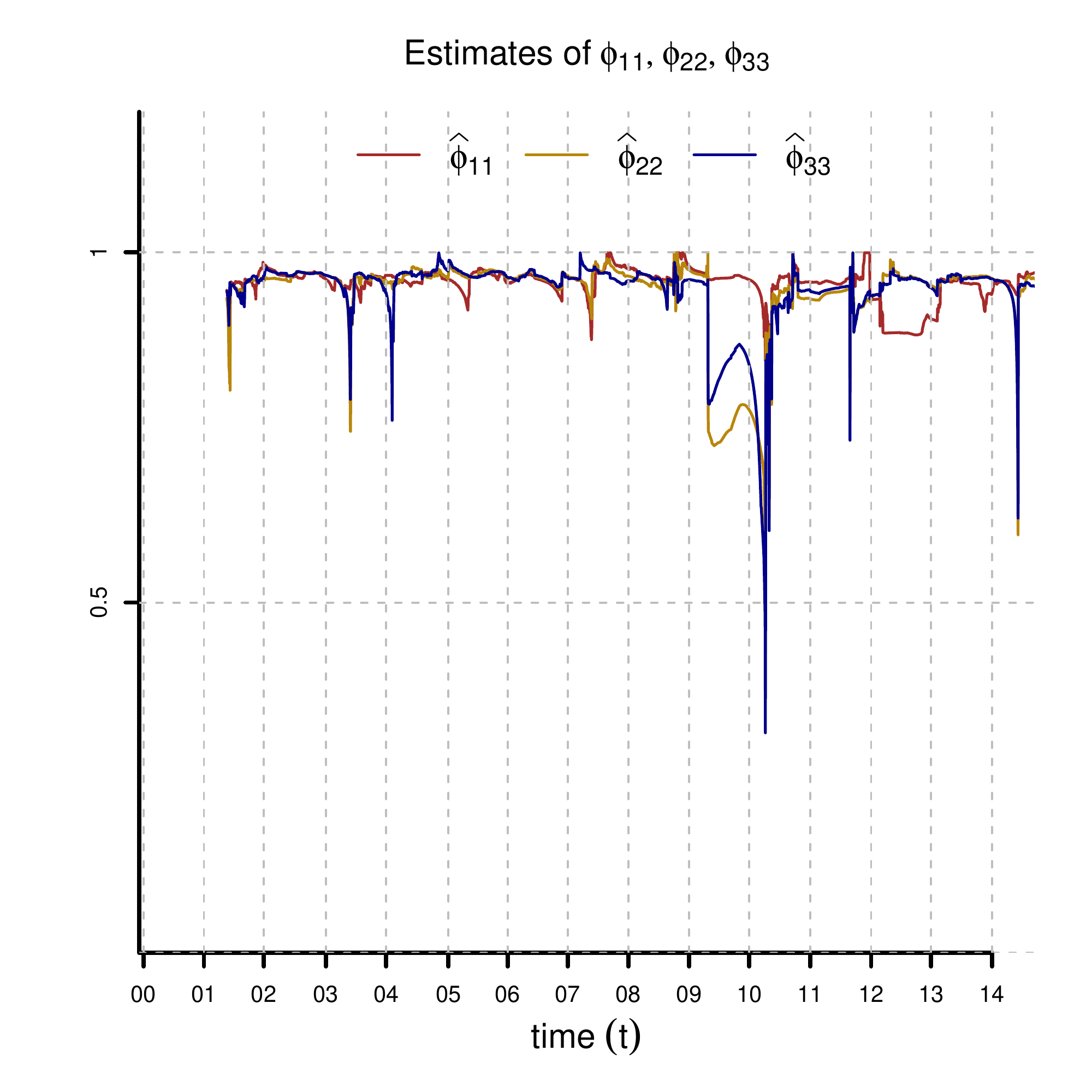}
\end{minipage}
\begin{minipage}[t]{0.42\textwidth}
\includegraphics[width=\textwidth]{./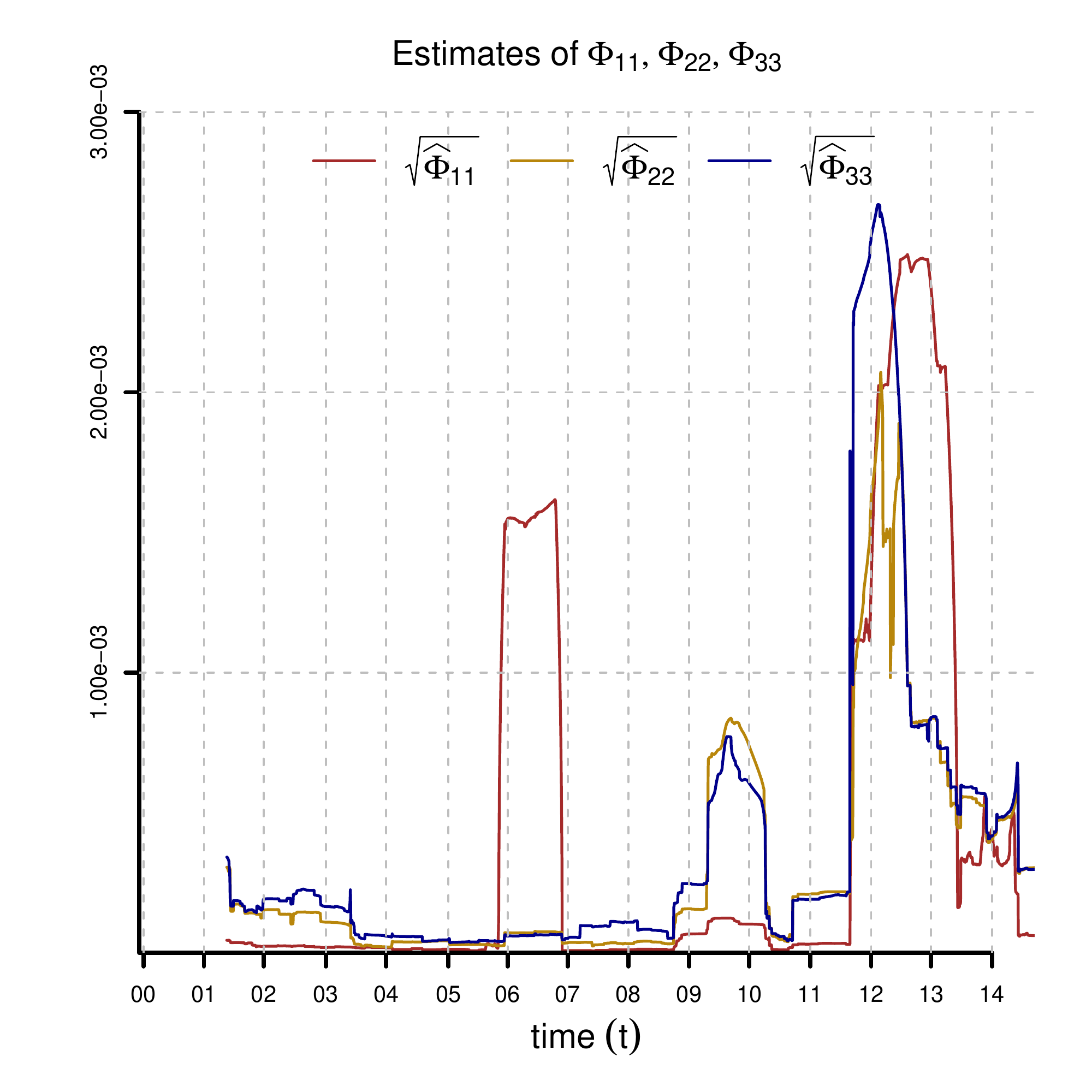}
\end{minipage}
\vspace{-15pt}
\caption{Estimation of $\phi_{11}$, $\phi_{22}$ and $\phi_{33}$ (lhs) and $\Phi_{11}$, $\Phi_{22}$ and $\Phi_{33}$ (rhs) by least square regression. We use a time window of 252 observations for the regression.}\vspace{-5pt}
\end{figure}
\begin{figure}[p]
\centering
\begin{minipage}[t]{0.42\textwidth}
\includegraphics[width=\textwidth]{./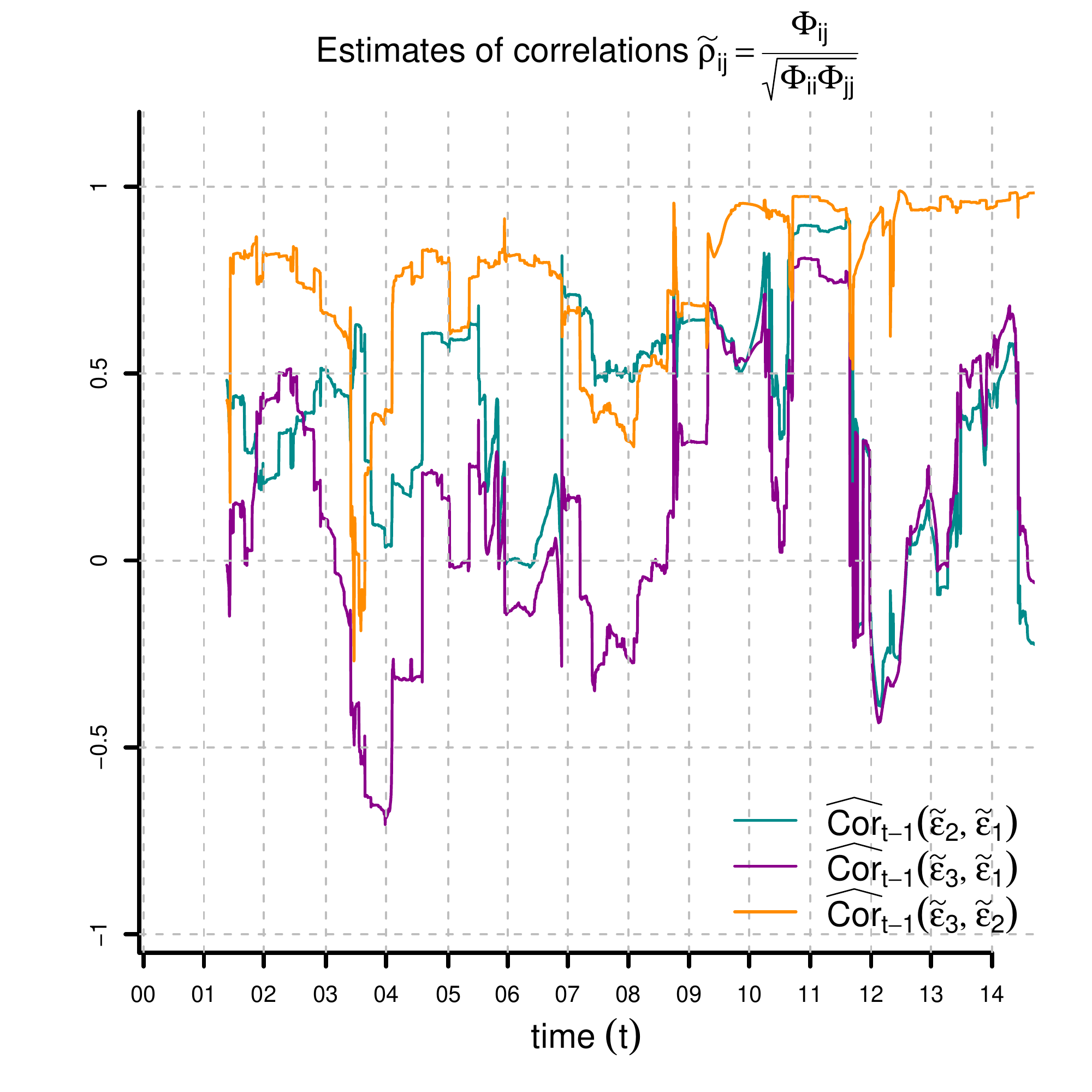}
\end{minipage}
\begin{minipage}[t]{0.42\textwidth}
\includegraphics[width=\textwidth]{./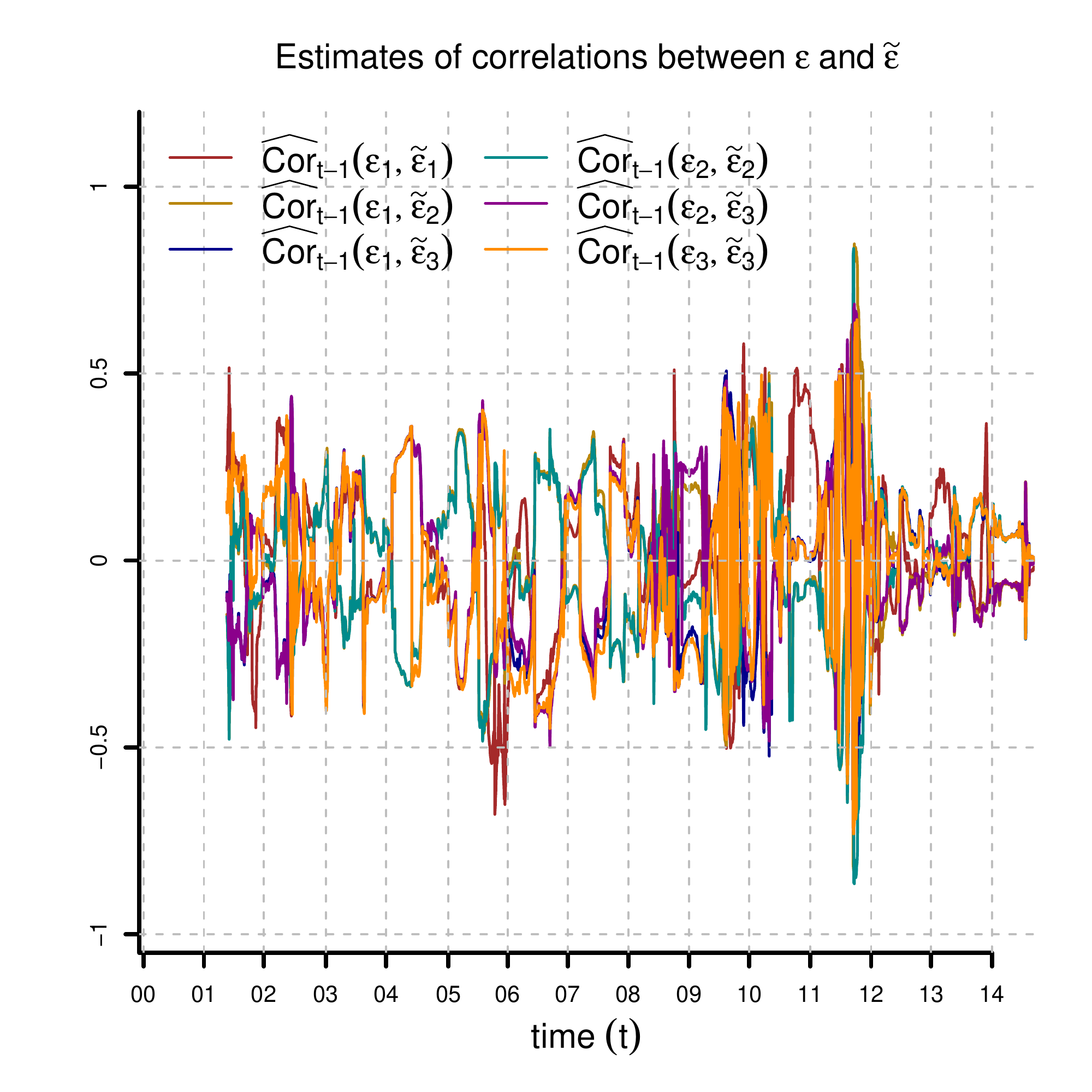}
\end{minipage}
\vspace{-15pt}
\caption{Estimation of correlations $\widetilde{\rho}_{21}$, $\widetilde{\rho}_{31}$ and $\widetilde{\rho}_{32}$~(lhs) and correlations $\mathrm{Cor}\left[\boldsymbol\varepsilon(t),\widetilde{\boldsymbol\varepsilon}(t)\mid\mathcal F(t-1)\right]$ (rhs). We use a time window of 252 observation for the regression. The residuals $\boldsymbol\varepsilon$ are calculated using the parameter estimates of Figure \ref{fig: parameters start}.}\label{fig: parameter crc end}
\end{figure}

\subsubsection{Simulation} The CRC approach has the remarkable property that yield curve increments can be simulated accurately and efficiently using Theorem \ref{theorem HJM view} and Corollary \ref{HJM under the real world measure}. In contrast, spot rate models with stochastic volatility without CRC have serious computational drawbacks. In such models, the calculation of the prevailing yield curve for given state variables requires Monte Carlo simulation. Therefore, the simulation of future yield curves requires nested simulations.

\subsubsection{Back-Testing}\label{subsec:bt} We backtest properties of the monthly returns of a buy and hold portfolio investing equal proportions of wealth in the zero-coupon bonds with times to maturity of 2, 3, 4, 5, 6 and 9 months and 1, 2, 3, 5, 7 and 10 years. We divide the sample into disjoint monthly periods and calculate the monthly return of this portfolio assuming that at the beginning of each period, we invest in the bonds with these times to maturity in equal proportions of wealth. The returns and some summary statistics are shown in Figure \ref{fig: returns}. We observe that the returns are positively skewed, leptokurtic and have heavier tails than the Gaussian distribution. These stylized facts are essential in applications. 

For each monthly period, we select a three-factor Vasi\v cek model and its CRC counterpart with stochastic volatility. Then, we simulate for each period realizations of the returns of the test portfolio. By construction, the Vasi\v cek model generates Gaussian log-returns and is unable to reproduce the stylized facts of the sample; see Tables \ref{tab: stats1} and \ref{tab: stats2} and Figure \ref{fig: stats}. Increasing the number of factors does not help much, because the log-returns remain Gaussian. On the other hand, CRC of the Vasi\v cek model with stochastic volatility provides additional modeling flexibility. In particular, we can see from the statistics in Table \ref{tab: stats2} and the confidence intervals in Figure \ref{fig: stats} that the model matches the return distribution better than the Vasi\v cek model. As explained in Figure \ref{fig: stats}, statistical tests assuming the independence of disjoint monthly periods show that the difference between the Vasi\v cek model and its CRC counterpart is statistically significant. We conclude that the three-factor CRC Vasi\v cek model is a parsimonious and tractable alternative that provides reasonable results.

\begin{table}[H]
\vspace{-10pt}
\centering
\includegraphics[width=\textwidth]{./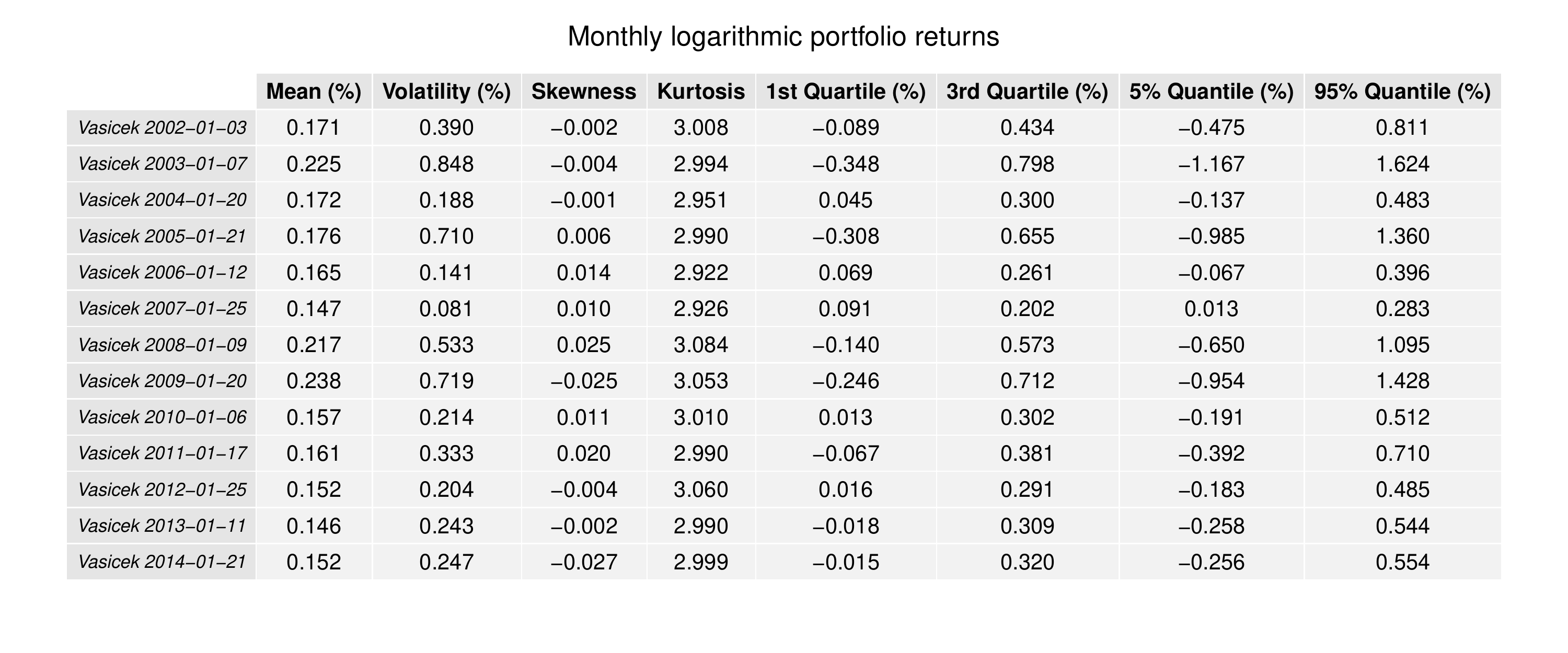}
\vspace{-35pt}
\caption{Statistics computed from simulations of the test portfolio returns for some of the monthly periods in the Vasi\v cek model. For each monthly period, we simulate $10^4$ realizations.}\label{tab: stats1}\vspace{-20pt}
\end{table}
\begin{table}[H]
\centering
\includegraphics[width=\textwidth]{./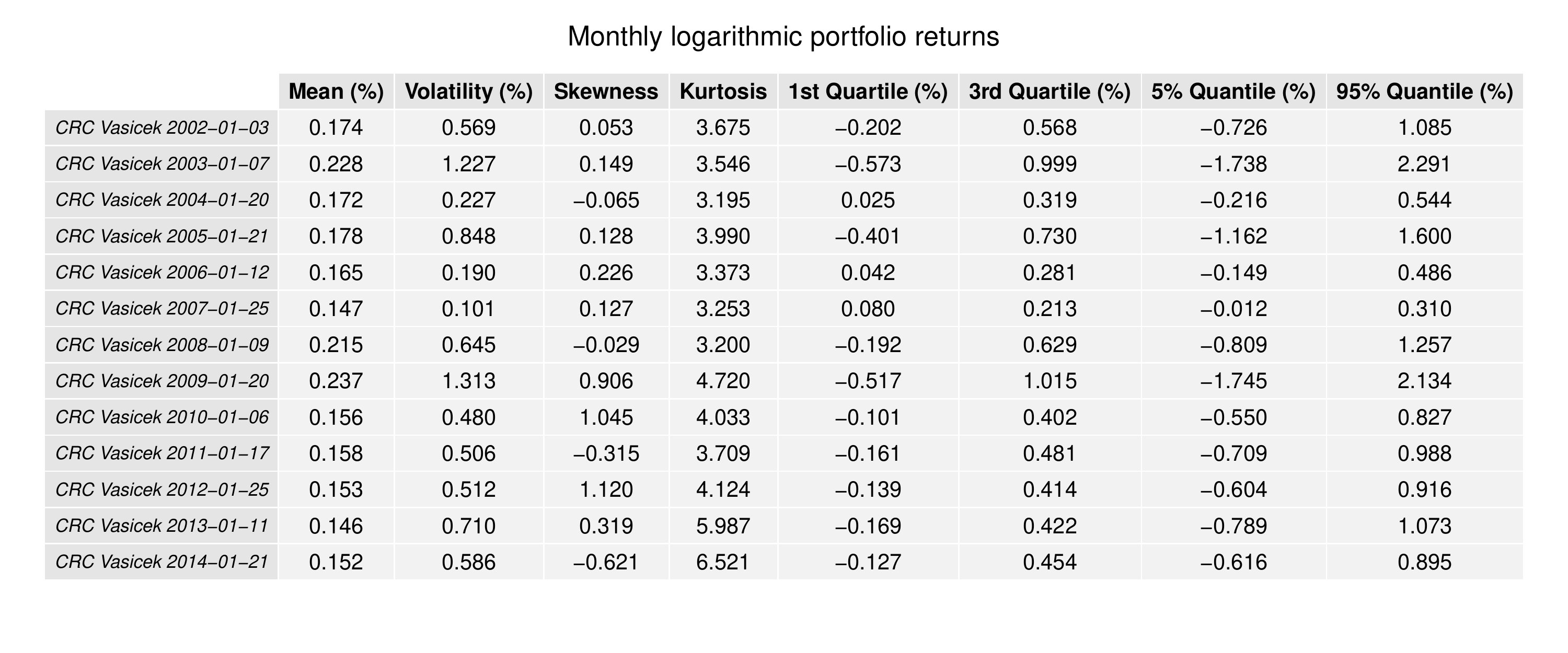}
\vspace{-35pt}
\caption{Statistics computed from the simulations of the test portfolio returns for some of the monthly periods in the consistent re-calibration (CRC) counterpart of the Vasi\v cek model with stochastic volatility. For each monthly period, we simulate $10^4$ realizations.}\label{tab: stats2}
\end{table}
\begin{figure}[H]
\vspace{-10pt}
\centering
\includegraphics[width=0.95\textwidth]{./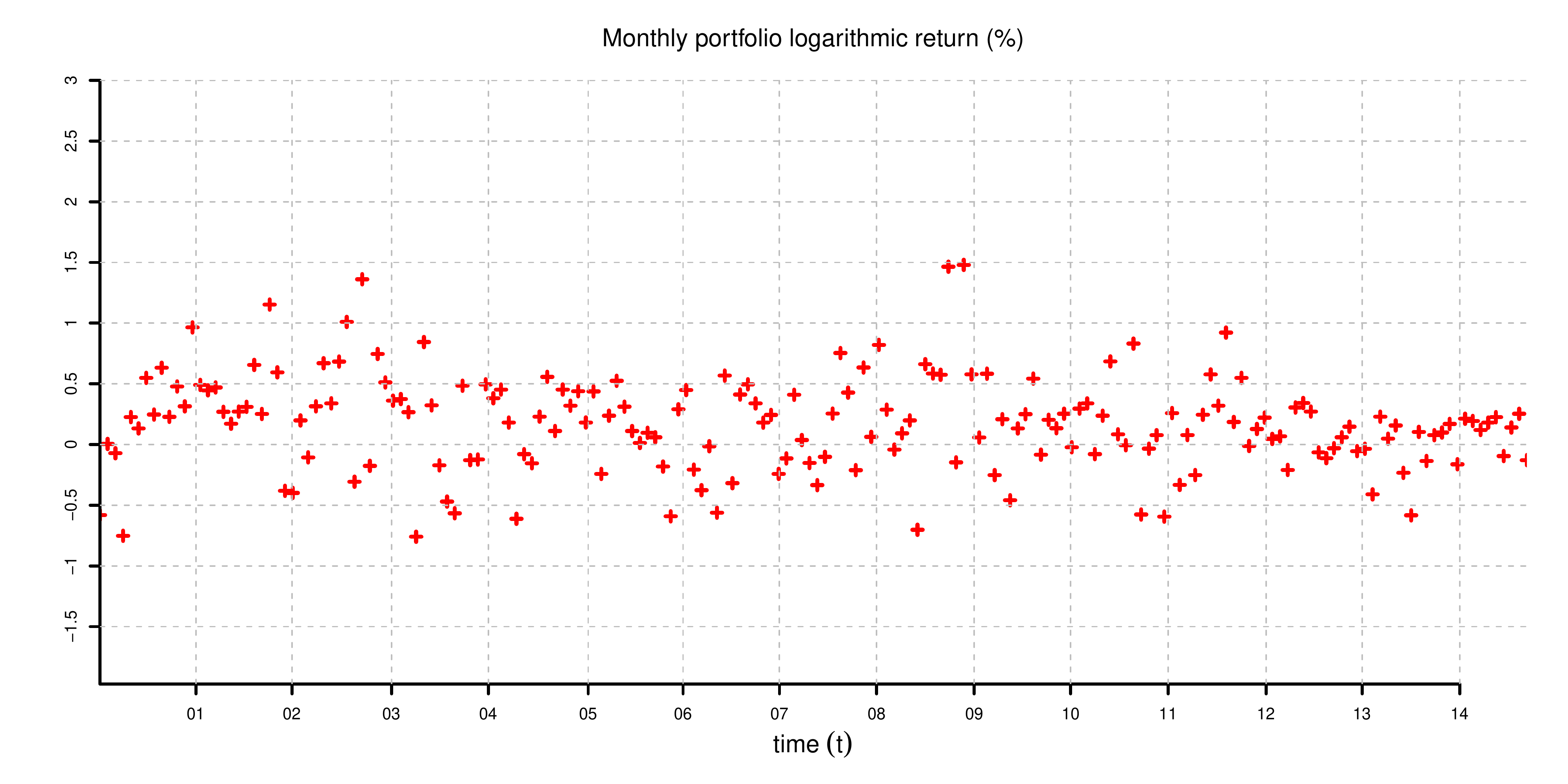}\\
\includegraphics[width=0.95\textwidth]{./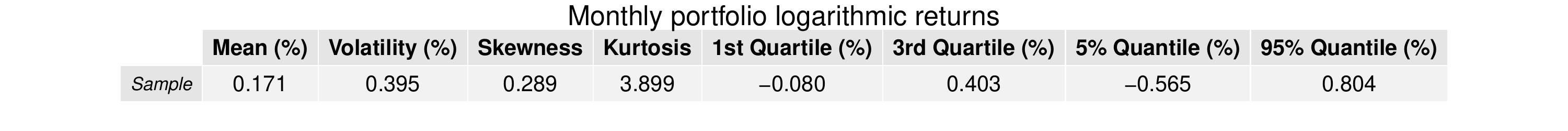}
\vspace{-10pt}
\caption{Logarithmic monthly returns of a buy and hold portfolio investing in equal wealth proportions in the zero-coupon bonds with times to maturity of 2, 3, 4, 5, 6 and 9 months and 1, 2, 3, 5, 7 and 10 years. For each monthly period, we calculate the logarithmic return of this portfolio assuming that at the beginning of each period, we are invested in the bonds with these times to maturity in equal proportions of wealth.}\label{fig: returns}\vspace{-20pt}
\end{figure}
\begin{figure}[H]
\centering
\begin{minipage}[t]{0.42\textwidth}
\includegraphics[width=\textwidth]{./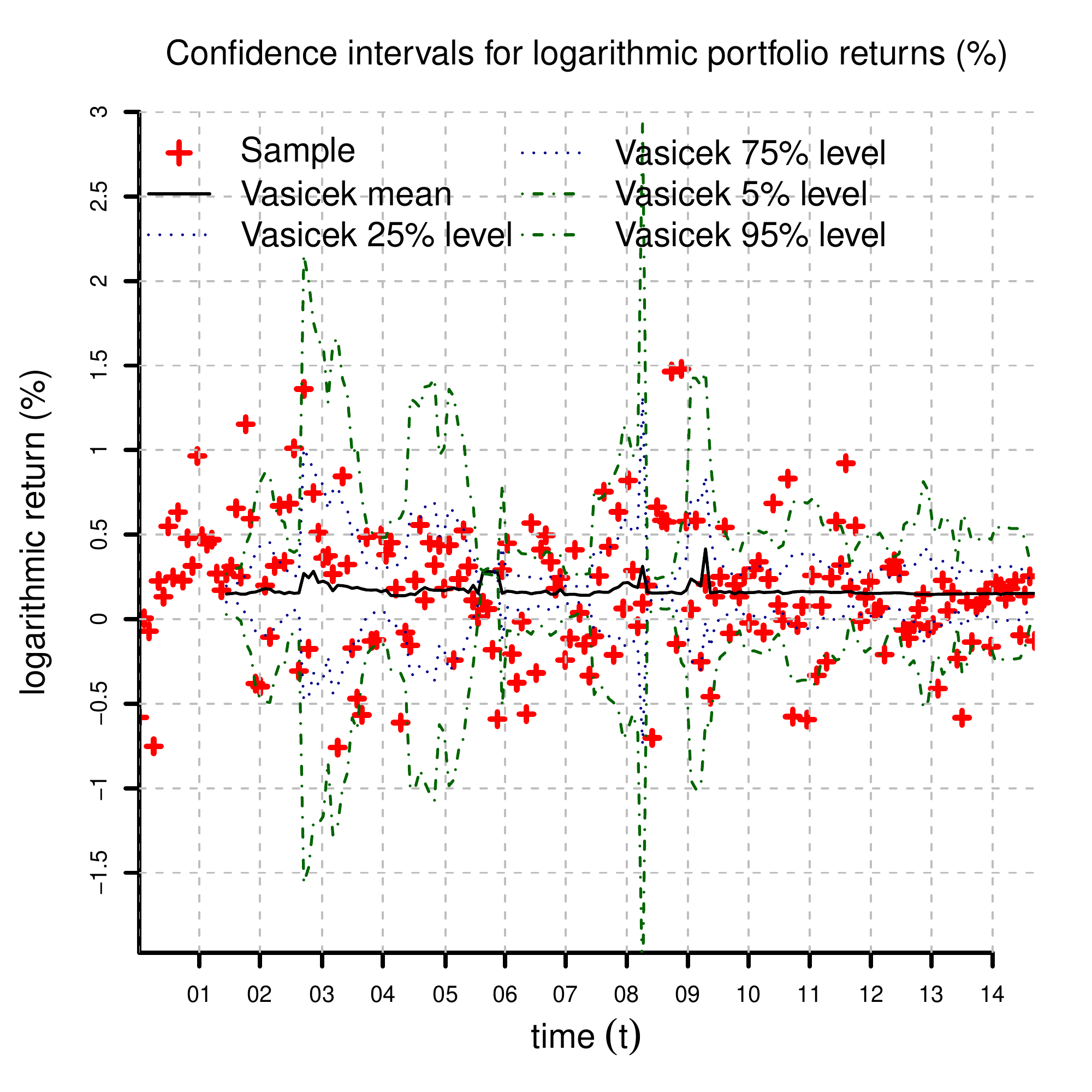}
\end{minipage}
\begin{minipage}[t]{0.42\textwidth}
\includegraphics[width=\textwidth]{./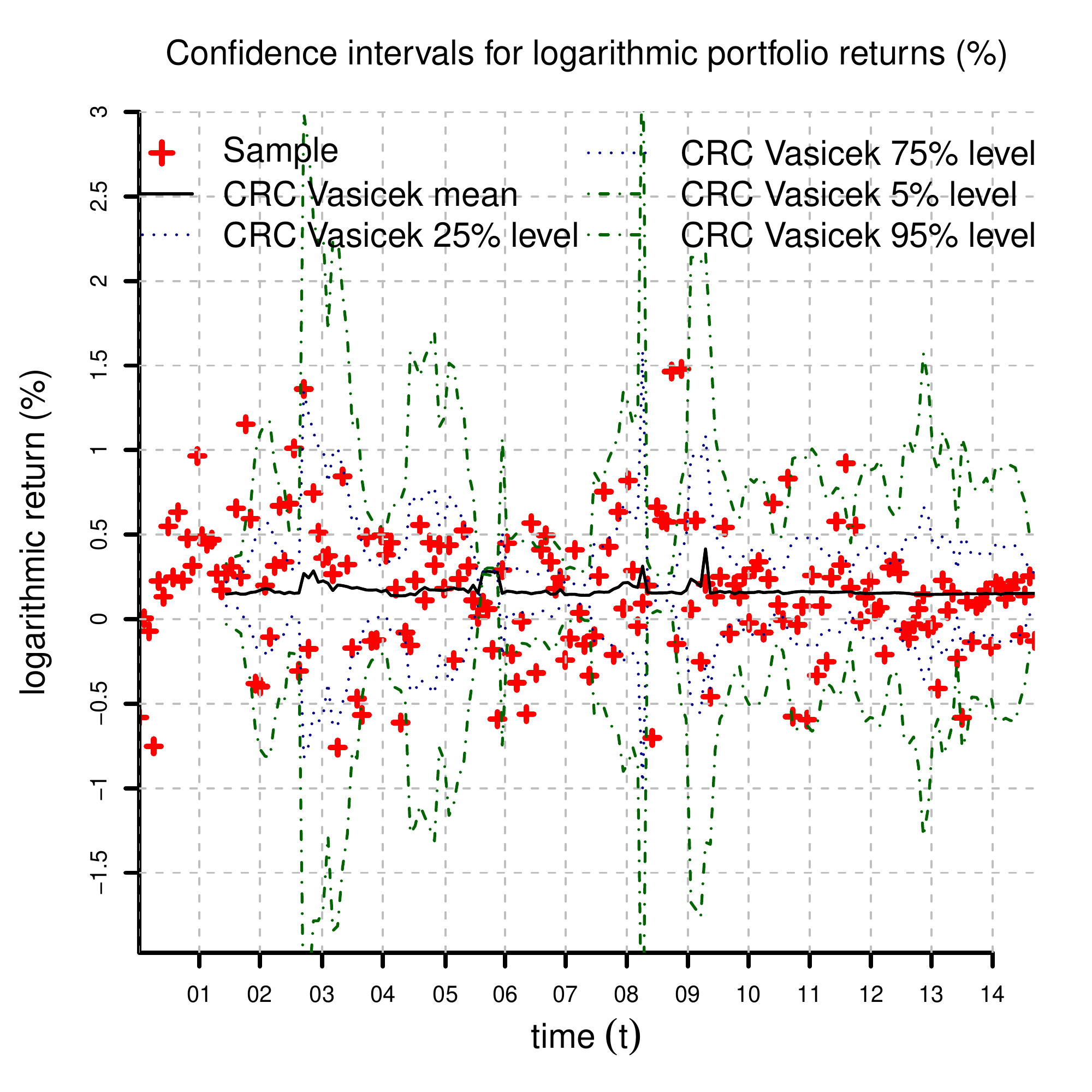}
\end{minipage}
\includegraphics[width=0.9\textwidth]{./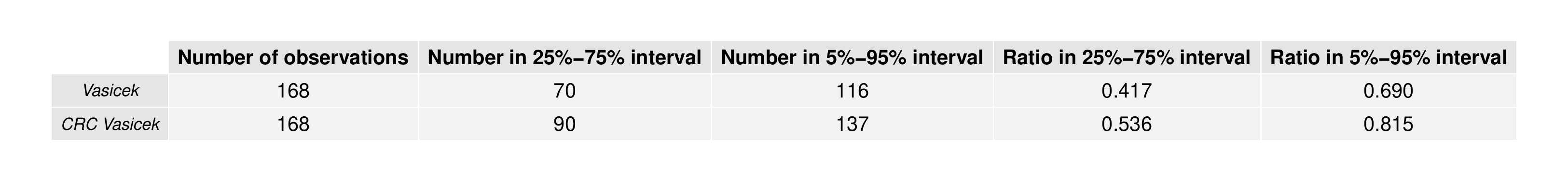}
\vspace{-10pt}
\caption{Confidence intervals computed from $10^4$ simulations of the test portfolio returns in the Vasi\v cek model and its CRC counterpart with stochastic volatility. For each monthly period, we check if the market return lies in the confidence interval. This is more often the case for the CRC than for the standard Vasi\v cek model. A one-sided binomial test assuming the independence of monthly periods shows that the difference is statistically significant ($p=0.0013$ for the $25\%$ and $p=$ 0.00017 for the $5\%$ quantiles). The result remains significant if every second month is discarded to account for dependencies ($p\approx 0.01$). This suggests that the CRC Vasic\v ek model is able to match the return distribution better than its counterpart with constant parameters.}\label{fig: stats}\vspace{-20pt}
\end{figure}

\subsubsection{Regulatory Framework}
The type of analysis that was performed in the previous section is an integral component of the present regulatory framework for risk management. In the Basel framework \cite{BIS}, the capital charge for the trading book is based on quantile risk measures. Under the internal model approach (\cite{BIS}, Section 2.VI.D), a bank calculates quantiles for the distribution of possible 10-day losses based on recent market data under the assumption that the trading book portfolio is held fixed over the time period. The approach relies on accurate modeling of the distribution of portfolio returns over holding periods of multiple days. A similar analysis is required by the Basel (\cite{BIS}, Section 2.VI.D) regulatory framework for model validation and stress testing: model validation is performed by backtesting the historical performance of the model, and stress tests are carried out using the same methodology by calibrating the model to historical periods of significant financial stress.

These tasks can be accomplished using the CRC approach by selecting suitable classes of affine models and parameter processes. The approach is fairly general, since there are few restrictions on the parameter processes. In particular, it allows for stochastic volatility and can be used to create realistic non-Gaussian distributions of multi-period bond returns (see Section~\ref{subsec:bt}). Nevertheless, computing these bond return distributions does not require nested simulations. This is crucial for reasons of efficiency. Moreover, the flexibility in the specification of the parameter processes makes the CRC approach well suited for stress testing, because it allows one to freely select and specify stress~scenarios.

\section{Conclusions}\label{sec: conclusion}

\begin{itemize}[leftmargin=*,labelsep=4mm]
\item {\it Flexibility and tractability.}
Consistent re-calibration of the multifactor Vasi\v cek model provides a tractable extension that allows parameters to follow stochastic processes. The additional flexibility can lead to better fits of yield curve dynamics and return distributions, as we demonstrated in our numerical example. Nevertheless, the model remains tractable. In particular, yield curves can be simulated efficiently using Theorem~\ref{theorem HJM view} and Corollary~\ref{HJM under the real world measure}. This allows one to efficiently calculate model quantities of interest in risk management, forecasting and pricing. 
\item {\it Model selection.} CRC models are selected from the data in accordance with the robust calibration principle of \cite{Harms}. First, historical parameters, market prices of risk and Hull--White extensions are inferred using a combination of volatility estimation, MLE and calibration to the prevailing yield curve via Formulas (\ref{eq: co-var estimate}--\ref{eq: lambda mat}, \ref{eq: re-calibration step 2}). The only choices in this inference procedure are the number of factors of the Vasi\v cek model and the window length $K$. Then, as a second step, the time series of estimated historical parameters are used to select a model for the parameter evolution. This results in a complete specification of the CRC model under the real world and the pricing measure. 
\item {\it Application to modeling of Swiss interest rates.} We fitted a three-factor Vasi\v cek CRC model with stochastic volatility to Swiss interest rate data. The model achieves a reasonably good fit in most time periods. The tractability of CRC allowed us to compute several model quantities by simulation. We looked at the historical performance of a representative buy and hold portfolio of Swiss bonds and concluded that a multifactor Vasi\v cek model is unable to describe the returns of this portfolio accurately. In contrast, the CRC version of the model provides the necessary flexibility for a good fit. 
\end{itemize}

\appendix 
\clearpage
\section*{\noindent Appendix A Proofs}\vspace{6pt} 
\renewcommand{\thesection}{\Alph{section}}
\refstepcounter{section}
\label{sec: proofs}
\renewcommand{\thesection}{}

\begin{proof}[Proof of Theorem \ref{theo: ARn prices}]
We prove Theorem \ref{theo: ARn prices} by induction as in (\cite{Wuethrich} Theorem 3.16) where ZCB prices are derived under the assumption that $\beta$ and $\Sigma$ are diagonal matrices. Note that we have the relation $P(m-1,m)=\exp\left(-\boldsymbol 1^\top\boldsymbol X(m-1)\Delta\right)$, which proves the claim for $t=m-1$. Assume that Theorem~\ref{theo: ARn prices} holds for $t+1\in\{2,\ldots,m-1\}$. We verify that it also holds for $t\in\{1,\ldots,m-2\}$. Under equivalent martingale measure $\p^\ast$, we have using the tower property for conditional expectations and the induction assumption:
\begin{myequation1}
\begin{aligned}
P(t,m)&=\exp\left\{-\boldsymbol 1^\top\boldsymbol X(t)\Delta\right\}\mathbb E^{*}\left[\mathbb E^{*}\left[\exp\left\{-\Delta\sum_{s=t+1}^{m-1}\boldsymbol 1^\top\boldsymbol X(s)\right\}\middle|{\cal F}(t+1)\right]\middle|{\cal F}(t)\right]\\&=\exp\left\{-\boldsymbol 1^\top\boldsymbol X(t)\Delta\right\}\mathbb E^{*}\left[P(t+1,m)\middle|{\cal F}(t)\right]\\&=\exp\left\{-\boldsymbol 1^\top\boldsymbol X(t)\Delta\right\}\mathbb E^{*}\left[\exp\left\{A(t+1,m)-\boldsymbol B(t+1,m)^\top\boldsymbol X(t+1)\right\}\middle|{\cal F}(t)\right]\\&=\exp\left\{-\boldsymbol 1^\top\boldsymbol X(t)\Delta+A(t+1,m)-\boldsymbol B(t+1,m)^\top\left(\boldsymbol b+\beta\boldsymbol X(t)\right)+\frac{1}{2}\boldsymbol B(t+1,m)^\top\Sigma\boldsymbol B(t+1,m)\right\}\\&=
\exp\left\{A(t+1,m)-\boldsymbol B(t+1,m)^\top\boldsymbol b+\frac{1}{2}\boldsymbol B(t+1,m)^\top\Sigma\boldsymbol B(t+1,m)-\left(\boldsymbol B(t+1,m)^\top\beta+\boldsymbol 1^\top\Delta\right)\boldsymbol X(t)\right\}.\nonumber
\end{aligned}
\end{myequation1}

This proves the following recursive formula for $m-1>t\geq0$:
\[
\begin{aligned}
A(t,m)&=A(t+1,m)-\boldsymbol B(t+1,m)^\top\boldsymbol b+\frac{1}{2}\boldsymbol B(t+1,m)^\top\Sigma\boldsymbol B(t+1,m),\\
\boldsymbol B(t,m)&=\beta^\top\boldsymbol B(t+1,m)+\boldsymbol 1\Delta.
\end{aligned}
\]

Finally, note that the recursive formula for $\boldsymbol B(\cdot,\cdot)$ implies:
\[
\boldsymbol B(t,m)=\sum_{s=0}^{m-t-1}\left(\beta^\top\right)^s\boldsymbol 1\Delta=\left(\mathds{1}-\beta^\top\right)^{-1}\left(\mathds{1}-\left(\beta^\top\right)^{m-t}\right)\boldsymbol 1\Delta.
\]

This concludes the proof.
\end{proof}

\begin{proof}[Proof of Theorem \ref{theo: ARn+ prices}.]
 The proof goes by induction as the proof of Theorem \ref{theo: ARn prices}.	
\end{proof}

\begin{proof}[Proof of Theorem \ref{theo: calibration}]
First, observe that the condition $\boldsymbol{y}^{(k)}(k)=\boldsymbol y$ imposes conditions only on the values $\theta(1),\ldots,\theta(M-1)$. Secondly, note that the vector $\boldsymbol\theta$, such that the condition is satisfied, can be calculated recursively in the following way.
\begin{enumerate}[leftmargin=*,labelsep=3mm]
\item \emph{First component $\theta_1$.} We have $A^{(k)}(k+1,k+2)=0$, $\boldsymbol B(k+1,k+2)=\boldsymbol 1\Delta$ and:
\begin{equation*}
A^{(k)}(k,k+2)=-\boldsymbol 1^\top\boldsymbol b\Delta-\theta_1\Delta+\frac{1}{2}\boldsymbol 1^\top\Sigma\boldsymbol 1\Delta^2,
\end{equation*}
see Theorem \ref{theo: ARn+ prices}.
Solving the last equation for $\theta_1$, we have:
\begin{equation*}
\theta_1
=\frac{1}{2}\boldsymbol 1^\top\Sigma\boldsymbol 1\Delta-\boldsymbol 1^\top\boldsymbol b-A^{(k)}(k,k+2)\Delta^{-1}.
\end{equation*}
From \eqref{eq: ARn+ yields} and the equation for $\boldsymbol B$ in Theorem~\ref{theo: ARn prices}, we obtain:
\begin{equation*}
A^{(k)}(k,k+2)=\boldsymbol 1^\top\left(\mathds{1}-\beta^2\right)\left(\mathds{1}-\beta\right)^{-1}\boldsymbol x\Delta-2 y_2\Delta.
\end{equation*}
This is equivalent to:
\begin{equation}\label{eq: first component}
\theta_1=\frac{1}{2}\boldsymbol 1^\top\Sigma\boldsymbol 1\Delta-\boldsymbol 1^\top\boldsymbol b-\boldsymbol 1^\top\left(\mathds{1}-\beta^2\right)\left(\mathds{1}-\beta\right)^{-1}\boldsymbol x+2y_2.
\end{equation}
\item \emph{Recursion $i\rightarrow i+1$.} Assume
we have determined $\theta_1,\ldots,\theta_i$ for $i=1,\ldots,M-2$. We want to determine $\theta_{i+1}$. We have $A^{(k)}(k+i+1,k+i+2)=0$, and iteration of the recursive formula for $A^{(k)}$ in Theorem~\ref{theo: ARn+ prices} implies:
\[
A^{(k)}(k,k+i+2)=-\sum_{s=k+1}^{k+i+1}\boldsymbol B(s,k+i+2)^\top\left(\boldsymbol b +\theta(s-k)\boldsymbol e_1\right)+\frac{1}{2}\sum_{s=k+1}^{k+i+1}\boldsymbol B(s,k+i+2)^\top\Sigma\boldsymbol B(s,k+i+2).
\]
Solving the last equation for $\theta_{i+1}$ and using $\boldsymbol B(k+i+1,k+i+2)=\boldsymbol 1\Delta$, we have:
\[
\begin{aligned}
\theta_{i+1}=&-\frac{1}{\Delta}A^{(k)}(k,k+i+2)-\frac{1}{\Delta}\sum_{s=k+1}^{k+i}\boldsymbol B(s,k+i+2)^\top\left(\boldsymbol b +\theta(s-k)\boldsymbol e_1\right)-\boldsymbol 1^\top\boldsymbol b\\&\quad+\frac{1}{2\Delta}\sum_{s=k+1}^{k+i+1}\boldsymbol B(s,k+i+2)^\top\Sigma\boldsymbol B(s,k+i+2).
\end{aligned}
\]
From \eqref{eq: ARn+ yields} and the equation for $\boldsymbol B$ in Theorem~\ref{theo: ARn prices}, we obtain:
\[
A^{(k)}(k,k+i+2)=\boldsymbol 1^\top\left(\mathds{1}-\beta^{i+2}\right)\left(\mathds{1}-\beta\right)^{-1}\boldsymbol x\Delta-(i+2)y_{i+2}\Delta.
\]
This is equivalent to:
\begin{equation}\label{eq: recursion}
\begin{aligned}
\theta_{i+1}&=(i+2)y_{i+2}-\boldsymbol 1^\top\left(\mathds{1}-\beta^{i+2}\right)\left(\mathds{1}-\beta\right)^{-1}\boldsymbol x-\frac{1}{\Delta}\sum_{s=k+1}^{k+i}\boldsymbol B(s,k+i+2)^\top\left(\boldsymbol b+\theta_{s-k}\boldsymbol e_1\right)\\&\quad-\boldsymbol 1^\top\boldsymbol b+\frac{1}{2\Delta}\sum_{s=k+1}^{k+i+1}\boldsymbol B(s,k+i+2)^\top\Sigma\boldsymbol B(s,k+i+2)\\&=(i+2)y_{i+2}-\boldsymbol 1^\top\left(\mathds{1}-\beta^{i+2}\right)\left(\mathds{1}-\beta\right)^{-1}\boldsymbol x-\frac{1}{\Delta}\sum_{s=k+1}^{k+i+1}\boldsymbol B(s,k+i+2)^\top\boldsymbol b\\&\quad-\frac{1}{\Delta}\sum_{s=k+1}^{k+i}B_1(s,k+i+2)\theta_{s-k}+\frac{1}{2\Delta}\sum_{s=k+1}^{k+i+1}\boldsymbol B(s,k+i+2)^\top\Sigma\boldsymbol B(s,k+i+2).
\end{aligned}
\end{equation}
\end{enumerate}

This recursion allows one to determine the components of $\boldsymbol\theta$. Note that Equation \eqref{eq: recursion} can be written~as: 
\[
\left({\cal C}(\beta)\boldsymbol\theta\right)_{i+1}=z_{i+1}\left(\boldsymbol b,\beta,\Sigma, \boldsymbol x,\boldsymbol y\right),\quad i=1,\ldots,M-2.
\] 

Observe that the lower triangular matrix ${\cal C}(\beta)$ is invertible since $\det{\cal C}(\beta)=\Delta^{M-1}>0$. Hence, Equations \eqref{eq: first component} and \eqref{eq: recursion} prove \eqref{eq: calibration matrix}.
\end{proof}

\begin{proof}[Proof of Theorem \ref{theorem HJM view}]
We add and subtract $-A^{(k)}(k,m)+\boldsymbol B^{(k)}(k,m)^\top\boldsymbol{\cal X}(k)$ to the right hand side of Equation \eqref{start HJM} and obtain:
\begin{myequation2}\label{eq: identity1}
\begin{aligned}
{\cal Y}(k+1,m)\left(m-(k+1)\right)\Delta&=A^{(k)}(k,m)-A^{(k)}(k+1,m)-A^{(k)}(k,m)\\&\quad+\boldsymbol B^{(k)}(k,m)^\top\boldsymbol{\cal X}(k)-\boldsymbol B^{(k)}(k,m)^\top\boldsymbol{\cal X}(k)\\&\quad+\boldsymbol B^{(k)}(k+1,m)^\top \left(\boldsymbol b(k)+\theta^{(k)}(1)\boldsymbol e_1+\beta(k)\boldsymbol {\cal X}(k)+\Sigma(k)^{\frac{1}{2}}\boldsymbol\varepsilon^\ast(k+1)
\right).
\end{aligned}
\end{myequation2}

We have the following two identities from Section~\ref{subsubsec crc step 2}:
\begin{myequation1}
\begin{aligned}
-A^{(k)}(k,m)+\boldsymbol B^{(k)}(k,m)^\top\boldsymbol{\cal X}(k)&={\cal Y}(k,m)(m-k)\Delta,\\
A^{(k)}(k,m) -A^{(k)}(k+1,m)&=-\boldsymbol B^{(k)}(k+1,m)^\top\left(\boldsymbol b(k)+\theta^{(k)}(1)\boldsymbol e_1\right)+\frac{1}{2}\boldsymbol B^{(k)}(k+1,m)^\top\Sigma(k)\boldsymbol B^{(k)}(k+1,m). 
\end{aligned}
\end{myequation1}

Therefore, the right hand side of \eqref{eq: identity1} is rewritten as:
\begin{myequation1}
\begin{aligned}
{\cal Y}(k+1,m)
{(m-(k+1))\Delta}
&={\cal Y}(k,m){(m-k)\Delta}+\left(\boldsymbol B^{(k)}(k+1,m)^\top\beta(k)-\boldsymbol B^{(k)}(k,m)^\top\right)\boldsymbol{\cal X}(k)\\&\quad+\frac{1}{2}\boldsymbol B^{(k)}(k+1,m)^\top\Sigma(k)\boldsymbol B^{(k)}(k+1,m)+\boldsymbol B^{(k)}(k+1,m)^\top\Sigma(k)^{\frac{1}{2}}\boldsymbol\varepsilon^\ast(k+1).
\end{aligned}
\end{myequation1}

Observe that:
\[
\begin{aligned}
\boldsymbol B^{(k)}(k+1,m)^\top\beta(k)=\left(\sum_{s=0}^{m-k-2}\left(\beta^\top(k)\right)^s\boldsymbol 1\right)^\top\beta(k)\Delta=\boldsymbol 1^\top\sum_{s=1}^{m-k-1}\beta(k)^{s}\Delta=\boldsymbol B^{(k)}(k,m)^\top-\boldsymbol 1^\top\Delta,
\end{aligned}
\]
and that $Y(k,k+1)=\boldsymbol 1^\top\boldsymbol {\cal X}(k)$. This proves the claim.	
\end{proof}

{\small 
}

\end{document}